\documentclass{article}
\usepackage{graphicx}
\usepackage{amsfonts}
\usepackage{amsmath}
\usepackage{amssymb}
\usepackage{url}
\usepackage{fancyhdr}
\usepackage{indentfirst}
\usepackage[misc]{ifsym}
\usepackage{enumerate}
\usepackage{tikz}
\usetikzlibrary {patterns,patterns.meta}
\usepackage{slashed}
\usepackage{dsfont}
\usepackage[colorlinks=true,citecolor=blue]{hyperref}
\usepackage{amsthm}
\usepackage{color}
\definecolor{r}{rgb}{1,0,0}
\usepackage{natbib}
\usepackage{comment}
\usepackage{capt-of}
\def\d{\mathrm{d}}
\def\laweq{\buildrel \mathrm{d} \over =}

\DeclareMathOperator*{\argmin}{arg\,min}

\newcommand{\ES}{\mathrm{ES}}

\newcommand{\VaR}{\mathrm{VaR}}

\newcommand{\E}{\mathbb{E}}

\newcommand{\R}{\mathbb{R}}
\newcommand{\N}{\mathbb{N}}
\newcommand{\p}{\mathbb{P}}

\newcommand{\id}{\mathds{1}}

\renewcommand{\(}{\left(}
\renewcommand{\)}{\right)}
\renewcommand{\ge}{\geqslant}
\renewcommand{\le}{\leqslant}
\renewcommand{\geq}{\geqslant}

\renewcommand{\epsilon}{\varepsilon}
\newcommand{\esssup}{\mathrm{ess\mbox{-}sup}}
\newcommand{\essinf}{\mathrm{ess\mbox{-}inf}}

\theoremstyle{plain}
\newtheorem{theorem}{Theorem}
\newtheorem{corollary}{Corollary}
\newtheorem{lemma}{Lemma}
\newtheorem{proposition}{Proposition}
\theoremstyle{definition}
\newtheorem{definition}{Definition}
\newtheorem{example}{Example}

\theoremstyle{remark}
\newtheorem{remark}{Remark}

\newcommand{\cet}{\begin{center}}
\newcommand{\ecet}{\end{center}}

\usepackage{setspace}

\begin{document}

\title{Universal Value‐at‐Risk superadditivity}
%\date{}
\author{
Yuyu Chen\thanks{Department of Economics, University of Melbourne,  Australia. \Letter~{\scriptsize\url{yuyu.chen@unimelb.edu.au}}}
\and Liyuan Lin\thanks{Department of Econometrics and Business Statistics,
 Monash University, Australia.
\Letter~{\scriptsize\url{Liyuan.Lin@monash.edu}}}
  \and Ruodu Wang\thanks{Department of Statistics and Actuarial Science, University of Waterloo, Canada.
  \Letter~{\scriptsize\url{wang@uwaterloo.ca}}}}
\maketitle
 
\begin{abstract}
Value-at-Risk (VaR) is a standard regulatory risk measure, and its failure of subadditivity is well known. Much less appreciated is that for sufficiently heavy-tailed losses, VaR can be superadditive uniformly across all probability levels, a phenomenon strictly stronger than the asymptotic superadditivity studied in extreme value theory. We call this property universal VaR superadditivity (UVS). We study UVS and its stronger weighted version (WUVS)  as properties of random vectors rather than of marginal distributions. This perspective unifies and extends a recent line of work on iid infinite-mean models. UVS, except for trivial cases, imposes an infinite-mean structure. We establish preservation properties of UVS and WUVS under increasing and convex transformations, weak convergence, and certain distributional mixtures, and use these tools to prove UVS and WUVS for non-identically distributed risks in several large families including completely subscalable, super-Cauchy, and inverted subadditive risks, extending results previously available only in the iid case.  In many results, we also establish strict versions of UVS and WUVS, which lead to stronger decision-theoretic implications. As a consequence, for any portfolio satisfying WUVS, every distortion risk measure is superadditive, so an optimal allocation concentrates on a single asset, and diversification is never beneficial.

% Value-at-Risk (VaR) is a regulatory risk measure in finance and insurance, yet it is well known to have theoretical drawbacks, most notably the lack of subadditivity.
% For risk models without finite mean, VaR may even exhibit superadditivity across all probability levels. We refer to this phenomenon as universal VaR superadditivity (UVS). The paper offers a comprehensive study  on UVS, with several general properties. UVS has strong implications in decision making, reflecting the possible negative effects of diversification  for heavy-tailed risks. While several technical results in the literature yield UVS for identically distributed risks, this paper studies UVS for general portfolio vectors. We establish UVS for three large classes of risks: completely subscalable risks, super-Cauchy risks, and a subclass of inverted subadditive risks, extending existing results. In addition, we study the relationships among several classes of risks that satisfy UVS.

\textbf{Keywords}: Quantiles; infinite mean; stochastic order; diversification; dependence
\end{abstract}

\section{Introduction}

Value-at-Risk (VaR) is one of the most widely used risk measures in finance and insurance, partially due to its intuitive interpretation. For a probability level $p\in(0,1)$, often chosen as  $0.99$ or $0.995$, VaR is defined as the (left) quantile of the loss distribution, representing the smallest threshold such that losses exceed this value with probability at most $1-p$. 
Over the past three decades, it has been a standard tool in several regulatory frameworks, such as the Basel Accords for banking regulation and Solvency II for insurance regulation, alongside its sister risk measure, the Expected Shortfall (ES). 
Although  VaR has several limitations compared to ES, including the lack of coherence \citep{ADEH99} and insensitivity to tail risk \citep{ELW18}, it has many intrinsic advantages, and remains widely used in financial practice. We refer to \cite{embrechts2014academic} for a discussion on the regulatory use of VaR and ES.

The key theoretical property that underpins the discussions of VaR is its possible superadditivity, showing a lack of consistency with respect to diversification. 
The VaR subadditivity, or its counterpart, superadditivity, has been extensively studied in Extreme Value Theory \citep{EKM97} in the asymptotic sense (i.e., the asymptotic behavior of VaR as its probability level goes to $1$). For iid regularly varying risks, VaR is asymptotically subadditive (resp.~superadditive) if the tail parameter of the risks is larger (resp.~less) than 1; the results for dependent regularly varying risks can be found in \cite{ELW09} and \cite{mainik2010optimal}. Note that a regularly varying risk is integrable if its tail parameter is strictly larger than 1. Thus, the integrability of risks is crucial to the asymptotic subadditivity/superadditivity of VaR.
The asymptotic behaviors of VaR have also been extended to a range of probability levels for a class of regularly varying risks. For iid risks following the symmetric stable distribution and for probability levels larger than 0.5, \cite{I09} showed that VaR is subadditive/superadditive if the tail parameter of the stable risks is larger/smaller than 1.

Our paper offers 
a comprehensive study of \emph{universal VaR superadditivity} (UVS),
which is the property of a random vector such that superadditivity of VaR holds for all probability thresholds in $(0,1)$. 
This property was recently studied by \cite{M25} for nonnegative random variables.
We also introduce the related stronger notion of weighted universal VaR superadditivity (WUVS), as well as strict UVS and strict WUVS. These notions are formally defined in Section \ref{sec:setup}.
The concepts of UVS and WUVS involve both the marginal distributions and the dependence structure. They are intimately connected to 
an important phenomenon actively studied in the recent literature. That is, 
% when the marginal distributions are identical, 
UVS can be reformulated via a first-order stochastic dominance relation (see Lemma \ref{lem:st}). With this reformulation, 
  UVS of iid Pareto risks with infinite mean was established by \cite{CEW25}, and the result has been extended to different distributional assumptions by \cite{ALO25}, \cite{CS25}, \cite{Muller25}, \cite{M25}, and \cite{V25}. 
The studies above, with the exception of \cite{M25}, focus on identical marginal distributions or a location-scale family, and therefore, the results are often obtained for a single distribution instead of a random vector. Except for \cite{CEW25}, the above studies do not address strict UVS and WUVS, which are also investigated in this paper. 

General properties of UVS and WUVS are studied in Section \ref{sec:VaR}. 
A necessary condition for UVS, outside of trivial cases, is the existence of infinite mean. Applications of infinite-mean models in insurance claims and catastrophic losses can be found in, e.g., \cite{CW25}. Indeed, as shown by \cite{IK25}, UVS holds for integrable risks only if the risks are perfectly positively dependent, where VaR is additive. We generalize this observation and a related result of \cite{M25} in Proposition \ref{prop:como} to one-sided integrable random variables. 
 As our first main result, Theorem \ref{th:convex} shows that (strict) UVS and WUVS are preserved under  convex transformation in different senses.  Theorem \ref{pro:SAw} further shows that strict UVS and WUVS can be obtained by applying strictly convex transformations to nonstrict UVS and WUVS random vectors with positive density.
Closure properties of UVS and WUVS are obtained in Theorem \ref{th:closure},  showing that UVS and WUVS are closed under convergence in distribution and a particular form of distributional mixture.  Unlike the superadditivity of VaR over a specific range of $(0,1)$,  UVS provides very strong implications in decision making.
In particular,  UVS for a loss portfolio implies that any monotone and comonotonic-additive risk measure is also superadditive for the portfolio (Proposition \ref{prop:sup}). WUVS further implies that diversification does not have benefits for portfolio selection (Proposition \ref{prop:rm}). 

%Among the aforementioned studies on UVS, \cite{M25} is the only one that considered non-identically distributed risks; risks in the other papers are identically distributed and vary up to location-scale transforms. 
We proceed to establish UVS and WUVS for three large classes of independent and non-identically distributed risks in Section \ref{sec:ind}, generalizing existing results for iid risks. The first is the completely subscalable risk introduced by \cite{V25}, who focused on the comparison of  convex combinations of iid risks and their distribution mixtures. Theorem \ref{thm:cs} shows
that independent completely subscalable risks satisfy WUVS even when their marginal distributions
are not identical. Note that in the case of iid risks, the two maximal classes studied for WUVS to date are the inverted subadditive class of \cite{ALO25} and the super-Cauchy class of \cite{Muller25}; the  inverted subadditive class is larger than the class of completely subscalable risks. However, WUVS cannot be extended to arbitrary independent and non-identically distributed inverted subadditive risks, as shown by Example \ref{ex:counter-D}. Nevertheless, WUVS holds for independent super-Cauchy risks (Proposition \ref{cor:SC}) and for some independent inverted subadditive risks (Proposition \ref{cor:IS}).
UVS and WUVS are also shown for a class of negatively dependent risks introduced by \cite{CS25} (Proposition \ref{prop:D}). Section \ref{sec:relation} presents a comparison among the classes of risks that satisfy UVS and WUVS. Section \ref{sec:con} concludes the paper.

\subsection*{Notation} 
Let $[n]=\{1,\dots,n\}$ for $n\in\N$. For $x\in\R$,  define $(x)_+=\max(x,0)$ and $(x)_-=\max(-x,0)$. A function $f$ on $(0,\infty)$ is \emph{subadditive} (resp.~\emph{superadditive}) if $f(x+y)\le f(x)+f(y)$ (resp.~$f(x+y)\ge f(x)+f(y)$) for any $x,y>0$.
Let $f^{-1}$ and $f^{-1+}$   be the left and right inverse functions of a function $f: \R\to \R$, defined as 
\begin{align*}
    f^{-1}(y)&=\inf\{x \in \R: f(x)\ge y\} 
    \mbox{~~~and~~~} f^{-1+}(y)=\sup\{x\in \R: f(x)\le y\},
\end{align*} 
for $y \in \R$ with $\inf \varnothing=\infty$   and  $\sup \varnothing=-\infty.$

All random variables are defined on an atomless probability space $(\Omega,\mathcal F,\p)$, and $\mathcal L_n^0$ is the set of all $n$-dimensional random vectors on this space.  
If a random variable $X$ has a cumulative distribution function $F$, we write $X\sim F$ and use $\overline F=1-F$ to denote the survival function of  $X$.  The essential infimum and essential supremum of $X$ are defined as
$$\essinf X=\inf\{x\in\R: \p(X\le x)>0\}\mbox{~~~and~~~}\esssup X=\sup\{x\in\R: \p(X\le x)<1\}.$$
If two random variables $X$ and $Y$ have the same cumulative distribution function, we write $X\laweq Y$. We say a random vector $(X_1,\dots,X_n)$ is independent and identically distributed (iid) if $X_1,\dots,X_n$ are iid.

\section{Universal VaR superadditivity} \label{sec:setup}
 
For a random variable $ X\sim F$, the Value-at-Risk (VaR) at   level $p\in (0,1)$ is defined as the left  inverse function of $F$, that is,
  \begin{equation*}%\label{VaR}
  \VaR_{p}(X)=F^{-1}(p)=\inf\{x\in\R:F(x)\geq p\}.
  \end{equation*}
  Fix an integer $n\ge 2$. 
  We say a random vector $(X_1, \dots, X_n)$ satisfies  \emph{universal VaR superadditivity (UVS)}  if 
\begin{equation}\label{eq:main}   \VaR_{p}\left( \sum_{i=1}^n X_{i}\right) \ge  \sum_{i=1}^n\VaR_{p}(X_i) \mbox{~~for all~~} p\in(0,1);\end{equation}
  \emph{universal VaR subadditivity (resp.~additivity)} is defined by replacing $\ge$ in  \eqref{eq:main} with $\le$ (resp.~$=$). %holds for $X_1,\dots,X_n$. 
Throughout this paper, we will mainly focus on superadditivity; the case of subadditivity is symmetric. 
Clearly, any constant vector satisfies UVS (indeed, additivity), but that case is not interesting. For some applications, the strict version of UVS is needed.
We say that $(X_1,\dots,X_n)$
satisfies \emph{strict UVS} if \eqref{eq:main} holds   with strict inequality  for all $p\in(0,1)$.

  Strict UVS  is a very strong property; it rules out all finite-mean distributions.
Let $\mathcal{L}^{\rm U}_n$ be the set of $n$-dimensional random vectors that satisfy UVS,   i.e., 
$$\mathcal{L}^{\rm U}_n=\left\{(X_1,\dots,X_n)\in \mathcal L_n^0: \VaR_{p}\left( \sum_{i=1}^n X_{i}\right) \ge  \sum_{i=1}^n\VaR_{p}(X_i) \mbox{~for all~} p\in(0,1)\right\}.$$
The set $\mathcal{L}^{\rm U}_n$ is quite rich, as can be seen in Section \ref{sec:ind}. We only give one example here, which is the primary example to keep in mind.
\begin{example}\label{ex:pareto0}
 The cumulative distribution function of the Pareto distribution with tail parameter $\alpha\in(0,\infty)$, denoted by ${\rm Pareto}(\alpha)$, is given by $$F(x)=1-\left(\frac{1}{x}\right)^\alpha, ~~~ x \ge 1.$$ If $\alpha\le 1$, the Pareto distribution has infinite mean. Theorem 1 of \cite{CEW25} implies that $(X_1,\dots,X_n)$ satisfies strict  UVS for independent $X_1,\dots,X_n\sim {\rm Pareto}(\alpha)$ with $\alpha\in(0,1]$.   In Example \ref{ex:pareto}, we will see that the same result holds for random vectors with independent but not necessarily identically distributed infinite-mean Pareto random variables. As an illustration, Figure \ref{fig:pareto} plots $\VaR_p(X_1+X_2+X_3)$ and $\VaR_p(X_1)+\VaR_p(X_2)+\VaR_p(X_3)$ for $p\in(0,0.99)$ by simulations, where $X_1\sim {\rm Pareto(0.7)}$, $X_2\sim {\rm Pareto(0.8)}$, and $X_3\sim {\rm Pareto(0.9)}$. The results for $p\in(0,0.95)$ and $(0.95,0.99)$ are plotted separately for visibility. 

 \begin{figure}[h]
\centering
\includegraphics[width=15cm]{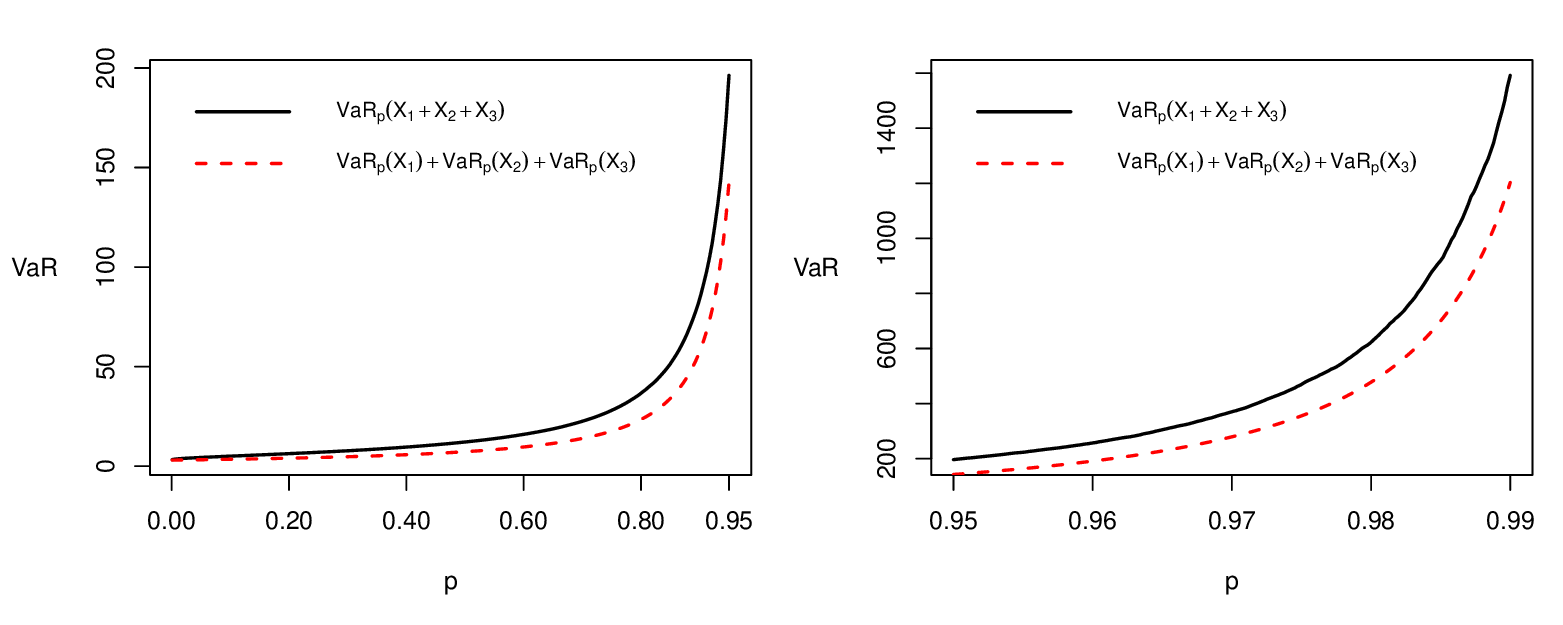}
\caption{$\VaR_p(X_1+X_2+X_3)$ and $\VaR_p(X_1)+\VaR_p(X_2)+\VaR_p(X_3)$ for $p\in(0,0.99)$, where $X_1\sim {\rm Pareto(0.7)}$, $X_2\sim {\rm Pareto(0.8)}$, and $X_3\sim {\rm Pareto(0.9)}$.}    \label{fig:pareto}
\end{figure}
 \end{example}

\begin{remark}   UVS requires a continuum of VaR inequalities, and is  thus different from the asymptotic superadditivity of VaR in the literature (e.g., \citealp{ELW09,mainik2010optimal}), 
that is often defined as 
$$\liminf_{p\uparrow 1}\frac{\VaR_p(X_1+\dots+X_n)}{\VaR_p(X_1)+\dots+\VaR_p(X_n)}> 1.$$
\end{remark}

Our paper is the first to systematically study the set 
$\mathcal L_n^{\rm U}$, as well as the related set   $\mathcal{L}^{\rm WU}_n$ 
defined by \begin{equation*}  
\mathcal{L}^{\rm WU}_n=\left\{(X_1,\dots,X_n) \in \mathcal L^0_n: (\theta_1 X_1,\dots,\theta_n X_n)\in \mathcal L^{\rm U}_n \mbox{ for all }(\theta_1,\dots,\theta_n)\in\Delta_n \right\},\end{equation*}
where $\Delta_n=\{(\theta_1,\dots,\theta_n)\in [0,1]^n:  \sum_{i=1}^n \theta_i=1\}$. 
Equivalently, elements of $\mathcal{L}^{\rm WU}_n$   satisfy \emph{weighted universal VaR superadditivity (WUVS)}, that is,
\begin{equation}
    \label{eq:main2}\VaR_{p}\left( \sum_{i=1}^n \theta_iX_{i}\right) \ge  \sum_{i=1}^n\VaR_{p}(\theta_iX_i) \mbox{~for all~} p\in(0,1) \mbox{~and~} (\theta_1,\dots,\theta_n) \in\Delta_n.
\end{equation} 
\emph{Strict WUVS} is defined by requiring \eqref{eq:main2}  to hold  with strict inequality for all $p\in(0,1)$ and all $(\theta_1,\dots,\theta_n)\in \Delta_n$ with at least two positive components. Strict WUVS has stronger implications in decision making than WUVS, which will be discussed in Proposition \ref{prop:rm}.

The random vector $(X_1,\dots,X_n)$ with  iid  Pareto$(\alpha)$ components   for $\alpha\le 1$ in Example \ref{ex:pareto0} satisfies strict  WUVS by Theorem 1 of \cite{CEW25}. 
By taking $\theta_i=1/n$ for all $i \in [n]$, we see that WUVS is stronger than UVS, and thus the inclusion $$\mathcal{L}^{\rm WU}_n \subseteq \mathcal{L}^{\rm U}_n$$ is clear.
The following example shows that the set $\mathcal{L}^{\rm U}_2$ is strictly larger than $\mathcal{L}^{\rm WU}_2 $.  
\begin{example}
\label{ex:stp}
The St.~Petersburg lottery pays $2^k$ with probability $2^{-k}$ for each $k \in \N$.
Let
$X_1, X_2$ be independent St.~Petersburg lotteries. 
% and $X^c_1, X^c_2$ be comonotonic St.~Petersburg lotteries.
 \citet[Proposition 3]{CHWZ25} yields that $(X_1, X_2)$ satisfies UVS;  {see the top panels in Figure \ref{fig:st},  where $\VaR_p(X_1+X_2)$ and $\VaR_p(X_1)+\VaR_p(X_2)$ over $p\in(0,0.99)$ are plotted.}
Let $\theta_1=0.1$ and $\theta_2=0.9$. %It is clear that $\p(\theta_1X^c_1+\theta_2X^c_2>1.9)=\p(X>1.9)=\p(X>1)={1}/{2}.$
We can check 
\begin{align*}\p(\theta_1 X_1+\theta_2 X_2>3.8)&=\p(X_2\ge 8)+\p(X_2=4)\p(X_1\ge 4)+\p(X_2=2)\p(X_1\ge 32)\\
&=\frac{13}{32} \le 0.45,  %\frac{1}{2},
% =\p(\theta_1 X_1^c+\theta_2 X_2^c >3.8),
\end{align*}
 which implies 
$\VaR_{0.55}(\theta_1 X_1+\theta_2 X_2)\le 3.8$. Moreover, $$\VaR_{0.55}(\theta_1 X_1)+\VaR_{0.55}(\theta_2 X_2)=\VaR_{0.55}( X_1)=4.$$
% $\theta_1 X_1^c+\theta_2 X_2^c \not\le_{\rm st} \theta_1 X_1+\theta_2 X_2$ 
 Hence, $(X_1, X_2)$ does not satisfy WUVS;  {see the bottom panels in Figure \ref{fig:st},  where $\VaR_p(0.1X_1+0.9X_2)$ and $\VaR_p(0.1X_1)+\VaR_p(0.9X_2)$ over $p\in(0,0.99)$ are plotted.}  
\begin{figure}[h]
\centering
\includegraphics[width=15cm]{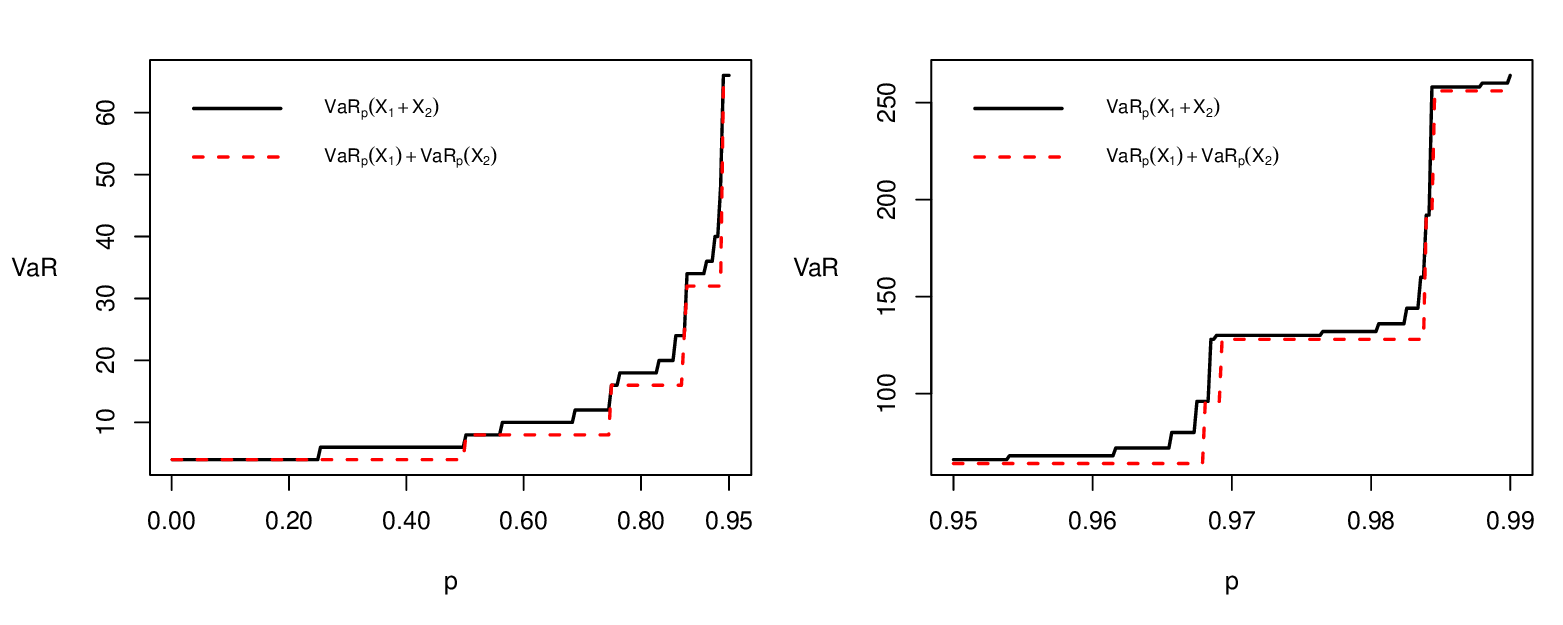}
\centering
\includegraphics[width=15cm]{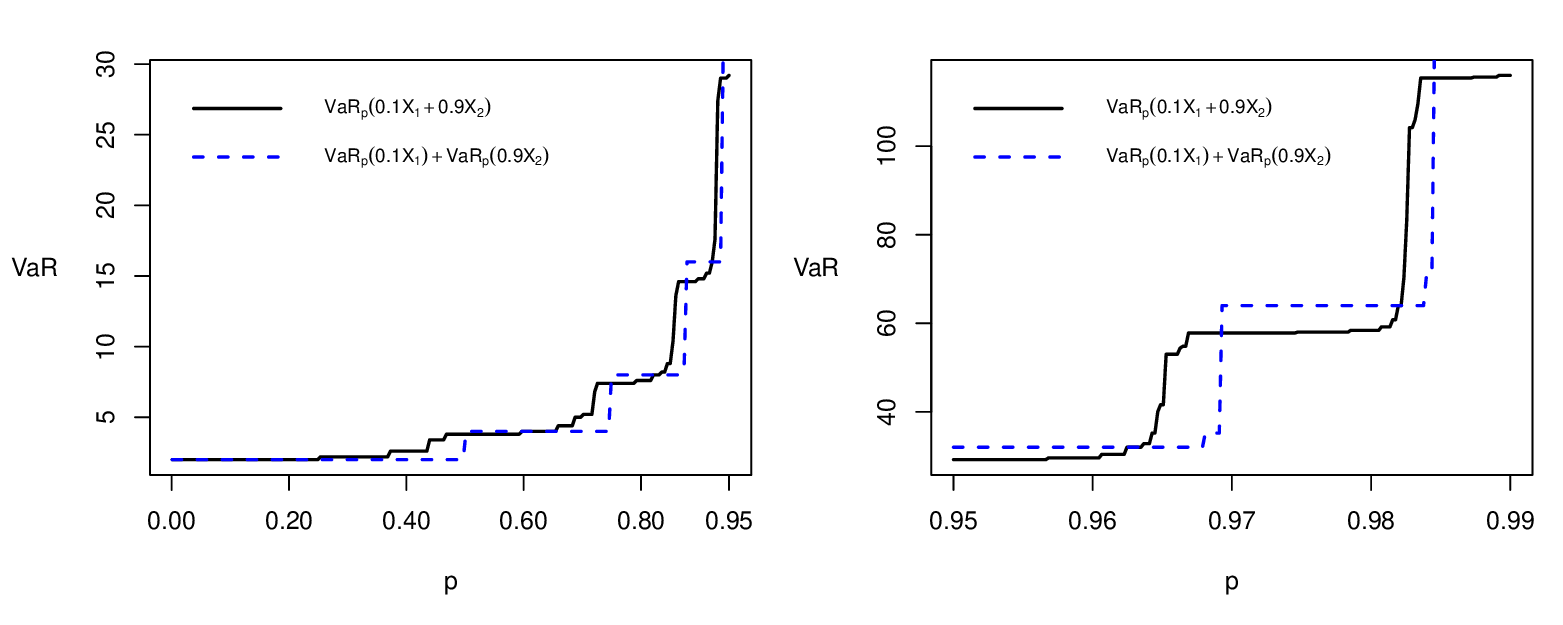}
\caption{Top panels: $\VaR_p(X_1+X_2)$ and $\VaR_p(X_1)+\VaR_p(X_2)$ for $p\in(0,0.99)$, where $X_1$ and $X_2$ are iid St.~Petersburg lotteries. Bottom panels: $\VaR_p(0.1X_1+0.9X_2)$ and $\VaR_p(0.1X_1)+\VaR_p(0.9X_2)$ for $p\in(0,0.99)$, where $X_1$ and $X_2$ are iid St.~Petersburg lotteries.}    \label{fig:st}
\end{figure}
\end{example}
Similarly, the strict inclusion $\mathcal{L}^{\rm WU}_n \subsetneq \mathcal{L}^{\rm U}_n$
holds for all $n\ge 2$, which can be easily seen by adding constant components to the random vector in Example \ref{ex:stp}.

\section{General properties of UVS and WUVS} \label{sec:VaR}

\subsection{Equivalent conditions}
We present some first observations on UVS and WUVS. Random variables $X_1,\dots,X_n$ are \emph{comonotonic} if there exist increasing functions $f_1,\dots,f_n$ and a random variable $Z$ such that $(X_1,\dots,X_n)=(f_1(Z),\dots,f_n(Z))$. In other words, $X_1,\dots,X_n$ are perfectly positively dependent. Here, $f_1,\dots,f_n$ can be chosen as the left inverse functions of the cumulative distribution functions of $X_1,\dots,X_n$, and $Z$ can be taken as a uniform random variable on $(0,1)$. If $X_1^c,\dots,X_n^c$ are comonotonic and have the same marginals as $X_1,\dots,X_n$, we call $(X_1^c, \dots, X_n^c)$ a \emph{comonotonic copy} of $(X_1, \dots, X_n)$.  
\begin{proposition} \label{prop:como}  
Suppose that  $\E[(X_i)_+]< \infty$ for all $i\in [n]$. The following are equivalent:
\begin{enumerate}[(i)]
    \item $(X_1, \dots, X_n)$ satisfies UVS;
    \item  $(X_1, \dots, X_n)$ satisfies WUVS;
        \item  $(X_1, \dots, X_n)$ satisfies universal VaR additivity;
    \item $X_1,\dots,X_n$ are comonotonic.
\end{enumerate}
\end{proposition}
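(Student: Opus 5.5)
The plan is to prove the cycle (iv)$\Rightarrow$(iii)$\Rightarrow$(ii)$\Rightarrow$(i)$\Rightarrow$(iv). The only substantial step is (i)$\Rightarrow$(iv), which we attack by integrating the VaR inequality over $p$ and comparing expectations.

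For (iv)$\Rightarrow$(iii), we use the comonotonic additivity of VaR. Write $X_i=F_i^{-1}(U)$ with $U$ uniform on $(0,1)$. Then $S=\sum_i F_i^{-1}(U)$ is an increasing, left-continuous function of $U$, so $\VaR_p(S)=\sum_i F_i^{-1}(p)$. If $(X_1,\dots,X_n)$ is comonotonic, so is $(\theta_1X_1,\dots,\theta_nX_n)$ for any $\theta\in\Delta_n$, because $\theta_i\ge0$. Hence comonotonicity gives weighted universal additivity, and in particular (iii) and (ii). For (iii)$\Rightarrow$(i) and (ii)$\Rightarrow$(i), (i) is the special case of (iii) with equality replaced by $\ge$, and the special case of (ii) with $\theta_i=1/n$. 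The latter uses positive homogeneity of VaR: $\VaR_p(\sum X_i/n)=\VaR_p(\sum X_i)/n$. To close the cycle we still need (i)$\Rightarrow$(ii). This follows once (i)$\Rightarrow$(iv) is shown, since (iv) gives (ii) by the first step.

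For (i)$\Rightarrow$(iv), set $S=\sum_i X_i$ and let $S^c=\sum_i X_i^c$ for a comonotonic copy. By the computation above, UVS reads $\VaR_p(S)\ge \VaR_p(S^c)$ for all $p$, which is $S^c\le_{\rm st} S$ (Lemma \ref{lem:st}). We then integrate this over $p$.

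When all $X_i$ are integrable, $\int_0^1\VaR_p(S)\,\d p=\E[S]=\sum_i\E[X_i]=\E[S^c]$. A nonnegative function $\VaR_p(S)-\VaR_p(S^c)$ with zero integral vanishes almost everywhere, and by left continuity it vanishes everywhere, so $S\laweq S^c$.

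The main obstacle is the one-sided integrable case, where $\E[(X_i)_-]$ may be infinite, so we truncate from below. Fix $m$ and replace $X_i$ by $X_i\vee(-m)$. This truncation is increasing, and comonotonicity is preserved under increasing maps. The difficulty is that stochastic dominance $S^c\le_{\rm st}S$ need not transfer to the truncated sums. Instead we use the convex order: it is known that $S\le_{\rm cx}S^c$ whenever the means exist in the extended sense. Combined with $S^c\le_{\rm st}S$, this gives $\E[(S-t)_+]\le\E[(S^c-t)_+]\le \E[(S-t)_+]$ for all $t$. These quantities are finite because $\E[(X_i)_+]<\infty$, and the convex-order bound on stop-loss transforms only needs positive parts. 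Hence $\E[(S-t)_+]=\E[(S^c-t)_+]$ for all $t$, and differentiating in $t$ yields $S\laweq S^c$.

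Finally we need that $\sum_iX_i\laweq\sum_iX_i^c$ forces comonotonicity. In the integrable case this is the known characterization of Cheung (equality in the convex order with the comonotonic sum). For the one-sided case we would reduce to it by noting $\E[(S-t)_+]=\E[\sum_i(X_i-t_i)_+]$ with the optimal split $t_i=F_i^{-1}(F_{S^c}(t))$. This gives $(S-t)_+=\sum_i(X_i-t_i)_+$ almost surely for all $t$, which identifies the joint law as comonotonic by the standard argument.
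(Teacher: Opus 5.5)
Your proof is correct in outline but takes a different route from the paper at the key step (i)$\Rightarrow$(iv).

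\textbf{How the two routes compare.} The paper integrates the UVS inequality over $[p,1]$ to get $\sum_i\ES_p(X_i)\le\ES_p(S)$. Subadditivity of ES then gives ES-additivity at every level. It then invokes Wang and Zitikis: additivity at level $p$ gives $p$-concentration, and $p$-concentration for all $p$ gives comonotonicity, with the remark that their proof only needs upper integrability. You instead combine $S^c\le_{\rm st}S$ with the stop-loss bound $\E[(S-t)_+]\le\E[(S^c-t)_+]$, and then characterize comonotonicity by a Cheung-type pointwise argument.

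Since $\ES_p(X)=\min_t\{t+\E[(X-t)_+]/(1-p)\}$ and $\E[(X-t)_+]=\sup_p(1-p)(\ES_p(X)-t)$, ES-additivity for all $p$ and equality of stop-loss transforms are dual forms of the same fact. So the two proofs run in parallel. Yours is more self-contained and shows directly why only $\E[(X_i)_+]<\infty$ matters. The paper's is shorter but relies on asserting that a cited proof extends.

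Two small points. What holds, and what you use, is the increasing convex order $S\le_{\rm icx}S^c$; the convex order is vacuous once $\E[S]=-\infty$. Also, the detour through $S\laweq S^c$ is unnecessary, since your last step returns to the stop-loss identities anyway.

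\textbf{The last step needs repair as written.} The split $t_i=F_i^{-1}(F_{S^c}(t))$ sums to $t$ only when $F_{S^c}^{-1}(F_{S^c}(t))=t$. If $t$ lies in a flat stretch of $F_{S^c}$, then $\sum_it_i<t$ and in fact $\E[(S^c-t)_+]<\sum_i\E[(X_i-t_i)_+]$. So you cannot conclude $(S-t)_+=\sum_i(X_i-t_i)_+$ there.

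Restricting to the $t$ where your split is exact discards exactly the needed information. Take independent $X_1=\id_A$ and $X_2=\id_B$ with $\p(A)=\p(B)=1/2$. The only such $t$ are $0$ and $2$, where ``all $X_i\le t_i$ or all $X_i\ge t_i$'' holds trivially, yet the pair is not comonotonic.

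\textbf{The fix is to use mixed quantiles.} For $p\in(0,1)$ and $\lambda\in[0,1]$, set $t_i=\lambda F_i^{-1}(p)+(1-\lambda)F_i^{-1+}(p)$ and $t=\sum_it_i$. On $\{U\le p\}$ every $X_i^c=F_i^{-1}(U)\le t_i$, and on $\{U>p\}$ every $X_i^c\ge F_i^{-1+}(p)\ge t_i$. Hence $(S^c-t)_+=\sum_i(X_i^c-t_i)_+$ exactly.
\begin{itemize}
\item These $t$ cover every $t$ with $0<F_{S^c}(t)<1$.
\item The same split is what actually proves your stop-loss bound.
\end{itemize}
You then obtain, almost surely, either all $X_i\le t_i$ or all $X_i\ge t_i$, for every such split.

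Turning this into nestedness of $\{X_i\le x_i\}$ and $\{X_j\le x_j\}$ still needs a short case analysis when $F_i(x_i)=F_j(x_j)$ and atoms or flat stretches are present. Either choose $\lambda$ so that $t_j$ lies strictly inside $(F_j^{-1}(p),F_j^{-1+}(p))$, or pass to a limit $p'\downarrow p$, and use that $\p(X_i\le x_i,X_j>x_j)=\p(X_i>x_i,X_j\le x_j)$ in this case. This is exactly where the one-sided-integrable extension has to be checked, so ``the standard argument'' should be written out rather than cited.
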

\begin{proof}
The directions (iv)$\Rightarrow$(iii)$\Rightarrow$(i)
and 
(iv)$\Rightarrow$(ii)$\Rightarrow$(i) are clear. 
 It remains to show the (i)$\Rightarrow$(iv) direction. The proof follows the arguments in the proof of Proposition 2 of \cite{CEW25}. Assume $(X_1, \dots, X_n)\in \mathcal{L}^{\rm U}_n$. For a random variable $X$ and $p\in (0,1)$,  the ES at level $p$ is defined as
$$\ES_p(X)=\frac{1}{1-p}\int_p^1 \VaR_u(X) \d u.$$
Since $\E[(X_i)_+]< \infty$ for all $i\in [n]$, we have $\ES_p(X_i)< \infty$ for all $i\in[n]$ and $p \in (0,1)$. 
UVS  and subadditivity of ES imply
$$\sum_{i=1}^n \ES_p(X_i)\le  \ES_p\left(\sum_{i=1}^n X_i\right)\mbox{ and }\sum_{i=1}^n \ES_p(X_i)\ge  \ES_p\left(\sum_{i=1}^n X_i\right).$$
Therefore, $\ES_p(\sum_{i=1}^n X_i)=\sum_{i=1}^n \ES_p(X_i)$. By \citet[Theorem 5]{WZ21}, $(X_1, \dots, X_n)$ is $p$-concentrated for all $p\in (0,1)$; see \cite{WZ21} for the definition of $p$-concentration. \citet[Theorem 4]{WZ21} further implies that $X_1, \dots, X_n$ are comonotonic. Note that although Theorem 5 of \cite{WZ21} is stated for random variables with finite mean, its proof only needs upper integrability and hence it can be applied here.
\end{proof}
A direct consequence of Proposition \ref{prop:como} is that  UVS (and WUVS) cannot hold for independent and non-degenerate random vectors where each component is integrable.
Another consequence is that strict UVS cannot hold for any random vectors with all integrable components.
In Sections \ref{sec:cs}--\ref{sec:invsub}, we study several classes of risks such that strict WUVS holds for random vectors with components drawn from these
classes. These risks are non-degenerate and thus are extremely heavy-tailed in the sense that their first absolute moments are infinite.

 \begin{remark}
By a similar proof of Proposition \ref{prop:como}, we can show that for random variables $X_1,\dots,X_n$ with $\E[(X_i)_-]< \infty$ for all $i\in [n]$:  $(X_1,\dots, X_n)$ satisfies universal VaR subadditivity if and only if $X_1,\dots,X_n$ are comonotonic. 
% \begin{equation}\label{eq:sub}\mbox{universal VaR subadditivity holds for $X_1,\dots,X_n$ if and only if $X_1,\dots,X_n$ are comonotonic.}\end{equation} 
This result generalizes Theorem 1 of \cite{IK25}, where $X_1,\dots,X_n$ are assumed to be integrable.  Theorem 2.2 of \cite{M25} provides the same ``if and only if'' statement with the assumption that $X_1,\dots,X_n$ are bounded from below. Note that for any random variable $X$ that is bounded from below, $\E[(X)_-]<\infty$.
\end{remark}

Next, we present a useful lemma for our main results. We show stochastic dominance representations of UVS and WUVS, which also yield strong implications for decision making. For random variables $X$ and $Y$, $X$ is smaller than $Y$ in \emph{stochastic order}, denoted by $X\le_{\rm st}Y$, if $\p(X>t)\le \p(Y>t)$ for all $t\in\R$. 
We refer to \cite{MS02} and  \cite{SS07} for properties of stochastic order.

\begin{lemma}\label{lem:st}
% A random vector $(X_1, \dots, X_n)$ satisfies UVS if and only if
% $$X_1^c+\dots+X_n^c\le_{\rm st} X_1+\dots+X_n,$$
% where $(X_1^c,\dots,X_n^c)$ is a comonotonic copy of $(X_1,\dots,X_n)$.
Let $(X_1, \dots, X_n) \in \mathcal L_n^0$ and $(X_1^c,\dots,X_n^c)$ be a comonotonic copy of $(X_1,\dots,X_n)$. We have
\begin{enumerate}[(i)]
\item $(X_1, \dots, X_n)$ satisfies UVS if and only if  $X_1^c+\dots+X_n^c\le_{\rm st} X_1+\dots+X_n$; 
\item $(X_1, \dots, X_n)$ satisfies WUVS if and only if  $\theta_1X_1^c+\dots+\theta_nX_n^c\le_{\rm st} \theta_1X_1+\dots+\theta_nX_n$ for all $(\theta_1, \dots, \theta_n) \in \Delta_n$. 
\end{enumerate}
% {\color{red}If further $\sum_{i=1}^n X_i$ and $\sum_{i=1}^n X^{c}_i$ are continuous random variables, we have
% \begin{enumerate}
% \item[(iii)] $(X_1, \dots, X_n)$ satisfies strict UVS if $X_1^c+\dots+X_n^c<_{\rm st} X_1+\dots+X_n$; 
% \item[(iv)] $(X_1, \dots, X_n)$ satisfies strict WUVS if $\theta_1X_1^c+\dots+\theta_nX_n^c<_{\rm st} \theta_1X_1+\dots+\theta_nX_n$ for all $(\theta_1, \dots, \theta_n) \in \Delta_n$. 
% \end{enumerate}}
\end{lemma}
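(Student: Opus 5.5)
The plan rests on two standard facts. First, VaR is comonotonic additive. Second, stochastic order is equivalent to a pointwise ordering of left quantiles: for random variables $X$ and $Y$, $X\le_{\rm st}Y$ holds if and only if $\VaR_p(X)\le \VaR_p(Y)$ for all $p\in(0,1)$. With these two facts, part (i) reduces to rewriting the right-hand side of \eqref{eq:main}, and part (ii) follows by applying (i) to rescaled vectors.

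\textbf{Comonotonic additivity.} Take the representation $X_i^c=F_i^{-1}(U)$ with $U$ uniform on $(0,1)$, where $F_i$ is the distribution function of $X_i$. The map $u\mapsto \sum_{i=1}^n F_i^{-1}(u)$ is increasing and left-continuous. Hence the left quantile of $\sum_i X_i^c$ at level $p$ equals $\sum_{i=1}^n F_i^{-1}(p)$. To check this, write $g=\sum_i F_i^{-1}$ and verify that $\p(g(U)\le x)\ge p$ if and only if $g(p)\le x$; this uses left-continuity of $g$ to handle the boundary. Therefore
\[
\VaR_p\Big(\sum_{i=1}^n X_i^c\Big)=\sum_{i=1}^n \VaR_p(X_i) \quad\text{for all } p\in(0,1),
\]
and UVS becomes $\VaR_p(\sum_i X_i^c)\le \VaR_p(\sum_i X_i)$ for all $p\in(0,1)$.

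\textbf{Quantile characterization of $\le_{\rm st}$.} Let $F$ and $G$ be the distribution functions of $X$ and $Y$.
\begin{enumerate}[(a)]
\item If $\overline F\le \overline G$ pointwise, then $G\le F$. Any $x$ with $G(x)\ge p$ then also satisfies $F(x)\ge p$, so $F^{-1}(p)\le G^{-1}(p)$.
\item Conversely, suppose $F^{-1}\le G^{-1}$ on $(0,1)$. Represent $X\laweq F^{-1}(U)$ and $Y\laweq G^{-1}(U)$ with a common uniform $U$. Then $F^{-1}(U)\le G^{-1}(U)$ almost surely, which gives $\p(X>t)\le \p(Y>t)$ for every $t$.
\end{enumerate}
Combining this with the comonotonic additivity step proves (i).

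\textbf{Part (ii).} Fix $(\theta_1,\dots,\theta_n)\in\Delta_n$. Since $\theta_i\ge 0$, the vector $(\theta_1X_1^c,\dots,\theta_nX_n^c)$ is a comonotonic copy of $(\theta_1X_1,\dots,\theta_nX_n)$: the maps $\theta_iF_i^{-1}$ are increasing, including when some $\theta_i=0$. By definition, WUVS means that $(\theta_1X_1,\dots,\theta_nX_n)\in\mathcal L^{\rm U}_n$ for each $\theta\in\Delta_n$. Applying (i) to each such vector gives the stated equivalence.

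\textbf{Main obstacle.} The only delicate point is the boundary bookkeeping in the comonotonic additivity step. Specifically, one must confirm that the left quantile of $g(U)$ at $p$ is exactly $g(p)$, which needs $g$ increasing and left-continuous. A similar check is needed in (b), that $F^{-1}(U)\laweq X$. Both are standard quantile facts, and I would cite them rather than reprove them.
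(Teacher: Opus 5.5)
Your proposal is correct and follows essentially the same route as the paper. Comonotonic additivity gives $\VaR_p(\sum_i X_i^c)=\sum_i \VaR_p(X_i)$, UVS then becomes pointwise quantile ordering, which is equivalent to $\le_{\rm st}$, and (ii) follows by applying (i) to $(\theta_1X_1,\dots,\theta_nX_n)$. The only difference is that you also prove the two standard quantile facts, which the paper cites without proof.
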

\begin{proof}
For (i), the result holds since for any random variables $X$ and $Y$, $X\le_{\rm st} Y$ if and only if $\VaR_p(X)\le\VaR_p(Y)$ for all $p\in(0,1)$ and 
\begin{equation*}\VaR_p\(\sum_{i=1}^nX_i^c\)=\sum_{i=1}^n\VaR_p\(X_i^c\)=\sum_{i=1}^n\VaR_p\(X_i\).\end{equation*}
For (ii), $(X_1, \dots, X_n) \in \mathcal{L}^{\rm WU}_n$ if and only if $(\theta_1 X_1, \dots, \theta_n X_n) \in \mathcal{L}^{\rm U}_n$ for all $(\theta_1, \dots, \theta_n) \in \Delta_n$. By (i), we have the result. 
\end{proof}

 \subsection{Convex transformation}

Strictly increasing and convex transformations provide a natural way to make risks more extreme in the upper tail. Indeed, such transformations make continuous random variables larger in convex transform order, i.e., more right-skewed in some sense; see, e.g., \cite{SS07} for a discussion on convex transform order. If a random vector $(X_1,\dots,X_n)$ with identically distributed components satisfies WUVS, then for any increasing and convex function $f$,  $(f(X_1),\dots,f(X_n))$ also satisfies WUVS. To see this,  for any $ (\theta_1,\dots,\theta_n)\in\Delta_n$, we have
$$ f(X_1)=f\(\sum_{i=1}^n\theta_i X_i^c\)\le_{\rm st}f\(\sum_{i=1}^n\theta_i X_i\)\le \sum_{i=1}^n\theta_i f(X_i),$$
where $(X_1^c,\dots,X_n^c)$ is a comonotonic copy of $(X_1,\dots,X_n)$.  
This observation has been used by \cite{CEW25}, \cite{Muller25}, and \cite{CS25}  to define several classes of distributions for random vectors in $\mathcal{L}^{\rm WU}_n$ with identically distributed components; see Section \ref{sec:invsub}. 

% We generalize this observation to non-identical convex transformations in the following theorem.  
% We show that a WUVS random vector can still be preserved in $\mathcal{L}^{\rm WU}_n$ if different strictly increasing and convex transformations are applied to its coordinates.
% if strictly increasing and convex transformations, possibly different, are applied to the marginal risks of an iid random vector in $\mathcal{L}^{\rm WU}_n$, the transformed random vector remains in 
% $\mathcal{L}^{\rm WU}_n$. 
Our first main result shows that  UVS and WUVS, as well as their strict versions, are preserved under increasing and convex transformations, but in two different senses: WUVS allows for different convex transforms, whereas UVS requires identical transforms and identical marginals.

% \begin{theorem}\label{pro:SAw}

% Suppose that $(X_1, \dots, X_n) \in \mathcal{L}^0_n$.
% \begin{enumerate}[(i)]
% \item If $(X_1, \dots, X_n)$ 
% satisfies WUVS, then $(f_1(X_1), \dots, f_n(X_n))$ satisfies WUVS for all increasing and convex functions  $f_1, \dots, f_n$. 
%  \item If $(X_1, \dots, X_n)$ 
% satisfies strict WUVS, then $(f_1(X_1), \dots, f_n(X_n))$ satisfies strict WUVS for all strictly increasing and convex functions  $f_1, \dots, f_n$.   
% \item If $(X_1, \dots, X_n)$ 
%     satisfies   WUVS and has positive density on $\prod_{i=1}^n[\essinf X_i,  \infty)$, then $(f_1(X_1), \dots, f_n(X_n))$ satisfies strict WUVS for all    increasing and strictly convex functions   $f_1, \dots, f_n$. 
%     \com{Use $\prod$ not $\Pi$}

% \end{enumerate}
% For the following statements, we further assume $(X_1, \dots, X_n)$ has identical marginals.  
% \begin{enumerate}
% \item[(iv)] If $(X_1, \dots, X_n)$ 
%     satisfies  UVS, then $(f(X_1), \dots, f(X_n))$ satisfies UVS for all    increasing and convex functions  $f$. 

% \item[(v)] If $(X_1, \dots, X_n)$ 
%     satisfies strict UVS, then $(f(X_1), \dots, f(X_n))$ satisfies strict UVS for all  strictly  increasing and convex functions  $f$. 
   
% \item[(vi)] If $(X_1, \dots, X_n)$ 
%     satisfies  UVS and has a positive density on $\prod_{i=1}^n [\essinf X_i, \infty)$, then we have $(f(X_1), \dots, f(X_n))$ satisfies strict UVS for all    increasing and strictly convex functions  $f$. 
% \end{enumerate} 
% \end{theorem} 

% \com{See my version below}.

\begin{theorem}\label{th:convex}

For   $(X_1, \dots, X_n) \in \mathcal{L}^0_n$, we have the following statements.
\begin{enumerate}[(i)]
\item If $(X_1, \dots, X_n)$ 
satisfies (strict) WUVS, then $(f_1(X_1), \dots, f_n(X_n))$ satisfies (strict) WUVS for all (strictly) increasing and convex functions  $f_1, \dots, f_n$.  
\item[(ii)] If $(X_1, \dots, X_n)$ 
    satisfies (strict)  UVS and has  identical marginals, then $(f(X_1), \dots, f(X_n))$ satisfies (strict) UVS for any    (strictly)  increasing and convex function  $f$. 
\end{enumerate} 
\end{theorem}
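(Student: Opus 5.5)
The plan is to work pointwise in $p$ and reduce everything to the VaR form of the hypotheses, \eqref{eq:main} and \eqref{eq:main2}, via a supporting-line argument for convex functions. Take the $f_i$ to be real-valued on $\R$; being convex, they are then continuous. First I record that for an increasing, continuous $f$ we have $\VaR_p(f(X))=f(\VaR_p(X))$ for all $p\in(0,1)$, because the left quantile commutes with increasing left-continuous maps.

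For part (i), fix $p\in(0,1)$ and $(\theta_1,\dots,\theta_n)\in\Delta_n$, and set $x_i=\VaR_p(X_i)$. By convexity, I choose a subgradient $a_i\ge 0$ of $f_i$ at $x_i$ (say the right derivative), so that $f_i(y)\ge f_i(x_i)+a_i(y-x_i)$ for all $y$. Then
\[
\sum_{i=1}^n\theta_i f_i(X_i)\ \ge\ \sum_{i=1}^n\theta_i f_i(x_i)+\sum_{i=1}^n\theta_i a_i(X_i-x_i).
\]
If $s:=\sum_i\theta_ia_i=0$, the right-hand side is the constant $\sum_i\theta_if_i(x_i)$. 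Monotonicity of VaR together with the first observation then gives $\VaR_p(\sum_i\theta_if_i(X_i))\ge\sum_i\theta_if_i(x_i)=\sum_i\VaR_p(\theta_if_i(X_i))$.

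If $s>0$, put $\lambda_i=\theta_ia_i/s$, so that $(\lambda_1,\dots,\lambda_n)\in\Delta_n$. WUVS of $(X_1,\dots,X_n)$ with weights $\lambda$ gives $\VaR_p(\sum_i\lambda_iX_i)\ge\sum_i\lambda_ix_i$. By positive homogeneity and translation invariance of VaR, this yields $\VaR_p\bigl(\sum_i\theta_if_i(x_i)+\sum_i\theta_ia_i(X_i-x_i)\bigr)\ge\sum_i\theta_if_i(x_i)$. Monotonicity of VaR then gives \eqref{eq:main2} for $(f_1(X_1),\dots,f_n(X_n))$.

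For the strict version, I need $a_i>0$. A convex, strictly increasing function on $\R$ has a strictly positive right derivative everywhere: otherwise its derivative would be $\le 0$ to the left of that point, contradicting strict increase. Hence, if $\theta$ has at least two positive components, so does $\lambda$. Strict WUVS then gives strict inequality in the $\lambda$-step, and this strictness propagates through the chain.

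For part (ii), identical marginals give a common $x=\VaR_p(X_i)$ for all $i$, and we can use a single subgradient $a$ of $f$ at $x$. This yields
\[
\sum_{i=1}^n f(X_i)\ge nf(x)+a\Bigl(\sum_{i=1}^n X_i-nx\Bigr).
\]
UVS gives $\VaR_p(\sum_iX_i)\ge nx$, and the conclusion follows exactly as before. Strictness follows from $a>0$ when $f$ is strictly increasing. This is also exactly where identical marginals and an identical $f$ are needed: with different slopes $a_i$, UVS (equal weights) would not suffice.

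The main obstacle is the bookkeeping at the edge cases rather than the core idea: justifying $\VaR_p(f(X))=f(\VaR_p(X))$, handling zero subgradients, and checking that strictness survives the monotonicity step. That last point holds because $\VaR_p(Y)\ge\VaR_p(Z)$ whenever $Y\ge Z$, and the strict inequality already sits in $\VaR_p(Z)$.
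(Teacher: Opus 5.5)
Your proposal is correct, and part (i) is essentially the paper's proof: the same supporting-line inequality, the same split on whether $\sum_i\theta_ia_i=0$, and the same reweighting $\lambda_i=\theta_ia_i/S$ followed by (strict) WUVS at $\lambda$, with translation invariance and positive homogeneity of VaR doing the job of the paper's event inclusion and $\epsilon$ argument. For part (ii) the paper argues slightly differently, via the pointwise Jensen inequality $f(\frac1n\sum_i X_i)\le\frac1n\sum_i f(X_i)$ and $f(\VaR_p(X))\le f(\VaR_p(\frac1n\sum_iX_i))$, whereas your tangent line at the common quantile is your part (i) argument with equal slopes; both are valid, and your remark that a strictly increasing convex function on $\R$ has strictly positive subgradients makes precise a step the paper only asserts.
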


\begin{proof}
\begin{enumerate}[(i)]

%\item If $F$ is a continous distribution, we have $\p(X_1=\dots=X_n)=0$. Since $f$ is increasing and strictly convex, the second inequality in \eqref{eq:UVS} is strict almost surely; hence, $(f(X_1), \dots, f(X_n)$ satisfies strict UVS.
%\item If $f$ is strictly increasing and $(X_1,\dots,X_n)$ satisfies strict UVS, the first inequality in \eqref{eq:UVS} is strict in the sense that $$\p(f(X)>x)<\p\(f\left(\frac{1}{n} X_1+ \dots+ \frac{1}{n} X_n\right)>x\) \mbox{for all $x\in\R$}.$$ Since each $X_i$ is continuous, we have $\sum_{i=1}^n f(X_i)$ and $\sum_{i=1}^n f(X)$ are continuous random variables as $f$ is continuous.  By Lemma \ref{lem:st}, $(f(X_1), \dots, f(X_n))$ satisfies strict UVS.
\item Since $f_1, \dots, f_n$ are increasing and continuous, we have
$\VaR_p(f_i(X_i))=f_i(\VaR_p(X_i))$ for all $i \in [n]$ and $p\in (0,1)$. Since $f_1, \dots, f_n$ are convex, there exists $a_1, \dots, a_n\ge 0$ such that
\begin{equation}\label{eq:convex}f_i(X_i)\ge f_i(\VaR_p(X_i))+a_i(X_i-\VaR_p(X_i)),~~~i \in [n].
\end{equation}
Hence, for any $(\theta_1, \dots, \theta_n) \in \Delta_n$, we have
\begin{align*}\sum_{i=1}^n  \theta_if_i(X_i)&\ge \sum_{i=1}^n\theta_i \left(f_i(\VaR_p(X_i))+ a_i(X_i-\VaR_p(X_i))\right)\\
&=\sum_{i=1}^n\theta_i \VaR_p(f_i(X_i))+ \sum_{i=1}^n \theta_ia_i(X_i-\VaR_p(X_i)).
\end{align*}
If $\sum_{i=1}^n \theta_ia_i=0$, it is clear that $\theta_ia_i=0$ for all $i \in [n]$. As a result, we have $\sum_{i=1}^n \theta_i f_i(X_i)\ge \sum_{i=1}^n\theta_i \VaR_p(f_i(X_i))$. Hence, $\VaR_p(\sum_{i=1}^n \theta_i f_i(X_i))\ge \sum_{i=1}^n\theta_i \VaR_p(f_i(X_i))$.

If $\sum_{i=1}^n \theta_ia_i>0$, for any $\epsilon\ge 0$, we have 
\begin{equation}\label{eq:WUVS1}
\left\{\sum_{i=1}^n \theta_i f_i(X_i) - \sum_{i=1}^n\theta_i \VaR_p(f_i(X_i))\le  \epsilon\right\}\subseteq \left\{ \sum_{i=1}^n \theta_i a_i X_i -\sum_{i=1}^n \theta_i a_i \VaR_p(X_i)\le \epsilon \right\}.
\end{equation}
Let $S=\sum_{i=1}^n \theta_ia_i$ and $\theta'_i=\theta_ia_i/S$. It is clear that $(\theta'_1, \dots, \theta'_n ) \in \Delta_n$. By WUVS of $(X_1, \dots, X_n)$, for any $x<\sum_{i=1}^n \theta'_i\VaR_p(X_i)$, we have $\p(\sum_{i=1}^n \theta'_iX_i \le x)<p$.
Hence,
$$\p\left(\sum_{i=1}^n \theta_i f_i(X_i) \le \sum_{i=1}^n\theta_i \VaR_p(f_i(X_i))-\epsilon\right)\le \p\left(\sum_{i=1}^n \theta'_i X_i \le \sum_{i=1}^n \theta'_i \VaR_p(X_i)-\frac{\epsilon}{S}\right)<p$$
for all $\epsilon>0$. 
That is, for any $x<\sum_{i=1}^n \theta_i \VaR_p(f_i(X_i))$, we have $\p(\sum_{i=1}^n \theta_i f_i(X_i)\le x)<p$. Hence, by the definition of $\VaR$, we have $\VaR_p(\sum_{i=1}^n \theta_i f_i(X_i))\ge \sum_{i=1}^n\theta_i \VaR_p(f_i(X_i))$ for all $p \in (0,1)$.

If $f_1, \dots, f_n$ are strictly increasing and  convex, we have $a_1, \dots, a_n>0$ in \eqref{eq:convex}. 
 Hence, for any $(\theta_1, \dots, \theta_n)\in \Delta_n$ with at least two positive components, we have $S>0$.
If $(X_1, \dots,X_n)$ satisfies  strict WUVS, we have $\VaR_p(\sum_{i=1}^n \theta'_i X_i)>\sum_{i=1}^n\theta'_i\VaR_p(X_i)$ for all $p \in (0,1)$. Hence, $\p(\sum_{i=1}^n \theta'_i X_i \le \sum_{i=1}^n\theta'_i\VaR_p(X_i))< p$.
By \eqref{eq:WUVS1}, we further have
$$\p\left(\sum_{i=1}^n \theta_i f_i(X_i) \le \sum_{i=1}^n\theta_i \VaR_p(f_i(X_i))\right)\le \p\left(\sum_{i=1}^n \theta'_i X_i \le \sum_{i=1}^n \theta'_i \VaR_p(X_i)\right)<p.$$
Therefore, $\VaR_p(\sum_{i=1}^n \theta_i f_i(X_i) )> \sum_{i=1}^n\theta_i \VaR_p(f_i(X_i))$.

\item Since $f$ is convex, we have
\begin{equation}\label{eq:UVS1}
f\left(\frac{1}{n} X_1+ \dots+ \frac{1}{n} X_n\right)\le \frac{1}{n}f(X_1)+\dots+\frac{1}{n}f(X_n),
\end{equation} 
Since $\VaR$ is monotone,
$
\VaR_p\left(f\left( X_1/n+ \dots+  X_n/n\right)\right)
\le \VaR_p\left(f(X_1)/n+\dots+f(X_n)/n\right).
$
Since $f$ is increasing, we have $f\left(\VaR_p\left( X_1/n+ \dots+ X_n/n\right)\right)=\VaR_p\left(f\left( X_1/n+ \dots+X_n/n\right)\right)$.
By  $(X_1, \dots, X_n) \in \mathcal{L}^{\rm U}$, we further have
\begin{align}
\VaR_p(f(X))=f(\VaR_p(X))&\le f\left(\VaR_p\left(\frac{1}{n}X_1 + \dots+ \frac{1}{n}X_n \right)\right)\label{eq:UVS3}\\
&\le \VaR_p\left(\frac{1}{n}f(X_1)+\dots+\frac{1}{n}f(X_n)\right)\nonumber.
\end{align}
% Assume $(X_1, \dots, X_n) \in \mathcal{L}^{\rm U}_n$. We have
% $$X\le_{\rm st}\frac{1}{n} X_1+ \dots+ \frac{1}{n} X_n.$$
% As $f$ is increasing and  convex, we have 
% \begin{equation}\label{eq:UVS}
% f(X)\le_{\rm st}f\left(\frac{1}{n} X_1+ \dots+ \frac{1}{n} X_n\right)\le \frac{1}{n}f(X_1)+\dots+\frac{1}{n}f(X_n),
% \end{equation}
% which implies $(f(X_1), \dots, f(X_n)) \in \mathcal{L}^{\rm U}_n$.

If $f$ is strictly increasing and $(X_1, \dots, X_n)$ satisfies strict UVS, we have that the  inequality in \eqref{eq:UVS3} is strict for all $p\in (0,1)$.
Hence, $(f(X_1), \dots, f(X_n))$ satisfies strict UVS.\qedhere
%As $f$ is strictly increasing and $(X_1, \dots, X_n)$ is strict UVS, the $\le_{\rm st}$ in \eqref{eq:UVS} is $<_{\rm st}$. Hence, $(f(X_1), \dots, f(X_n))$ is strict UVS by Lemma \ref{lem:con}.
\end{enumerate}
\end{proof}

  \begin{remark}\label{rem:strict}
 A weaker notion of strict UVS (resp.~WUVS) can be defined by replacing ``all $p\in (0,1)$" with ``a.e.~$p\in (0,1)$" in the definition of strict UVS (resp.~WUVS).
A random vector $(X_1, \dots, X_n)$ satisfies this notion of strict UVS (called ``a.e.~strict UVS")  if and only if  there exists $Y\laweq \sum_{i=1}^n X_i$ and $Z\laweq \sum_{i=1}^n X^c_i$ such that $\p(Y>Z)=1$.  If we replace ``strict" with  ``a.e.~strict" in Theorem \ref{th:convex}, the statements still hold.  %If we only require that $(X_1, \dots, X_n)$ has density in (vi), then we have $(f(X_1), \dots, f(X_n))$ satisfies ``a.e. strict UVS". The proof is straightforward as
%  $\p(X_1=\dots=X_n)=0$ when density exists. Hence, we have the inequality in \eqref{eq:UVS1} is strict almost surely and the inequality in \eqref{eq:UVS2} is strict for a.e. $p\in (0,1)$. 
 \end{remark}

The following example shows that the assumption of identical marginals cannot be removed for statement (ii) in Theorem \ref{th:convex}.
%\com{Revised the sentence: we cannot say it is necessary, because it is not.}
\begin{example}
Let $X$ and $Y$ be iid St.~Petersburg lotteries and let $Z=Y+10$. As discussed in Example \ref{ex:stp}, $(X,Y)$ satisfies UVS. As VaR is translation invariant, $(X,Z)$ also satisfies UVS. Applying the convex transform $f(x)=x^2$, $x\ge 0$, to $X$ and $Z$, we have %\com{I think $Z$ should be $Y$ in the right; revised}
\begin{align*}
\p\left(X^2+(Y+10)^2\le 208\right)
&\ge
\p \left((X,Y)\in \{(2,2),(4,2),(2,4),(8,2)\} \right) \\
%\p(X=2,Z=2)
%+\p(X=4,Z=2)+\p(X=2,Z=4%)  \\
%&\quad
%+\p(X=8,Z=2) \\
&=
\frac14+\frac18+\frac18+\frac1{16}  =
\frac9{16}
>0.55.
\end{align*}
Hence, $\VaR_{0.55}(f(X)+f(Z))\le 208< 212=\VaR_{0.55}(f(X))+\VaR_{0.55}(f(Z))$.
\end{example}

\begin{example}\label{ex:pareto}
Consider the Pareto($\alpha$) distribution in Example \ref{ex:pareto0}. 
  If $\alpha\in(0,1]$, a Pareto($\alpha$)  random variable can be obtained through a  convex transform $x\mapsto x^{1/\alpha}$ from a Pareto($1$) random variable. Note that an iid random vector with Pareto$(1)$ components satisfies strict WUVS  \citep[Theorem 1]{CEW25}. Then,  by Theorem \ref{th:convex}, $(X_1,\dots,X_n)$ satisfies {strict} WUVS if $X_i\sim {\rm Pareto}(\alpha_i)$ with  $\alpha_i\in(0,1]$ for all $i\in[n]$ and $X_1,\dots,X_n$ are independent. 
  This result, as a conjecture explicitly mentioned by \citet[Section 6.1]{CEW25b}, was first proved by \citet[Theorem 3.5]{M25} in the non-strict sense. We recover it here as a consequence of Theorem \ref{th:convex}. 
  We further investigate the application of Theorem \ref{th:convex}   in Section \ref{sec:invsub}.
 \end{example}

 \citet[Proposition 3.16]{M25} considered a subclass in $\mathcal L_n^{\rm U}$ that is closed under convex transforms, which is covered by Theorem \ref{th:convex}.

The strictness of UVS and WUVS is quite a strong requirement, because we need the inequalities in \eqref{eq:main}--\eqref{eq:main2} to be strict for all $p$. 
Theorem \ref{th:convex} assumes strict UVS or WUVS  for  $(X_1,\dots,X_n)$ to obtain the strict UVS or WUVS  after convex transformation. 
In the next result, we show that this is possible without assuming strict UVS or WUVS: we can instead require strictly convex transformations to obtain the strict versions of these properties, under an additional assumption on the density of the random vector. The proof of this result requires different arguments than those in the proof of Theorem \ref{th:convex}.

%\com{added the bridging paragraph. Is the last sentence correct?}

\begin{theorem}\label{pro:SAw}

Suppose that $(X_1, \dots, X_n) \in \mathcal{L}^0_n$  has positive density on $\prod_{i=1}^n (\essinf X_i,  \infty)$ and is atomless. 
%\com{What happens if $\essinf X_i=-\infty$?}
%\com{LY: I changed it to open range}
\begin{enumerate}[(i)]
\item If $(X_1, \dots, X_n)$ 
    satisfies   WUVS, then $(f_1(X_1), \dots, f_n(X_n))$ satisfies strict WUVS for all    increasing and strictly convex functions   $f_1, \dots, f_n$. 
 \item    If $(X_1, \dots, X_n)$ 
    satisfies  UVS and has identical marginals, then $(f(X_1), \dots, f(X_n))$ satisfies strict UVS for any    increasing and strictly convex function  $f$. 
\end{enumerate}

\end{theorem}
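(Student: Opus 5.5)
We follow the structure of the proof of Theorem \ref{th:convex}. The difference is that strictness must now come from the strict convexity of the $f_i$, combined with the positive density of $(X_1,\dots,X_n)$, rather than from strict WUVS of the original vector.

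For (i), fix $p\in(0,1)$ and $(\theta_1,\dots,\theta_n)\in\Delta_n$ with at least two positive components. Write $v_i=\VaR_p(X_i)$. Since each $X_i$ is atomless with positive density on $(\essinf X_i,\infty)$, $F_i$ is continuous and strictly increasing there, so $v_i\in(\essinf X_i,\infty)$ and $\p(X_i\le v_i)=p$. Take subgradients $a_i\ge 0$ of $f_i$ at $v_i$. Because $f_i$ is strictly convex, we get the strict inequality
\[
f_i(x)>f_i(v_i)+a_i(x-v_i)\quad\text{for all } x\neq v_i .
\]
We also have $\VaR_p(f_i(X_i))=f_i(v_i)$. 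Set
\[
D=\sum_{i}\theta_i f_i(X_i)-\sum_i\theta_i f_i(v_i)
\quad\text{and}\quad
L=\sum_i\theta_i a_i(X_i-v_i).
\]
Then $D\ge L$, and $D>L$ on the event $\{X_j\neq v_j\}$ for any $j$ with $\theta_j>0$, which has probability one by atomlessness.

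The goal is $\p(D\le 0)<p$, since this gives $\VaR_p(\sum_i\theta_i f_i(X_i))>\sum_i\theta_i\VaR_p(f_i(X_i))$. Exactly as in the proof of Theorem \ref{th:convex}, WUVS applied with the weights $\theta_i'=\theta_ia_i/S$ (when $S=\sum_i\theta_ia_i>0$) gives $\p(L<0)<p$ in the following sense: $\p(L\le -\epsilon)<p$ for every $\epsilon>0$. Letting $\epsilon\downarrow 0$ only yields $\p(L<0)\le p$. We therefore split
\[
\p(D\le 0)\le \p(L<0)+\p(D\le 0,\ L\ge 0),
\]
and the event $\{D\le 0,L\ge 0\}$ is null because $D>L$ a.s. So $\p(D\le 0)\le\p(L<0)\le p$, and equality has to be ruled out.

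This is the main obstacle. The plan is to exploit the positive density to gain a definite amount of probability. Choose $\delta>0$ and a small box $B$ around $(v_1,\dots,v_n)$, with $v_j$ shifted slightly, on which $L\in(-\delta,0)$ while $D>0$. This is possible because on a neighborhood of points with $x_j\neq v_j$ for a positive-weight $j$, the strict-convexity gap $D-L$ is bounded below by some $\eta>0$, so $D>0$ whenever $L>-\eta$. The box has positive probability by the density assumption. Then
\[
\p(D\le 0)\le \p(L\le -\eta)+\p(-\eta<L<0,\ D\le 0)\le \p(L\le-\eta)<p,
\]
where the middle term vanishes because $D>L+\eta$ near such points. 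More carefully, I would localize: use continuity of the gap function $g(x)=\sum_i\theta_i[f_i(x_i)-f_i(v_i)-a_i(x_i-v_i)]$, which is strictly positive off $\{x_j=v_j\}$, and bound $\p(L\le -\epsilon)$ uniformly in $\epsilon$ via WUVS. If $S=0$, then $D\ge 0$ with $D>0$ a.s., so $\p(D\le 0)=0<p$ directly. A cleaner alternative, which I would try first, is to show $\p(-\eta<L<0,\ g(X)\le\eta)<\p(-\eta<L<0)$ using the positive density. This gives $\p(D\le0)<\p(L<0)\le p$. For this we need $\p(-\eta<L<0)>0$, which follows from the positive density since $L$ takes values on both sides of $0$ near $v$.

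For (ii), we use identical marginals with common quantile $v=\VaR_p(X_1)$ and the comonotonic reduction from the proof of Theorem \ref{th:convex}(ii). Let $M=\frac1n\sum_i X_i$. Then $f(M)\le\frac1n\sum_i f(X_i)$, with strict inequality unless $X_1=\dots=X_n$; that event is null because the vector has a density on an open box with $n\ge2$. By UVS, $\p(M<v)\le p$. The same density argument as in (i) then gives $\p\bigl(\frac1n\sum_i f(X_i)\le f(v)\bigr)<p$: the strict gap $\frac1n\sum_i f(X_i)-f(M)$ is bounded below near points of the diagonal-free set where $M$ is just below $v$, and such points carry positive mass.
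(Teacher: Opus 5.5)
Your skeleton matches the paper's. Up to a null set, $\{D\le 0\}\subseteq\{L<0\}$, and WUVS gives $\p(L<0)\le p$. Everything then reduces to exhibiting a nonempty open set $O\subseteq\prod_{i=1}^n(\essinf X_i,\infty)$ on which $L<0<D$, since then $\p(D\le 0)\le\p(L<0)-\p((X_1,\dots,X_n)\in O)<p$. Using $D>L$ a.s.\ to get $\p(D\le 0)\le\p(L<0)$ is a nice touch, because it avoids the paper's appeal to atomlessness of $\sum_i\theta_i'X_i$.

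The gap is that constructing $O$ is the entire content of the theorem beyond Theorem~\ref{th:convex}, and the justifications you offer for it do not work.
\begin{enumerate}[(a)]
\item In your first display the middle term does not vanish. For $j$ with $\theta_j>0$ and small $\epsilon$, near $v-\epsilon\mathbf e_{j,n}$ one has $L\in(-\eta,0)$ and $D<0$. This region has positive probability, because the gap $g$ is close to $0$ near $v$.
\item In the ``cleaner alternative'' what you need is $\p(-\eta<L<0,\ g(X)>\eta)>0$, not $\p(-\eta<L<0)>0$. The fact that $L$ changes sign near $v$ is irrelevant, for the same reason as in (a).
\item A box around $v$ with a single coordinate shifted cannot work. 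At $v+s\mathbf e_{j,n}$ you have $L<0$ only if $s<0$, and then $D=\theta_j(f_j(v_j+s)-f_j(v_j))\le 0$ by monotonicity of $f_j$.
\end{enumerate}

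The missing idea is to move along the hyperplane $\{L=0\}$ using two positive weights, say $\theta_1,\theta_2>0$. Put $x=v+t\mathbf e_{1,n}-s\mathbf e_{2,n}$ with $\theta_1a_1t=\theta_2a_2s$ and $s$ small enough that $v_2-s>\essinf X_2$. Here $a_i>0$, since $v_i$ is interior and $f_i$ is increasing and strictly convex. Then $L(x)=0$ and $D(x)=g(x)>0$ by strict convexity. Lowering $x_2$ by a small $\epsilon$ gives $L<0$ while keeping $D>0$, and continuity yields $O$.

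This is exactly where the hypothesis of at least two positive weights enters; your argument for (i) never uses it. With a single positive weight, $\p(D\le 0)=\p(L<0)=p$ and strictness fails.

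Your sketch of (ii) points the right way, but ``the gap is bounded below'' must be ``the gap exceeds $f(v)-f(M)$''. The clean route is the paper's. At $x=(v+t,v-t,v,\dots,v)$ with $t$ small, one has $M=v$ and $\frac1n\sum_i f(x_i)>f(v)$. A slight downward perturbation then gives an open set on which $M<v$ but still $\frac1n\sum_i f(x_i)>f(v)$.
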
 

\begin{proof}
% We follow the arguments in \eqref{eq:convex}, \eqref{eq:WUVS1} and \eqref{eq:UVS1} in the proof of Theorem \ref{th:convex} with the following modifications.
%  \com{I don't understand what this means. Are these two proofs similar or not? I emphasized their difference above.}
%  \com{LY: I don't want to repeat the argument around \eqref{eq:convex}, \eqref{eq:WUVS1} and \eqref{eq:UVS1} here again. RW: Let's emphasize the difference then.}
\begin{enumerate}[(i)]

\item   %Since $(X_1, \dots, X_n)$ has positive density on $\prod_{i=1}^n [\essinf X,  \infty)$, we have $$\theta_if_i(X_i)-\sum_{i=1}^n\theta_i \VaR_p(f_i(X_i))> \sum_{i=1}^n \theta_ia_i(X_i-\VaR_p(X_i))~~~\mbox{a.s.}$$
We recall \eqref{eq:convex} and \eqref{eq:WUVS1} in the proof of Theorem \ref{th:convex}. The key is to show that the difference of two sets in \eqref{eq:WUVS1} has a positive probability.
Since $f_1, \dots, f_n$ are strictly convex, we have $a_1, \dots, a_n>0$ in \eqref{eq:convex}.
Without loss of generality, let $(\theta_1, \dots, \theta_n) \in \Delta_n$ with  $\theta_1>0$ and $\theta_2>0$.
Let $t,s>0$ such that $\theta_1a_1t-\theta_2a_2s=0$, $x_1=\VaR_p(X_1)+t$, $x_2=\VaR_p(X_2)-s$, and $x_i=\VaR_p(X_i)$ for $i \in [n]\setminus[2]$. Since $(X_1, \dots, X_n)$ is atomless, we can find $s$ small enough such that $x_2=\VaR_p(X_2)-s>\essinf(X_2)$.
For such $(x_1, \dots, x_n)$, we have $\sum_{i=1}^n \theta_i a_i (x_i-\VaR_p(X_i))=0$. By strict convexity of $f_1, \dots, f_n$ and \eqref{eq:convex},  we further have
$\sum_{i=1}^n \theta_i f_i(x_i)>\sum_{i=1}^n\theta_if_i(\VaR_p(X_i))+\theta_1a_1t-\theta_2a_2s=\sum_{i=1}^n\theta_if_i(\VaR_p(X_i))$. Hence, there exists $\delta>0$ small enough such that $\sum_{i=1}^n \theta_i f_i(x_i)-\sum_{i=1}^n\theta_if_i(\VaR_p(X_i))>\delta$. Let $\epsilon>0$ and $(x'_1, \dots, x'_n)=(x_1, \dots, x_n)-(0,\epsilon, 0, \dots, 0)$. For $(x'_1, \dots, x'_n)$, we have $\sum_{i=1}^n \theta_i a_i (x'_i-\VaR_p(X_i))=-\theta_2a_2\epsilon<0$.
It is clear that $f_1, \dots, f_n$ are continuous. Therefore, we can take $\epsilon$ small such that $\sum_{i=1}^n \theta_i f_i(x'_i)-\sum_{i=1}^n\theta_if_i(\VaR_p(X_i))>\delta'>0$ for some $\delta'<\delta$. Further, by the continuity of $f_1, \dots, f_n$, we can find an open area $O \subseteq \prod_{i=1}^n (\essinf X_i, \infty)$ such that 
$$\sum_{i=1}^n \theta_i f_i(y_i)-\sum_{i=1}^n \theta_i \VaR_p(f_i(X_i))>0 ~~~\mbox{and} ~~~\sum_{i=1}^n \theta_i a_i(y_i-\VaR_p(X_i))< 0$$
for all $(y_1, \dots, y_n)\in O$. Since
$(X_1, \dots, X_n)$ has positive density on $\prod_{i=1}^n (\essinf X_i, \infty)$, we have $\p((X_1, \dots, X_n) \in O)>0$. Moreover,  $\sum_{i=1}^n \theta'_iX_i$ is atomless.
Therefore, by \eqref{eq:WUVS1}, we have
$$\p\left(\sum_{i=1}^n \theta_i f_i(X_i) \le \sum_{i=1}^n\theta_i \VaR_p(f_i(X_i))\right)< \p\left(\sum_{i=1}^n \theta'_i X_i \le \sum_{i=1}^n \theta'_i \VaR_p(X_i)\right)\le p.$$
By the definition of $\VaR$, we further have  $\VaR_p(\sum_{i=1}^n \theta_i f_i(X_i) )> \sum_{i=1}^n\theta_i \VaR_p(f_i(X_i))$.
\item 
 %If each $X_i$ has a continuous and strictly increasing distribution function, 
% Since $(X_1, \dots, X_n)$ has density, we have $\p(X_1=\dots=X_n)=0$. Since $f$ is increasing and strictly convex, the inequality in \eqref{eq:UVS1} is strict almost surely.\com{This can only hold for a.e. $p$.}
% Following the argument in (i), we have $(f(X_1), \dots, f(X_n))$ is strictly UVS.
 It is clear that if $f$ is increasing and strictly convex, then $f$ is strictly increasing and continuous. We recall \eqref{eq:UVS1} in the proof of Theorem \ref{th:convex}, which implies 
 $$\left\{\frac{1}{n}f(X_1)+\dots+\frac{1}{n}f(X_n)\le \VaR_p(f(X))\right\} \subseteq \left\{\frac{1}{n}X_1+\dots+\frac{1}{n}X_n\le \VaR_p(X)\right\}$$
 where $X \laweq X_i$ for $i \in [n]$.
 Since $(X_1, \dots, X_n)$ is atomless, we can find $t>0$ small enough such that $\VaR_p(X)-t>\essinf(X)$.
Since $f$ is strictly convex, by taking $x_1=\VaR_p(X)+t$, $x_2=\VaR_p(X)-t$, $x_i=\VaR_p(X)$ for $i\in [n]\setminus[2]$ and some $t>0$, we have $\sum_{i=1}^n x_i/n=\VaR_p(X)$ and $f(x_1)/n+\dots+f(x_n)/n>f(x_1/n+\dots+x_n/n)=f(\VaR_p(X))=\VaR_p(f(X)).$ Following the proof in (i),
we can always find an open set $O\subseteq \prod_{i=1}^n (\essinf X_i, \infty)$ such that $f(y_1)/n+\dots+f(y_n)/n> \VaR_p(f(X))$ but $y_1/n+\dots+y_n/n<
\VaR_p(X)$ for all $(y_1, \dots, y_n) \in O$. Since $(X_1, \dots, X_n)$ has positive density on $\prod_{i=1}^n (\essinf X_i, \infty)$, we have $\p((X_1, \dots, X_n) \in O)>0$. Moreover, $\sum_{i=1}^n X_i$ is atomless. Hence, we have 
$$\p\left(\frac{1}{n}f(X_1)+\dots+\frac{1}{n}f(X_n)\le \VaR_p(f(X))\right)<\p\left(\frac{1}{n}X_1+\dots+\frac{1}{n}X_n\le \VaR_p(X)\right)\le p,$$ which gives strict UVS.\qedhere
\end{enumerate}
\end{proof}

 Theorem \ref{pro:SAw} will be applied to an iid random vector $(X_1,\dots,X_n)\in \mathcal{L}^{\rm WU}_n$  in Section \ref{sec:invsub} to obtain strict WUVS random vectors with non-identical marginals.
\begin{remark}
If we weaken the condition of having positive density on $\prod_{i=1}^n (\essinf X_i,  \infty)$ for statement (ii) of Theorem \ref{pro:SAw} to having a density, then $(f(X_1), \dots, f(X_n))$ satisfies ``a.e. strict" UVS as defined in Remark \ref{rem:strict}. The proof is quite straightforward, as the inequality in \eqref{eq:UVS1} holds strictly almost surely. Hence, we have $$
\VaR_p\left(f\left( \frac {X_1}{n}+ \dots+  \frac {X_n}{n}\right)\right)
<\VaR_p\left(\frac 1n f(X_1)+\dots+\frac 1n f(X_n)\right), \mbox{~~~a.e.~$p \in (0,1)$.}
$$   Following the proof of (ii) in Theorem \ref{th:convex}, we have the ``a.e.~strict" UVS of $(f(X_1), \dots, f(X_n))$.
\end{remark}
\subsection{Closure properties}
We next present some closure properties of  $\mathcal L^{\mathrm U}_n$
    and   $\mathcal L^{\mathrm{WU}}_n$. 
    A \emph{distributional mixture} of $\mathbf X$ and $\mathbf Y$ is another random vector $\mathbf Z$ whose distribution is given by a mixture of the distributions of $\mathbf X$ and $\mathbf Y$, that is, 
    $\mathbf Z\laweq \mathbf X\id_A+\mathbf Y\id_{A^c}$
for some event $A$ independent of $(\mathbf X,\mathbf Y)$.  
 As we will see from the next result, although the sets  $\mathcal L^{\mathrm U}_n$
    and   $\mathcal L^{\mathrm{WU}}_n$
    are not closed under distributional mixtures,
    two subsets of theirs
    are closed under distributional mixtures.
    Let $\mathcal{L}^{\rm IM}$ be the set of all random vectors with identically distributed components (with arbitrary dimension and dependence structure).  

\begin{theorem}
\label{th:closure}
    The sets $\mathcal L^{\mathrm U}_n$
    and   $\mathcal L^{\mathrm{WU}}_n$
    are closed under convergence in distribution. They are not closed under distributional mixture or convex combinations. 
The sets  $\mathcal L^{\mathrm{U}}_n \cap \mathcal{L}^{\rm IM}$ and  $\mathcal L^{\mathrm{WU}}_n \cap \mathcal{L}^{\rm IM}$ are closed under distributional mixture, but not closed under convex combinations.
\end{theorem}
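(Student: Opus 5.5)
The plan is to reduce everything to the stochastic-order characterization in Lemma \ref{lem:st}. Each positive claim then becomes a statement about the distribution functions of $\sum_i\theta_iX_i$ and of its comonotonic counterpart $\sum_i\theta_iX_i^c$. Each negative claim will be settled by a counterexample built from comonotonic vectors together with Proposition \ref{prop:como}.

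\emph{Convergence in distribution.} Let $\mathbf X^{(k)}\in\mathcal L^{\rm U}_n$ with $\mathbf X^{(k)}\to\mathbf X$ in distribution. By the continuous mapping theorem, $\sum_i X^{(k)}_i\to\sum_i X_i$ in distribution. The marginals also converge, so the quantile functions $F_{k,i}^{-1}$ converge to $F_i^{-1}$ at every continuity point of $F_i^{-1}$, hence for a.e.\ $u\in(0,1)$. With $U$ uniform, $\sum_i F_{k,i}^{-1}(U)\to\sum_iF_i^{-1}(U)$ a.s., so the comonotonic sums converge in distribution as well. It remains to check that $\le_{\rm st}$ is closed under joint weak limits. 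If $Y_k\le_{\rm st}Z_k$, $Y_k\to Y$ and $Z_k\to Z$ in distribution, then $\p(Y>t)\le\p(Z>t)$ holds at every common continuity point $t$. Such points are dense, and survival functions are right-continuous, so the inequality holds for all $t$. Lemma \ref{lem:st}(i) then gives $\mathbf X\in\mathcal L^{\rm U}_n$. For $\mathcal L^{\rm WU}_n$ I apply the same argument to $\sum_i\theta_iX_i$ for each fixed $\theta\in\Delta_n$ and invoke Lemma \ref{lem:st}(ii).

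\emph{Mixtures within $\mathcal L^{\rm IM}$.} Suppose $\mathbf X$ has all marginals $F$, $\mathbf Y$ has all marginals $G$, and $\mathbf Z\laweq\mathbf X\id_A+\mathbf Y\id_{A^c}$ with $\p(A)=\lambda$. Then $\mathbf Z$ has identical marginals $H=\lambda F+(1-\lambda)G$. Its comonotonic weighted sum is $\sum_i\theta_iZ^c_i\laweq W$ with $W\sim H$, so for every $t$,
\begin{equation*}
\p\Big(\sum_i\theta_iZ_i>t\Big)=\lambda\,\p\Big(\sum_i\theta_iX_i>t\Big)+(1-\lambda)\,\p\Big(\sum_i\theta_iY_i>t\Big)\ge\lambda\overline F(t)+(1-\lambda)\overline G(t)=\overline H(t).
\end{equation*}
The inequality is Lemma \ref{lem:st} applied separately to $\mathbf X$ and $\mathbf Y$. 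This is exactly where identical marginals are used: the comonotonic sum of a mixture equals the mixture of the comonotonic sums only when each component's comonotonic sum is a single rescaled marginal. Taking $\theta_i=1/n$ gives the statement for $\mathcal L^{\rm U}_n\cap\mathcal L^{\rm IM}$, and taking all $\theta\in\Delta_n$ gives it for $\mathcal L^{\rm WU}_n\cap\mathcal L^{\rm IM}$.

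\emph{Counterexamples.} For mixtures without identical marginals, take the constant vectors $\mathbf X=(1,0,\dots,0)$ and $\mathbf Y=(0,1,0,\dots,0)$, both trivially in $\mathcal L^{\rm WU}_n$. Their $\tfrac12$-mixture is $(B,1-B,0,\dots,0)$ with $B$ Bernoulli$(1/2)$. The sum of this vector is identically $1$, whereas its comonotonic sum takes values $0$ and $2$. Hence $\VaR_p$ of the sum is $1<2$ for $p>1/2$, so UVS, and therefore WUVS, fails. For convex combinations without identical marginals, take $\mathbf X=(U,0,\dots,0)$ and $\mathbf Y=(0,1-U,0,\dots,0)$ with $U$ uniform. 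Both are comonotonic, but $\tfrac12(\mathbf X+\mathbf Y)$ is bounded and countermonotonic, hence not comonotonic, so it fails UVS by Proposition \ref{prop:como}.

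\emph{Main obstacle.} The hardest step is the convex-combination counterexample inside $\mathcal L^{\rm IM}$. A bounded vector in $\mathcal L^{\rm U}_n\cap\mathcal L^{\rm IM}$ must be comonotonic, and by Proposition \ref{prop:como} with identical marginals this means all components are equal. Naive bounded combinations therefore stay trivially UVS, so heavy tails must enter. My first attempt would be to combine an iid Pareto$(1)$ vector $(V,V')$, which lies in $\mathcal L^{\rm WU}_2$, with a comonotonic vector such as $(-V,-V)$ or $(cV,cV)$. I would choose the weight $\lambda$ and the scaling $c$ so that the combination has components with integrable positive parts but is not comonotonic; Proposition \ref{prop:como} then excludes UVS. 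Failing that, I would choose them so that the VaR inequality fails visibly at some explicit small $p$. Verifying such a choice concretely is where I expect the real work to lie.
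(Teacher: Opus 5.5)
Your arguments for closure under convergence in distribution and for closure of the $\mathcal L^{\rm IM}$-intersections under distributional mixtures are correct. Your two counterexamples for the unrestricted sets are also correct; the constant-vector mixture and the combination $(U/2,(1-U)/2,0,\dots,0)$ are valid alternatives to the paper's $(X,c)$, $(c,Y)$ construction.

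The gap is the final claim. You never exhibit two elements of $\mathcal L^{\rm U}_n\cap\mathcal L^{\rm IM}$ (or $\mathcal L^{\rm WU}_n\cap\mathcal L^{\rm IM}$) whose convex combination leaves the set, and the route you sketch cannot be completed as planned. You try to make the combination violate UVS through Proposition \ref{prop:como}. Consider your candidates, e.g.\ $\lambda(V,V')+(1-\lambda)(-V,-V)=((2\lambda-1)V,\ \lambda V'-(1-\lambda)V)$ or the $(cV,cV)$ variant. In each, the component containing $\lambda V'$ with $\lambda>0$ satisfies $\p(\cdot>t)\ge C/t$ for large $t$, because $V'$ is independent of the other term. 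Its positive part is therefore never integrable, so Proposition \ref{prop:como} is unavailable. You are left with an unverified numerical VaR check.

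The missing observation is that UVS need not fail at all. It is enough that the convex combination does not have identical marginals, since then it lies outside $\mathcal L^{\rm IM}$ and hence outside the intersection. Your bounded-case remark correctly explains why heavy tails are needed even for this. Your own candidate with $c=1$ already works. The vectors $(V,V')$ with iid Pareto$(1)$ components and $(V,V)$ both belong to $\mathcal L^{\rm WU}_2\cap\mathcal L^{\rm IM}$. Their combination $\lambda(V,V')+(1-\lambda)(V,V)=(V,\ \lambda V'+(1-\lambda)V)$ has a second marginal that is not Pareto$(1)$ for $\lambda\in(0,1)$. This follows from strict WUVS of iid Pareto$(1)$ vectors (Theorem 1 of \cite{CEW25}), or directly: $\p(\lambda V'+(1-\lambda)V\le 1+\epsilon)\le \epsilon^2/(\lambda(1-\lambda))$, whereas $\p(V\le 1+\epsilon)=\epsilon/(1+\epsilon)$. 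This is essentially the paper's argument, which uses $(X,Y)$ and $(X,Z)$ for iid Pareto$(1)$ variables $X,Y,Z$. For $n>2$, combine $(V_1,\dots,V_n)$ with iid Pareto$(1)$ components and $(V_1,\dots,V_1)$ in the same way.
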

\begin{proof}
The first statement follows from a few facts: First, the (usual) stochastic order is continuous in convergence in distribution.
Second, convergence in distribution of random vectors implies 
the convergence in distribution of their sum, as well as their marginal distributions.
Third, convergence in distribution of the marginal distributions implies the 
convergence in distribution of the comonotonic copy.
Putting all these facts together and using 
Lemma \ref{lem:st}, we arrive at the closedness under convergence in distribution.

We provide two counter-examples for the second statement in the case $n=2$ (the general case is analogous). %Since $\mathcal{L}^{\rm WU}_n \subseteq \mathcal{L}^{\rm U}_n$, it is sufficient to show the counter-examples for  $\mathcal{L}^{\rm WU}_n$. 
 Let $X$ and $Y$ be two non-degenerate independent random variables such that $\E[(X)_+]< \infty$, $\E[(Y)_+]< \infty$, and  $c =\E[X]=\E[Y]\in\R$. 
 By Lemma \ref{lem:st}, we have $\mathbf X=(X,c) \in \mathcal{L}^{\rm WU}_2 \subseteq \mathcal{L}^{\rm U}_2$ and $\mathbf Y=(c, Y) \in \mathcal{L}^{\rm WU}_2$. 
 For any $\alpha \in (0,1)$, we have $\E[(\alpha  X+(1-\alpha )c)_+]< \infty$ and $\E[(\alpha  c+(1-\alpha )Y)_+]< \infty$. By Proposition \ref{prop:como}, it is clear that
  $\alpha  \mathbf X+(1-\alpha ) \mathbf Y\notin \mathcal{L}^{\rm U}_2$ as $\alpha X+(1-\alpha )c$ and $\alpha c+(1-\alpha )Y$ are not comonotonic. Hence,  $\alpha  \mathbf X+(1-\alpha ) \mathbf Y\notin \mathcal{L}^{\rm WU}_2$.
Therefore, neither $\mathcal{L}^{\rm WU}_2$ nor  $\mathcal{L}^{\rm U}_2$ is closed under convex combinations. 
 Let $A \subseteq \Omega$ be  independent of $\mathbf X$ and $\mathbf Y$ such that $\p(A)=\alpha \in (0,1)$. %The distribution of $\mathbf X\id_A+\mathbf Y\id_{A^c}$ follows the mixture of the distributions for $\mathbf X$ and $\mathbf Y$. 
By construction, $\mathbf X\id_A+\mathbf Y\id_{A^c}$ is not comonotonic. By Proposition \ref{prop:como},  we get $\mathbf X\id_A+\mathbf Y\id_{A^c} \notin \mathcal{L}^{\rm U}_2$, and hence neither $\mathcal{L}^{\rm WU}_2$ nor  $\mathcal{L}^{\rm U}_2$ is closed under distributional mixture. We can easily construct counter-examples for $n >2$ by adding constant components to $\mathbf X$ and $\mathbf Y$. 

Next, we show the last statement. 
Take $\mathbf X =(X_1, \dots, X_n)\in \mathcal L^{\rm WU}_n\cap \mathcal{L}^{\rm IM},$ $ \mathbf Y=(Y_1, \dots, Y_n) \in \mathcal L^{\rm WU}_n\cap \mathcal{L}^{\rm IM} $,  and an event $A$ independent of $(
\mathbf X,\mathbf Y)$.    
For $(\theta_1,\dots,\theta_n)\in\Delta_n$ and $x \in \R$, we have 
\begin{align*}
\p\(\sum_{i=1}^n\theta_i(X_i\id_A+Y_i\id_{A^c})>x\)&=\p(A)\p\(\sum_{i=1}^n\theta_iX_i>x\)+\p(A^c)\p\(\sum_{i=1}^n\theta_iY_i>x\)\\
&\ge \p(A)\p\(X_1>x)+\p(A^c)\p(Y_1>x\)\\
&=\p(X_1\id_A+Y_1\id_{A^c}>x),
\end{align*}
which implies $(\theta_1(X_1\id_A+Y_1\id_{A^c}), \dots, \theta_n(X_n\id_A+Y_n\id_{A^c})) \in \mathcal{L}^{\rm U}_n$ and hence $\mathbf X\id_A+\mathbf Y\id_{A^c} \in \mathcal{L}^{\rm WU}_n$.
Clearly,  $\mathbf X\id_A+\mathbf Y\id_{A^c} \in \mathcal{L}^{\rm IM}$, and hence  it is in $\mathcal L^{\rm WU}_n\cap \mathcal{L}^{\rm IM} $.
By taking $\theta_1=\dots=\theta_n=1/n$, we have the result for $\mathcal{L}^{\rm U}_n\cap \mathcal{L}^{\rm IM}$.

Finally, we note that a convex combination of elements in $\mathcal{L}^{\rm U}_n\cap \mathcal{L}^{\rm IM}$
does not necessarily have identical marginals, and therefore $\mathcal{L}^{\rm U}_n\cap \mathcal{L}^{\rm IM}$ is not convex. For instance, let $X,Y,Z$ be iid Pareto(1). Both $(X,Y)$ and $(X,Z)$ satisfy UVS but  a convex combination of them does not have identical marginals. The case for $\mathcal{L}^{\rm WU}_n\cap \mathcal{L}^{\rm IM}$ is similar.
\end{proof}
The last statement in Theorem \ref{th:closure}
implies that for random vectors $\mathbf X $ and $\mathbf Y $  
satisfying UVS (resp.~WUVS) and each having identically distributed components, we have that  $ \mathbf X \id_A + \mathbf Y\id_{A^c}$  satisfies UVS (resp.~WUVS),
where $A$ is any event independent of $(\mathbf X,\mathbf Y)$.
Note that even if $\mathbf X$ and $\mathbf Y$   have independent components, the components of  $\mathbf {X} \id_A+\mathbf {Y} \id_{A^c}$ are not necessarily independent. Some simple cases below illustrate this.
\begin{example}
Let $\mathbf X,\mathbf Y\in \mathcal{L}^{\rm IM}$ and $\mathbf Z$ be a  distributional mixture of $\mathbf X,\mathbf Y$.
\begin{enumerate}[(i)]
    \item If $\mathbf X,\mathbf Y$ have  degenerate  marginals,  then  $\mathbf Z$ takes values at two points $(a,\dots,a)$ and $(b,\dots,b)$ for some $a,b\in \R$. 
    Therefore, $\mathbf Z$ has comonotonic (and further, identical) components. 
    \item  If $\mathbf X$ and $\mathbf Y$ have   comonotonic components, then, because they  also have identical marginals, their components must be almost surely identical. Then $\mathbf Z$ also has almost surely identical components. 
    \item If $\mathbf X$ has iid Pareto$(1)$ components and $\mathbf Y$ has iid Pareto$(1/2)$ components, then $\mathbf Z$ has components that are marginally distributed as a mixture of Pareto$(1)$ and Pareto$(1/2)$, but these components are not independent. 
\end{enumerate} 
In all cases, $\mathbf Z$ satisfies UVS and WUVS by Theorem \ref{th:closure}. 
In cases (i) and (ii),  $\mathbf Z$ is comonotonic and hence its UVS and WUVS also follow from Proposition \ref{prop:como}.
\end{example}

 Finally, we note another straightforward closure property under dimension reduction:  if $(X_1,\dots,X_n)\in \mathcal L^{\rm WU}_n$,
then 
$(X_1,\dots,X_k)\in \mathcal L^{\rm WU}_k$ for all $k\in [n]$.
The same property does not hold for $\mathcal L^{\rm U}_n$.

\begin{proposition}\label{prop:sub}
 If $(X_1,\dots,X_n)\in \mathcal L^{\rm WU}_n$,
then 
$(X_i)_{i\in K}\in \mathcal L^{\rm WU}_k$ for all $K \subseteq [n]$ with $\vert K \vert =k\le n$.
The same property does not hold for $\mathcal L^{\rm U}_n$ when $n \ge 3$.
\end{proposition}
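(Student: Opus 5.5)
The first claim follows by extending weight vectors by zeros. Fix $K\subseteq[n]$ with $|K|=k$ and a weight vector $(\eta_i)_{i\in K}\in\Delta_k$. Define $\theta\in\Delta_n$ by $\theta_i=\eta_i$ for $i\in K$ and $\theta_i=0$ for $i\notin K$. Then $\sum_{i=1}^n\theta_iX_i=\sum_{i\in K}\eta_iX_i$. For $i\notin K$ we have $\VaR_p(0\cdot X_i)=\VaR_p(0)=0$. Hence \eqref{eq:main2} for $(X_1,\dots,X_n)$ with this $\theta$ reads exactly as \eqref{eq:main2} for $(X_i)_{i\in K}$ with weights $\eta$, for every $p\in(0,1)$. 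Since $\eta$ is arbitrary, $(X_i)_{i\in K}\in\mathcal L^{\rm WU}_k$. Equivalently, one can apply Lemma \ref{lem:st}(ii): the comonotonic copy of the subvector is the corresponding subvector of a comonotonic copy, and the zero weights drop out on both sides. The case $k=1$ is trivial.

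For the failure of the property for $\mathcal L^{\rm U}_n$ with $n\ge3$, the plan is to exploit the fact that UVS only controls the full sum. A component can therefore cancel the deficiency of a sub-sum, and the construction should be built around that.

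\begin{itemize}
\item Take $X_1,X_2$ integrable, non-degenerate and not comonotonic, for instance iid standard normal. By Proposition \ref{prop:como}, $(X_1,X_2)\notin\mathcal L^{\rm U}_2$.
\item Choose $X_3=-(X_1+X_2)$. Then $X_1+X_2+X_3=0$, so UVS of $(X_1,X_2,X_3)$ requires $0\ge \VaR_p(X_1)+\VaR_p(X_2)+\VaR_p(X_3)$ for all $p\in(0,1)$.
\item With $X_1,X_2$ iid $N(0,1)$ we have $X_3\sim N(0,2)$. Writing $z_p=\Phi^{-1}(p)$, the right-hand side equals $(2+\sqrt2)z_p$, which is positive for $p>1/2$. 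So this choice fails.
\end{itemize}

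To fix this, reverse the roles and make the dropped component carry the superadditivity. Take $X_1,X_2$ with finite positive parts, independent and non-degenerate, so that $(X_1,X_2)\notin\mathcal L^{\rm U}_2$ by Proposition \ref{prop:como}. Then add an $X_3$, for example a huge-variance or infinite-mean term, designed so that the three-dimensional sum dominates the comonotonic sum.

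A cleaner route is to let $X_3$ be a constant $c$ and take $(X_1,X_2)$ with $\VaR_p(X_1+X_2)\ge\VaR_p(X_1)+\VaR_p(X_2)$ failing. This does not work either, because a constant changes both sides equally.

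The working idea is to use dependence. Let $(Y_1,Y_2)$ be iid Pareto$(1)$, which satisfies strict UVS, and set $X_1=Y_1$, $X_2=-Y_1$, $X_3=Y_2$.

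\begin{itemize}
\item The full sum is $Y_2$, so UVS requires $\VaR_p(Y_2)\ge\VaR_p(Y_1)+\VaR_p(-Y_1)+\VaR_p(Y_2)$. This reduces to $\VaR_p(Y_1)+\VaR_p(-Y_1)\le 0$.
\item For continuous $Y_1$ we have $\VaR_p(-Y_1)=-\VaR_{1-p}(Y_1)$, so the condition becomes $\VaR_p(Y_1)\le\VaR_{1-p}(Y_1)$. This fails for $p>1/2$.
\end{itemize}

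So this construction fails as well. The fallback is to return to Example \ref{ex:stp}. There $(X_1,X_2)$ iid St.~Petersburg is in $\mathcal L^{\rm U}_2$, while $(0.1X_1,0.9X_2)$ is not. The main obstacle is finding a third component that restores UVS for the triple while the pair $(0.1X_1,0.9X_2)$ fails it. The candidate to try is $X_3=0.9X_1$, so that $0.1X_1+X_3=X_1$ and the full sum is $X_1+0.9X_2$. One then checks $\VaR_p(X_1+0.9X_2)\ge \VaR_p(X_1)+0.9\VaR_p(X_2)$ for all $p\in(0,1)$, where the right-hand side uses that $0.1X_1$ and $0.9X_1$ are comonotonic. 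This check would use the UVS of $(X_1,X_2)$ from Example \ref{ex:stp}, and I would verify it by a discrete computation or the figure-style argument. If it holds, then $(0.1X_1,0.9X_2,0.9X_1)\in\mathcal L^{\rm U}_3$ but its subvector $(0.1X_1,0.9X_2)\notin\mathcal L^{\rm U}_2$. Adding constant components then extends the example to all $n\ge3$.
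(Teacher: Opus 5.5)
You prove the first claim by padding the weights with zeros, which is correct and is exactly the paper's argument. The second claim, however, is not proved. Each construction you try is either abandoned or left conditional (``if it holds''), and the final candidate $(0.1X_1,0.9X_2,0.9X_1)$ with iid St.~Petersburg $X_1,X_2$ in fact fails.

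\begin{itemize}
\item UVS of this triple is equivalent to $\VaR_p(X_1+0.9X_2)\ge 1.9\,\VaR_p(X_1)$ for all $p$.
\item At $p=0.94$ we have $\VaR_{0.94}(X_1)=32$, since $\p(X_1\le 16)=15/16<0.94$. So the right-hand side is $60.8$.
\item Every pair $(X_1,X_2)\in\{2,4,8,16,32\}^2$ other than $(32,32)$ gives $X_1+0.9X_2\le 46.4$, and $(X_1,X_2)=(2,64)$ gives $59.6$. Hence $\p(X_1+0.9X_2\le 59.6)\ge 15/16+1/128>0.94$.
\item Therefore $\VaR_{0.94}(X_1+0.9X_2)\le 59.6<60.8$.
\end{itemize}

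This is not a numerical accident. Take any $r\in(0,1)$ and $m\ge 1$. The pairs in $\{2,4,\dots,2^m\}^2\setminus\{(2^m,2^m)\}$ alone contribute probability $1-2^{1-m}$ to $\{X_1+rX_2<(1+r)2^m\}$. That is exactly the largest value UVS permits as $p\downarrow 1-2^{1-m}$. The extra atom $\{X_1=2,\,X_2=2^{m+1}\}$ falls below the threshold as soon as $(1-r)2^m>2$, which pushes the probability over the bound. So iid St.~Petersburg lotteries satisfy UVS only at equal weights, and with no slack. Invoking the UVS of $(X_1,X_2)$ from Example~\ref{ex:stp} therefore cannot give your triple.

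The missing idea is a mechanism that makes the full sum much heavier than the sub-sums. The paper gets it from mutually exclusive supports rather than from independence.

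\begin{itemize}
\item Take $Y\sim{\rm Pareto}(1/2)$ and a partition $A_1,A_2,A_3$ of $\Omega$ into events of probability $1/3$, independent of $Y$. Set $X_i=(Y+3)\id_{A_i}$.
\item Then $X_1+X_2+X_3=Y+3$, while each $X_i$ has an atom of mass $2/3$ at $0$. So $\VaR_p(X_i)=0$ for $p\le 2/3$, and $\VaR_p(X_i)=(1-p)^{-2}/9+3$ for $p>2/3$.
\item The triple satisfies UVS, because $\VaR_p(Y+3)-3\VaR_p(X_1)=\frac{2}{3}(1-p)^{-2}-6>0$ for $p>2/3$, and the case $p\le 2/3$ is trivial.
\item The pair sum $(Y+3)\id_{A_1\cup A_2}$ has $\VaR_p=\frac{4}{9}(1-p)^{-2}+3$ for $p>1/3$. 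At $p=0.7$ the difference $\frac{2}{9}(1-p)^{-2}-3$ is negative, so $(X_1,X_2)\notin\mathcal L^{\rm U}_2$.
\item Adding zero components handles $n>3$.
\end{itemize}

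The shift by $3$ is what breaks the pair: it is counted twice on the marginal side but only once in $X_1+X_2$. For the triple, the heavier tail of the full sum absorbs it.
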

\begin{proof}
The WUVS case is straightforward by taking $\theta_i=0$ for $i \in [n] \setminus K$.  For UVS, we provide a counterexample.
Let $Y\sim \mathrm{Pareto}(1/2)$ and $U \sim \mathrm{U}(0,1)$ be independent. We take $X_1=(Y+3) \id_{\{0<U \le 1/3\}}$, $X_2=(Y+3) \id_{ \{1/3< U\le 2/3\}}$ and $X_3=(Y+3) \id_{\{2/3<U\le 1\}}$. By construction, we have 
$\VaR_p(X_1+X_2+X_3)=\VaR_p(Y+3)=\VaR_p(Y)+3=(1-p)^{-2}+3$ and $\VaR_{p}(X_i)=\VaR_{3p-2}(Y+3)=(1-p)^{-2}/9+3$ for $2/3<p<1$ and $\VaR_{p}(X_i)=0$ for $0<p\le 2/3$. Hence, it is clear that $\VaR_p(X_1+X_2+X_3)\ge \VaR_p(X_1)+\VaR_{p}(X_2)+\VaR_p(X_3)$ for $0<p \le 2/3$. For  $2/3<p<1$, we have 
\begin{align*}&\VaR_p(X_1+X_2+X_3)-\VaR_p(X_1)-\VaR_{p}(X_2)-\VaR_p(X_3)=\frac{2}{3}(1-p)^{-2}-6>0.
\end{align*}
Therefore, we have $\VaR_p(X_1+X_2+X_3)\ge \VaR_p(X_1)+\VaR_{p}(X_2)+\VaR_p(X_3)$ for all $p\in (0,1)$ and $(X_1, X_2, X_3) \in \mathcal{L}^{\rm U}_3$. On the other hand, for $p \in (1/3,1)$, we have
$\VaR_p(X_1+X_2)=\VaR_{(3p-1)/2} (Y+3)=4(1-p)^{-2}/9+3$.
We can check that, when $p=0.7$, 
$$\VaR_p(X_1+X_2)-\VaR_p(X_1)-\VaR_p(X_2)=\frac{2}{9}(1-p)^{-2}-3<0.$$
Hence $(X_1, X_2) \notin \mathcal{L}^{\rm U}_2$. For the case $n > 3$, we can take $X_i=0$ for $i>3$.
\end{proof}

\subsection{Implications for decision making}

The stochastic dominance relation  in Lemma \ref{lem:st} further leads to the optimal decision in a portfolio optimization problem.  
Let $\rho: \mathcal X^\rho \to \R$ be a risk measure defined on a convex cone $\mathcal X^\rho\subseteq \mathcal L_1^0$ that satisfy the following properties:
\begin{enumerate}
\item[(a)] Monotonicity: $\rho(X)\le \rho(Y)$ if $X\le_{\rm st}Y$;
\item[(a*)] Strict monotonicity: $\rho$ is monotone and $\rho(X)<\rho(Y)$ if $\p(X<Y)=1;$
% $\rho(X)\le \rho(Y)$ if $X\le_{\rm st}Y$ and $\rho(X)= \rho(Y)$ if $X\laweq Y$;
\item[(b)] Comonotonic additivity: $\rho(X+Y)=\rho(X)+\rho(Y)$ if $X$ and $Y$ are comonotonic.
\end{enumerate}
On the space of bounded random variables, \cite{S86} showed that a risk measure that satisfies (a) and (b) follows a Choquet integral form; moreover, distortion risk measures are the only class of law-invariant risk measures that satisfy (a), (b), and $\rho(1)=1$. 
This can be extended to general convex cones (e.g., \citealp{WWW20}).
The class of distortion risk measures includes VaR, ES, and Range-VaR \citep{cont2010robustness} as special cases. On the space of bounded random variables, the two properties imply \citep{S86}
\begin{enumerate}
\item[(c)] Positive homogeneity: $\rho(cX)=c\rho(X)$ for all $c>0$. %\com{I don't think we need this. It is implied by (a)--(b) anyway}
\end{enumerate}
An immediate implication for random vectors satisfying UVS is that any monotone and comonotonic-additive risk measure is also superadditive for such random vectors.
\begin{proposition}\label{prop:sup}
Let $\rho$ be a law-invariant risk measure satisfying (a)--(b) and $(X_1,\dots,X_n) \in (\mathcal{X}^\rho)^n$ satisfy UVS. Then $\rho(\sum_{i=1}^nX_i)\ge \sum_{i=1}^n\rho(X_i)$.
\end{proposition}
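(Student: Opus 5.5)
The plan is to combine Lemma \ref{lem:st}(i) with the three properties (a), (b) and law invariance of $\rho$. Let $(X_1^c,\dots,X_n^c)$ be a comonotonic copy of $(X_1,\dots,X_n)$, constructed as $X_i^c=F_i^{-1}(U)$ for a single uniform random variable $U$ on $(0,1)$ (the space is atomless, so such a $U$ exists). Since $X_i^c\laweq X_i$ and $\rho$ is law-invariant, $\rho(X_i^c)=\rho(X_i)$ for each $i\in[n]$. Because the $X_i^c$ are increasing functions of the same $U$, every partial sum $X_1^c+\dots+X_k^c$ is itself an increasing function of $U$. Hence $X_1^c+\dots+X_k^c$ and $X_{k+1}^c$ are comonotonic.

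The first step is to apply comonotonic additivity (b) inductively to these pairs. This gives
\[
\rho\Big(\sum_{i=1}^n X_i^c\Big)=\sum_{i=1}^n\rho(X_i^c)=\sum_{i=1}^n\rho(X_i).
\]
The second step uses UVS of $(X_1,\dots,X_n)$. By Lemma \ref{lem:st}(i), it gives $\sum_{i=1}^n X_i^c\le_{\rm st}\sum_{i=1}^n X_i$. Monotonicity (a) then yields $\rho(\sum_{i=1}^n X_i^c)\le \rho(\sum_{i=1}^n X_i)$. Combining the two displays gives the claim $\rho(\sum_{i=1}^n X_i)\ge\sum_{i=1}^n\rho(X_i)$.

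The only point needing care is domain membership: every random variable fed to $\rho$ must lie in $\mathcal X^\rho$. The sum $\sum_i X_i$ lies in $\mathcal X^\rho$ because $\mathcal X^\rho$ is a convex cone containing each $X_i$. For the comonotonic copies, I would use that domains of law-invariant risk measures are law-invariant, i.e. closed under equality in distribution, so $X_i^c\in\mathcal X^\rho$. The cone property then puts the partial sums $\sum_{i\le k}X_i^c$ in $\mathcal X^\rho$ as well. If law-invariance of the domain is not part of the standing convention, I would state it as an implicit assumption, since law invariance of $\rho$ only makes sense on such a domain. Beyond this bookkeeping, I do not expect a genuine obstacle; the argument is a direct consequence of the stochastic dominance representation in Lemma \ref{lem:st}.
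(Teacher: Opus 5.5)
Your proof is correct and follows the paper's argument. The paper also combines Lemma~\ref{lem:st}(i) with law invariance, comonotonic additivity (b) and monotonicity (a) in exactly this order. Your inductive application of (b) to the partial sums and your remark on membership of the comonotonic copy in $\mathcal X^\rho$ spell out details the paper leaves implicit.
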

\begin{proof}
By Lemma \ref{lem:st}, $X_1^c+\dots+X_n^c\le_{\rm st} X_1+\dots+X_n$ where $(X_1^c,\dots,X_n^c)$ is a comonotonic copy of $(X_1,\dots,X_n)$. Then
$\sum_{i=1}^n\rho(X_i)=\sum_{i=1}^n\rho(X_i^c)=\rho(\sum_{i=1}^nX_i^c)\le \rho(\sum_{i=1}^nX_i)$.
\end{proof}

Suppose that an agent has the following minimization problem 
$$\min_{(\theta_1,\dots,\theta_n)\in\Delta_n}\rho\(\sum_{i=1}^n\theta_iX_i\),$$
where $(X_1,\dots,X_n)$ satisfies WUVS.
For $i\in [n]$, let $\mathbf e_{i,n} $ be the $i$th column vector of the $n\times n$ identity matrix. The proposition below shows that under mild conditions, no diversification is optimal for the agent who faces extremely heavy-tailed risks. 
\begin{proposition}\label{prop:rm}
Let  $\rho$ be a law-invariant risk measure satisfying (a)--(c) and $(X_1,\dots,X_n) \in (\mathcal{X}^\rho)^n$ satisfy WUVS.
    % Suppose that $(X_1,\dots,X_n)$ satisfies WUVS and $\rho$ is a law-invariant risk measure such that it satisfies (a)--(b)   and $\rho(X_i)\in\R$ for all $i\in[n]$.  
      Let $I^*=\argmin_{i\in [n]}\rho(X_i)$. Then 
    $$\mathbf e_{i^*,n}\in \argmin_{(\theta_1,\dots,\theta_n)\in\Delta_n}\rho\(\sum_{i=1}^n\theta_iX_i\)~~~ \mbox{ for each $i^*\in I^*$}.$$ 
   If  further $\rho$ satisfies (a*) and $(X_1,\dots,X_n)$ satisfies strict WUVS, then    $\mathbf e_{i^*,n}$  for $i^*\in  I^*$ are the only optimizers.
\end{proposition}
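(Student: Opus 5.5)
The plan is to compare an arbitrary portfolio with the comonotonic portfolio that has the same weights, using Lemma \ref{lem:st}(ii), and then use comonotonic additivity together with positive homogeneity to reduce the objective to a convex combination of the values $\rho(X_i)$.

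First step: fix $(\theta_1,\dots,\theta_n)\in\Delta_n$ and let $(X_1^c,\dots,X_n^c)$ be a comonotonic copy of $(X_1,\dots,X_n)$. WUVS and Lemma \ref{lem:st}(ii) give $\sum_i\theta_iX_i^c\le_{\rm st}\sum_i\theta_iX_i$. The variables $\theta_iX_i^c$ are comonotonic, since each is an increasing function of a common uniform variable when $\theta_i\ge0$. Using law invariance, (a), (b) and (c), we then get
\begin{equation*}
\rho\Big(\sum_{i=1}^n\theta_iX_i\Big)\ge\rho\Big(\sum_{i=1}^n\theta_iX_i^c\Big)=\sum_{i=1}^n\rho(\theta_iX_i)=\sum_{i=1}^n\theta_i\rho(X_i)\ge\min_{i\in[n]}\rho(X_i)=\rho(X_{i^*}).
\end{equation*}
Terms with $\theta_i=0$ are handled separately: each contributes $\rho(0)=0$. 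This follows from (b), because $0$ is comonotonic with itself, so $\rho(0)=\rho(0)+\rho(0)$. We also need $X_i^c$ and $\theta_iX_i$ to lie in $\mathcal X^\rho$; law invariance and the cone property supply this. Since $\rho(\sum_i\theta_iX_i)=\rho(X_{i^*})$ at $\theta=\mathbf e_{i^*,n}$, this vertex attains the minimum, which proves the first claim.

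Second step (uniqueness): take $\theta\in\Delta_n$ that is not a vertex, so at least two of its components are positive. Strict WUVS gives $\VaR_p(\sum_i\theta_iX_i)>\VaR_p(\sum_i\theta_iX_i^c)$ for all $p\in(0,1)$. Let $U$ be uniform on $(0,1)$ and set $Y=\VaR_U(\sum_i\theta_iX_i)$ and $Z=\VaR_U(\sum_i\theta_iX_i^c)$. Then $Y\laweq\sum_i\theta_iX_i$, $Z\laweq\sum_i\theta_iX_i^c$, and $\p(Z<Y)=1$. Law invariance and (a*) give $\rho(\sum_i\theta_iX_i)=\rho(Y)>\rho(Z)=\sum_i\theta_i\rho(X_i)\ge\rho(X_{i^*})$, so no non-vertex $\theta$ is optimal. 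For a vertex $\mathbf e_{j,n}$ the objective equals $\rho(X_j)$, so the vertex is optimal exactly when $j\in I^*$.

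The main technical point is the coupling in the strict case: (a*) is stated for an almost sure strict inequality, so we must build a pair of random variables with the required laws that are almost surely strictly ordered. The quantile coupling above does this. The remaining checks are only domain issues: whether $Y$ and $Z$ belong to $\mathcal X^\rho$, which holds by law invariance of the domain, and whether $\rho(0)=0$.
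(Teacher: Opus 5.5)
Your proposal is correct and follows essentially the same route as the paper. You compare each portfolio with its comonotonic copy via Lemma \ref{lem:st}(ii), apply (a)--(c) to get $\sum_i\theta_i\rho(X_i)\ge\rho(X_{i^*})$, and for uniqueness use an almost surely strict coupling of $\sum_i\theta_iX_i$ and $\sum_i\theta_iX_i^c$; the paper takes this coupling from Remark \ref{rem:strict}, while you build it explicitly with the quantile transform, and your handling of $\rho(0)=0$ for zero weights and of domain membership supplies details the paper leaves implicit.
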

\begin{proof}
    Let $(X_1^c,\dots,X_n^c)$ be a comonotonic copy of $(X_1,\dots,X_n)$. By Lemma \ref{lem:st}, we have 
    $\theta_1X_1^c+\dots+\theta_n X_n^c\le_{\rm st} \theta_1 X_1+\dots +\theta_n X_n.$ Hence,
    \begin{align*}
        \rho\(\sum_{i=1}^n\theta_iX_i\)\ge  \rho\(\sum_{i=1}^n\theta_iX_i^c\)=\sum_{i=1}^n\rho\(\theta_iX_i^c\)=\sum_{i=1}^n\theta_i\rho\(X_i\)\ge \min_{i\in[n]}\rho(X_i)=\rho(X_{i^*}).
    \end{align*}
    The two equalities are due to comonotonic additivity and positive homogeneity of $\rho$. If $\rho$ is strictly monotone and $(X_1, \dots, X_n)$ satisfies strict WUVS, the first inequality is strict when $(\theta_1,\dots,
    \theta_n)$ has at least two positive components as implied in Remark \ref{rem:strict}. Therefore,    $\mathbf e_{i^*,n}$  for $i^*\in  I^*$ are the only optimizers.
\end{proof}
\begin{remark}
  For most choices of $\mathcal X_\rho$, property (c) automatically follows from (a)--(b). Here we include (c) in Proposition \ref{prop:rm} because our assumption on $\mathcal X_\rho$ is minimal. 
  
\end{remark}

Proposition \ref{prop:rm} suggests that, regardless of the risk measure $\rho$ used, for a vector of risks satisfying WUVS, an optimal portfolio  always concentrates on one asset that has the lowest risk. 
 With strict WUVS, such concentrated portfolios are the only optimal ones.
Therefore, portfolio diversification effect does not take place in the presence of WUVS.

\section{UVS and WUVS for some classes of risks}\label{sec:ind}

This section proceeds to study UVS and WUVS for random vectors with independent  or  {negatively dependent}  components.  In particular, we investigate the marginal distributions of a random vector $(X_1,\dots,X_n)$ that satisfies UVS and WUVS, where $X_1,\dots,X_n$ are independent or  {negatively dependent}. We mainly focus on WUVS, which  is stronger than UVS. 
% Equivalently, by Lemma \ref{lem:st}, we focus on independent risks $X_1,\dots,X_n$ such that the following stochastic dominance relation holds:
% \begin{equation*}\label{eq:SD}
% X_1^c+\dots+X_n^c\le_{\rm st} X_1+\dots+X_n, \tag{\trd{SD}}
% \end{equation*}
% where $(X_1^c,\dots,X_n^c)$ are comonotonic copies of  $(X_1,\dots,X_n)$.

\subsection{Completely subscalable risks}\label{sec:cs}

% In this paper, all random variables are assumed to be non-negative unless stated otherwise. 
We show that $(X_1,\dots,X_n)$ satisfies WUVS if $X_1,\dots,X_n$ are independent completely subscalable risks, a class introduced by  \cite{V25}.
We first define the class of completely subscalable risks.
\begin{definition}%[\cite{V25}]
For a non-negative random variable $X\sim F$, we say $X$ (or $F$) is \emph{completely subscalable}  or belongs to class $\mathcal{CS}$, denoted by $X\in\mathcal{CS}$ (or $F \in \mathcal{CS}$), if $x\overline F(x)$ is increasing on $(0,\infty)$.
\end{definition}
 By Lemma 6.4 of \cite{V25}, if $X \in \mathcal{CS}$,
$\theta \p(X>x) \le \p(\theta X>x)$ for all $\theta \in (0,1)$ and  $x>0$.
\cite{V25} described this as ``scaling down any $\mathcal{CS}$ risk by a
factor $\theta$ reduces the probability of exceeding $x$ less than proportionally." \cite{V25} showed that if an iid random vector $(X_1,\dots,X_n)$ has $\mathcal{CS}$ marginal risks,  $(X_1,\dots,X_n)\in\mathcal{L}^{\rm WU}_n$. Examples of $\mathcal{CS}$ distributions include the Pareto distribution and the Fr\'echet distribution, both with infinite mean.
Some simple and known properties of $\mathcal{CS}$ risks are collected below. 
\begin{proposition}\label{pro:cs}
Let $X\sim F$ and $X\in \mathcal{CS}$. Then
\begin{enumerate}[(i)]
% \item $\theta \overline F(x)\le\overline F(x/\theta)$ for all $x\in(0,\infty)$ and $\theta\in(0,1)$;
\item $\overline F$ is continuous on $(0,\infty)$;
\item $g(X)\in \mathcal{CS}$ where $g$ is an increasing and convex function with $g(0)=0$;
\item if $X\sim F$ has a density function $f$, then the hazard rate $f(x)/{\overline F(x)}\le 1/x$ for almost every $x\in(0,\infty)$.
\end{enumerate}
\end{proposition}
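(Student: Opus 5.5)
Write $h(x)=x\overline F(x)$, which is increasing on $(0,\infty)$ by the definition of $\mathcal{CS}$. Each of the three statements will be derived from this monotonicity together with the fact that $\overline F$ is decreasing and right-continuous.

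For (i), right-continuity is automatic, so the task is to exclude a downward jump of $\overline F$ at some $x_0>0$. Suppose $\overline F(x_0-)>\overline F(x_0)$. For $x<x_0$ we have $h(x)\le h(x_0)$, that is, $x\overline F(x)\le x_0\overline F(x_0)$. Letting $x\uparrow x_0$ gives $x_0\overline F(x_0-)\le x_0\overline F(x_0)$. Since $x_0>0$, this forces $\overline F(x_0-)\le \overline F(x_0)$, a contradiction. Hence $\overline F$ has no atoms in $(0,\infty)$.

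For (ii), I will first handle the degenerate case $g\equiv 0$, which is trivial. Otherwise $g$ is increasing and convex with $g(0)=0$, so $x\mapsto g(x)/x$ is increasing on $(0,\infty)$. Let $G$ be the distribution function of $g(X)$, and set $x=g^{-1+}(y)$ for $y>0$ in the range, i.e.\ the right inverse. Then $\overline G(y)=\p(X>g^{-1+}(y))$ up to the flat pieces of $g$, and these cause no trouble because $\overline F$ is continuous by (i). The key computation is
\begin{equation*}
y\,\overline G(y)=\frac{g(x)}{x}\cdot x\,\overline F(x),\qquad x=g^{-1+}(y).
\end{equation*}
Both factors are nonnegative and increasing in $x$, and $x$ is increasing in $y$, so the product is increasing in $y$. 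The step that needs the most care, and the one I expect to be the main obstacle, is bookkeeping rather than anything conceptual. There are three things to check: intervals on which $g$ is constant, where I use $\overline F$ continuous and $\p(X\in\text{flat piece})$ to show $\overline G$ is well defined via the right inverse; values $y\ge \sup g$, where $\overline G(y)=0$ and monotonicity persists because $y\overline G(y)$ jumps down to zero only at the top; and $y$ below the range, i.e.\ $y$ near $0$, where $\overline G(y)=\p(g(X)>y)$ and I need $g(x)/x$ positive. If $g$ vanishes on some $[0,a]$, then $y\overline G(y)=y\p(X>a')$ there, which is increasing. The fallback is to check monotonicity directly: for $0<y_1<y_2$, with $x_i$ chosen so that $\{g(X)>y_i\}=\{X>x_i\}$ a.s.\ and $y_i=g(x_i)$, apply the product formula above.

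For (iii), $h$ is increasing, so $\overline F$ is absolutely continuous on $(0,\infty)$ with derivative $-f$ a.e. Consequently $h$ is differentiable a.e. with
\begin{equation*}
h'(x)=\overline F(x)-xf(x)\ge 0\quad\text{a.e.}
\end{equation*}
because $h$ is increasing. This gives $f(x)/\overline F(x)\le 1/x$ wherever $\overline F(x)>0$. Where $\overline F(x)=0$ the hazard rate is taken as zero or is vacuous, since $f=0$ a.e.\ there.
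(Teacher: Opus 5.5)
Your proof is correct. It differs from the paper mainly in being self-contained. The paper cites Lemmas 6.5 and 6.6 of \cite{V25} for (i) and (ii) and calls (iii) trivial. You instead argue directly:
\begin{enumerate}[(i)]
\item by passing to the left limit in $x\overline F(x)\le x_0\overline F(x_0)$;
\item by writing $y\overline G(y)=g(x_y)\overline F(x_y)$ with $x_y=g^{-1+}(y)$, and factoring $g(x)\overline F(x)=\frac{g(x)}{x}\cdot x\overline F(x)$ as a product of nonnegative increasing functions;
\item by differentiating $x\overline F(x)$ almost everywhere.
\end{enumerate}
The citation buys brevity. Your route shows that (ii) is just the chord-slope monotonicity of a convex $g$ with $g(0)=0$ combined with the defining property of $\mathcal{CS}$.

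Two expository points are worth tightening.

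In (ii), the bookkeeping you worry about does not arise. An increasing convex $g$ on $[0,\infty)$ with $g(0)=0$ is continuous: convexity gives $g(0+)\le g(0)$, and monotonicity gives the reverse. Such a $g$ is also either identically zero or unbounded. Hence for every $y>0$ the set $\{x\ge 0: g(x)\le y\}$ equals $[0,x_y]$ with $x_y>0$ and $g(x_y)=y$. So $\{g(X)>y\}=\{X>x_y\}$ holds exactly. Part (i) is not needed here. In fact, continuity of $\overline F$ would not repair a left-inverse mismatch on a flat piece of $g$ anyway, since $X$ may put positive mass there. The cases $y\ge \sup g$ and $y$ below the range are empty, and nonnegativity of the two factors, not positivity, is all the product argument uses.

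In (iii), the absolute continuity of $\overline F$ with $\overline F'=-f$ a.e.\ comes from the density hypothesis via Lebesgue's differentiation theorem, not from the monotonicity of $h$ as your wording suggests. Monotonicity of $h$ enters only afterwards, to give $h'\ge 0$ wherever $h$ is differentiable.
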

\begin{proof}
Statements (i) and (ii) can be found in Lemmas  6.5 and 6.6 of \cite{V25}. Statement (iii) is trivial.
\end{proof}

Theorem 4.2 of \cite{V25} shows that for independent $\mathcal{CS}$ risks $X_1,\dots,X_n$ and exposure vector $(\theta_1,\dots,\theta_n)\in\Delta_n$, 
\begin{equation}\label{eq:mixture}
X_1\id_{A_1}+\dots+X_n\id_{A_n}\le_{\rm st} \theta_1X_1+\dots+\theta_nX_n,
\end{equation}
where $A_1,\dots,A_n$ are mutually exclusive events, independent of $X_1,\dots,X_n$, and $\p(A_i)=\theta_i$ for all $i\in[n]$. The left-hand side of this inequality is a distribution mixture of $X_1,\dots,X_n$ since its distribution function is $\sum_{i=1}^n\theta_i\p(X_i\le x)$ for $x\ge 0$; a more general version of \eqref{eq:mixture} is shown for the class of super-Fr\'echet risks (see Section \ref{sec:invsub} for the definition) by \cite{CS25} where the generalized mean function is used to mix the distributions. In the case where $X_1,\dots,X_n$ are  iid and $\theta_1=\dots=\theta_n=1/n$, the above inequality becomes $nX_1\le_{\rm st} X_1+\dots+X_n$, and thus $(X_1, \dots, X_n)$ satisfies UVS. We will generalize this observation to non-identically distributed and independent $\mathcal{CS}$ risks. 
 We first show that $(X_1,X_2)$ with independent $\mathcal{CS}$ risks satisfies WUVS.
For simplicity, we say that 
$X$ (or its distribution function $F$) is \emph{doubly continuous} if
$F$ and $F^{-1}$ are both continuous.
 This condition is equivalent to    $F^{-1}$ being
  strictly increasing and continuous; see Appendix A.1 of \cite{MFE15}.

\begin{lemma}\label{lem:strictst}
Let $(X_1, \dots, X_n) \in \mathcal L_n^0$ with doubly continuous components,   and $(X_1^c,\dots,X_n^c)$ be a comonotonic copy of $(X_1,\dots,X_n)$. Then 
 $(X_1, \dots, X_n)$ satisfies strict WUVS if and only if
$$\p(\theta_1X_1^c+\dots+\theta_nX_n^c>t)<\p(\theta_1X_1+\dots+\theta_nX_n>t)$$ for  all $t\in (\sum_{i=1}^n\theta_i \essinf X_i, \sum_{i=1}^n \theta_i  \esssup X_i)$ and all $(\theta_1, \dots, \theta_n) \in \Delta_n$ with at least two positive components.
\end{lemma}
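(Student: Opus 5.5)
Fix $(\theta_1,\dots,\theta_n)\in\Delta_n$ with at least two positive components. Write $S=\sum_{i}\theta_iX_i$ and $S^c=\sum_i\theta_iX_i^c$, with distribution functions $G$ and $G^c$. The plan is to reduce strict WUVS for this $\theta$ to a pointwise comparison of $G$ and $G^c$, and to handle the quantile of $S^c$ through its explicit comonotonic form.

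\textbf{Step 1: the comonotonic sum.} Set $X_i^c=F_i^{-1}(U)$ with $U$ uniform. Then $S^c=h(U)$ with $h(u)=\sum_i\theta_iF_i^{-1}(u)$. Since every $F_i^{-1}$ is continuous and strictly increasing, so is $h$. Hence $G^c$ is continuous and $\VaR_p(S^c)=h(p)=\sum_i\theta_i\VaR_p(X_i)$. Moreover, as $p$ ranges over $(0,1)$, the value $t_p:=h(p)$ sweeps out exactly the open interval
\[
I_\theta=\Bigl(\sum_i\theta_i\essinf X_i,\ \sum_i\theta_i\esssup X_i\Bigr),
\]
and $\p(S^c>t_p)=1-p$, that is, $G^c(t_p)=p$.

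\textbf{Step 2: strict WUVS in terms of distribution functions.} By the definition of VaR, $\VaR_p(S)>t_p$ holds if and only if $G(t_p)<p$. The forward implication is immediate. For the converse, right-continuity of $G$ gives $G(t_p+\epsilon)<p$ for some $\epsilon>0$, so $\VaR_p(S)\ge t_p+\epsilon$. Substituting $p=G^c(t_p)$, strict WUVS at level $p$ becomes $G(t_p)<G^c(t_p)$, which is the same as
\[
\p(S^c>t_p)<\p(S>t_p).
\]
Because $p\mapsto t_p$ is a bijection from $(0,1)$ onto $I_\theta$, requiring strict WUVS for all $p\in(0,1)$ is equivalent to requiring this strict survival inequality for all $t\in I_\theta$. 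Applying this to each admissible $\theta$ gives both directions of the lemma at once.

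\textbf{Main obstacle.} The delicate points are the use of double continuity and the exact endpoints of $I_\theta$. Continuity of each $F_i$ is not enough on its own: I need $h$ to be strictly increasing and continuous so that the map $p\mapsto t_p$ is a genuine bijection onto $I_\theta$. Without this, flat pieces or jumps of some $F_i^{-1}$ would make $G^c$ have atoms or flat stretches, and the correspondence between levels $p$ and thresholds $t$ would break down.

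For the endpoints, I will check that $h(0+)=\sum_i\theta_i\essinf X_i$ and $h(1-)=\sum_i\theta_i\esssup X_i$, allowing infinite values. This uses $F_i^{-1}(0+)=\essinf X_i$, $F_i^{-1}(1-)=\esssup X_i$, and monotone convergence of the finite sum, with the convention that terms with $\theta_i=0$ are dropped.

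Finally, the non-strict part of the argument does not need $S$ to be continuous, since the comparison of VaR with $t_p$ only uses right-continuity of $G$.
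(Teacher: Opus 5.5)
Your proof is correct and follows essentially the same route as the paper. The paper first shows that for doubly continuous $X\sim F$ and arbitrary $Y\sim G$, $\overline F<\overline G$ on $(\essinf X,\esssup X)$ if and only if $\VaR_p(X)<\VaR_p(Y)$ for all $p$, using $F(F^{-1}(p))=p$, $F^{-1}(F(x))=x$ and $G^{-1}(u)>x\iff u>G(x)$. It then applies this to the doubly continuous sum $\sum_i\theta_iX_i^c$, whose VaR is $\sum_i\theta_i\VaR_p(X_i)$; this is exactly your $t_p=h(p)$ bijection argument, and your explicit check of $h(0+)$ and $h(1-)$ (with the $\theta_i=0$ convention) fills in the endpoint identification of $I_\theta$ that the paper leaves implicit.
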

\begin{proof}
 Let $X\sim F$ be doubly continuous and $Y\sim G$. We first show that $\overline F(t)<\overline G(t)$ for all $t\in(\essinf X,\esssup X)$ if and only if $\VaR_p(X)<\VaR_p(Y)$ for all $p\in(0,1)$.  Since $F^{-1}$ is strictly increasing and continuous, $F^{-1}(p)\in(\essinf X,\esssup X)$ and $F(F^{-1}(p))=p$ for any $p\in(0,1)$, and $F^{-1}(F(x))=x$ for any $x\in(\essinf X,\esssup X)$. Note that for any cumulative distribution function $H$, $H^{-1}(u)\le x$ if and only if $u\le H(x)$ where $u\in(0,1)$ and $x$ is in the domain of $H$. This can be equivalently stated as $H^{-1}(u)> x$ if and only if $u> H(x)$.  To see the ``$\Longrightarrow$'' direction, for any $p\in(0,1)$, we have 
$G\(F^{-1}(p)\)<F\(F^{-1}(p)\)=p,$
and thus $G^{-1}(p)>F^{-1}(p)$. To see the ``$\Longleftarrow$'' direction, for any $x\in(\essinf X,\esssup X)$,  we have 
$G^{-1}(F(x))>F^{-1}(F(x))=x,$
and thus $G(x)<F(x)$. 

For all $p\in(0,1)$, we have  
$\VaR_p\(\sum_{i=1}^n\theta_iX_i^c\)=\sum_{i=1}^n\VaR_p\(\theta_iX_i^c\)=\sum_{i=1}^n\VaR_p\(\theta_iX_i\)$. 
It is clear that $\theta_1 X_1^c+\dots+\theta_n X_n^c$ is doubly continuous. We have the desired result.
\end{proof} 
 
\begin{lemma}\label{lem:cs}
 Let  $X_1, X_2\in\mathcal{CS}$ be independent.  Then, $(X_1, X_2)$ satisfies WUVS. {If, furthermore, $X_1$ and $X_2$ are doubly continuous, then  $(X_1, X_2)$ satisfies strict WUVS.}
\end{lemma}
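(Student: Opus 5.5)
The plan is to reduce the claim to an explicit two-point comparison against a ``minimal'' $\mathcal{CS}$ tail. Throughout I will use Lemma \ref{lem:st}(ii) for the non-strict part and Lemma \ref{lem:strictst} for the strict part.

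\textbf{Setup.} Fix $\theta\in(0,1)$; the boundary cases $\theta\in\{0,1\}$ are trivial. Fix $p\in(0,1)$ and write $s=1-p$ and $x_i=\VaR_p(X_i)$. Put $a=\theta x_1$, $b=(1-\theta)x_2$ and $t=a+b=\VaR_p(\theta X_1^c+(1-\theta)X_2^c)$. We may assume $t>0$, since otherwise $x_1=x_2=0$ and the inequality is trivial because $X_1,X_2\ge 0$. By Proposition \ref{pro:cs}(i), each $\overline F_i$ is continuous on $(0,\infty)$, so $\overline F_i(x_i)=s$ whenever $x_i>0$. It suffices to show
$$\p\bigl(\theta X_1+(1-\theta)X_2>t\bigr)\ge s\quad\mbox{for every such }p,$$
since this yields $\VaR_p(\theta X_1+(1-\theta)X_2)\ge t$, which is WUVS at level $p$.

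\textbf{Key step.} I would extract from complete subscalability a pointwise lower bound on the tails:
$$\overline F_i(y)\ge s\min\left(1,\frac{x_i}{y}\right),\qquad y>0.$$
For $y\le x_i$ this is just monotonicity of $\overline F_i$. For $y\ge x_i$ it follows from $y\overline F_i(y)\ge x_i\overline F_i(x_i)=x_is$, which is the defining property of $\mathcal{CS}$.

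The right-hand side is exactly the survival function of $x_iZ_i$, where $Z_i=B_iP_i$ with $B_i\sim$ Bernoulli$(s)$ and $P_i\sim$ Pareto$(1)$ independent. Hence $X_i\ge_{\rm st}x_iZ_i$. Since $X_1$ and $X_2$ are independent, we may take $Z_1,Z_2$ independent and couple them so that $X_i\ge x_iZ_i$ almost surely. This gives
$$\p\bigl(\theta X_1+(1-\theta)X_2>t\bigr)\ge \p\bigl(aZ_1+bZ_2>a+b\bigr).$$

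\textbf{Computing the bound.} I would evaluate the right-hand side by splitting on $(B_1,B_2)$:
\begin{itemize}
\item On $\{B_1=B_2=1\}$, which has probability $s^2$, the event $\{aP_1+bP_2>a+b\}$ has probability one because $P_i>1$ almost surely.
\item On $\{B_1=1,B_2=0\}$ the event has probability $\p(aP_1>a+b)=a/(a+b)$.
\item On $\{B_1=0,B_2=1\}$ it has probability $b/(a+b)$.
\end{itemize}
Both mixed events have probability $s(1-s)$, and the two conditional probabilities sum to one. Summing gives exactly $s^2+s(1-s)=s$. This proves WUVS. The Pareto$(1)$ minimal tail is thus precisely the borderline case, consistent with Pareto$(1)$ being the extremal $\mathcal{CS}$ risk.

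\textbf{Strict part.} This is the step I expect to be the main obstacle. In the computation above every inequality can be an equality for the minimal variables, so strictness must come from the gap between $X_i$ and $x_iZ_i$. Under double continuity, Lemma \ref{lem:strictst} reduces the task to strict inequality of tail probabilities for $t$ in the open range. There each $x_i$ lies strictly inside $(\essinf X_i,\esssup X_i)$, so $s\in(0,1)$.

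First try: note that $x_iZ_i=0$ on $\{B_i=0\}$, while $X_i$ is not identically zero below its quantile, because $F_i^{-1}$ is strictly increasing. On the mixed event $\{B_1=1,B_2=0\}$, the coupling then gives $\theta X_1+(1-\theta)X_2\ge aP_1+(1-\theta)X_2$. Hence the probability of exceeding $a+b$ is strictly larger than $a/(a+b)$, provided $\p(X_2>0,\ X_2\le x_2)>0$ or a symmetric statement holds.

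Fallback: if $\essinf X_i$ may equal $0$ with an atom-free but thin lower part, I would instead get strictness from the case $x_i>0$ with $\overline F_i$ strictly above the bound on a set of positive Lebesgue measure. If instead the bound is attained, so that $\overline F_i(y)=sx_i/y$ on a whole interval, I would use the positive density of the convolution there. Either way the slack enters through a set of positive probability, and combining this with Lemma \ref{lem:strictst} yields strict WUVS.
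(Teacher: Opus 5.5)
Your argument is essentially the paper's proof. The coupling $X_i\ge x_iB_iP_i$ repackages the paper's split according to whether $U_1,U_2$ exceed the level. On a mixed event, only the component above its quantile is required to exceed the comonotonic quantile, and the $\mathcal{CS}$ bound $\overline F_i(y)\ge x_i(1-p)/y$ makes the total exactly $s^2+s(1-s)=s$. Your ``first try'' for strictness is the paper's observation that the first inequality in \eqref{eq:csproof} is strict because the discarded positive contribution of the other component carries positive probability. Two small fixes are needed:
\begin{enumerate}[(i)]
\item The proviso in that first try holds automatically under double continuity. Since $F_2(0)=0$ and $F_2^{-1}$ is strictly increasing, you get $x_1,x_2>0$ and $\p(0<X_2\le x_2)=p>0$. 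So the fallback paragraph, which appeals to a density you do not have, should simply be deleted.
\item A bound $\p(S>t)\ge s$ at a single $p$, with $S=\theta X_1+(1-\theta)X_2$, does not by itself give $\VaR_p(S)\ge t$; it fails, e.g., if $F_S\equiv p$ on $[t-1,t]$. Conclude instead via Lemma \ref{lem:st}, or apply your bound at levels $p'\uparrow p$ and use left-continuity of $p\mapsto\VaR_p(\theta X_1^c+(1-\theta)X_2^c)$.
\end{enumerate}
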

\begin{proof}
    Let  $X_1\sim F_1$ and $X_2\sim F_2$ be independent and $(X_1^c,X_2^c)$ be a comonotonic copy of $(X_1,X_2)$. Then $X_1+X_{2}$ has the same distribution as $F_1^{-1}(U_1)+F_{2}^{-1}(U_2)$ and $X_1^c+X_2^c$ has the same distribution as $F_1^{-1}(U)+F_2^{-1}(U)$, where $U_1,U_2,$ and $U$ are all uniform random variables on $(0,1)$ with $U_1$ and $U_2$ being independent.
     Note that for any cumulative distribution function $F$, $F^{-1}(u)\le x$ if and only if $u\le F(x)$ where $u\in(0,1)$ and $x$ is in the domain of $F$. Since stochastic order is preserved under increasing transforms,  it suffices to show 
 $$G(F_1^{-1}(U_1)+F_2^{-1}(U_2))\ge_{\rm st}U,$$
 where $G$ has left quantile function $G^{-1}=F_1^{-1}+F_2^{-1}$. Fix $x\in(0,1)$.
 Write $y=G^{-1}(x)$ and note that $F_1^{-1}(U_1)+F_2^{-1}(U_2)\ge y$ implies at least one of $U_1$ and $U_2$ is at least $x$. Assume $F_1^{-1}(x)>0$ and $F_2^{-1}(x)>0$.
 By independence of $U_1$ and $U_2$,   we have
\begin{align}\label{eq:csproof}
&\p\(G(F_1^{-1}(U_1)+F_2^{-1}(U_2))\ge x\) \nonumber\\
&=\p\(F_1^{-1}(U_1)+F_2^{-1}(U_2)\ge y \) \nonumber\\
& = \p(U_1, U_2>x) +\p(  U_1 \le x, F_2^{-1}(U_2)\ge y-F_1^{-1}(U_1)) +\p(  U_2 \le x, F_1^{-1}(U_1)\ge  y -F_2^{-1}(U_2)) \nonumber\\
& \ge \p(U_1 , U_2>x) +\p(  U_1 \le x, F_2^{-1}(U_2)>y)+\p(  U_2 \le x, F_1^{-1}(U_1)>y)\\
&=(1-x)^2+x \(\overline F_1(y)+\overline F_2(y)\) \nonumber\\ 
&\ge (1-x)^2+x \(\frac{F_1^{-1}(x)}{y}\overline F_1(F_1^{-1}(x))+\frac{F_2^{-1}(x)}{y}\overline F_2(F_2^{-1}(x))\) \nonumber\\
&=1-x.\nonumber
\end{align}
 The last inequality is because $F_1,F_2\in \mathcal{CS}$ and the last equality is due to the continuity of $F_1,F_2$ (Proposition \ref{pro:cs} (i)). If $F_i^{-1}(x)=0$, $\overline F_i(y)\ge (F_i^{-1}(x)/y)(1-x)$ trivially holds and so does the above inequality chain. Hence,  $(X_1, X_2)$ satisfies  UVS. Note that if $X_1, X_2\in \mathcal{CS}$, then $\theta_1 X_1, \theta_2 X_2 \in \mathcal{CS}$ for all $(\theta_1, \theta_2) \in \Delta_2$. Hence, $(X_1, X_2)$ satisfies WUVS. The strictness statement holds as inequality \eqref{eq:csproof} is strict provided that each $X_i$ is doubly continuous and $U_1$ and $U_2$ are independent. By  Lemma \ref{lem:strictst},  we obtain the desired result. 
\end{proof}
To extend Lemma \ref{lem:cs} to $n$-dimensional random vectors, we need the following result.
\begin{proposition}\label{prop:cs}
Let  $X_1, \dots, X_n\in\mathcal{CS}$ be comonotonic. Then, $X_1+\dots+X_n\in\mathcal{CS}$. 
\end{proposition}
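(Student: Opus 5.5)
We write the comonotonic vector as $X_i=F_i^{-1}(U)$ for a single uniform $U$ on $(0,1)$, so $S=X_1+\dots+X_n=G^{-1}(U)$ with $G^{-1}=F_1^{-1}+\dots+F_n^{-1}$. The idea is to reduce the $\mathcal{CS}$ property of $S$ to a statement about quantile functions, where comonotonic additivity makes it linear. We must show that $x\mapsto x\overline G(x)$ is increasing on $(0,\infty)$; clearly $S\ge 0$.

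\textbf{Step 1: a quantile characterization of $\mathcal{CS}$.} For a nonnegative $X\sim F$ with $\overline F$ continuous on $(0,\infty)$ (Proposition \ref{pro:cs}(i)), we show that $X\in\mathcal{CS}$ if and only if
\begin{equation*}
F^{-1}(p)(1-p)\le F^{-1}(q)(1-q)\quad\text{for all } p\le q \text{ in } (0,1) \text{ with } F^{-1}(p)>0,
\end{equation*}
that is, $u\mapsto (1-u)F^{-1}(u)$ is increasing. In effect, $x\overline F(x)$ is reparametrized by $u=F(x)$.

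For the forward direction, continuity gives $\overline F(F^{-1}(u))=1-u$ for $u$ with $F^{-1}(u)>0$. Then $(1-u)F^{-1}(u)=F^{-1}(u)\overline F(F^{-1}(u))$, and monotonicity follows from that of $x\overline F(x)$ and of $F^{-1}$.

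For the converse, take $0<x<y$. If $\overline F$ is flat on $[x,y]$ the inequality is trivial. Otherwise set $p=F(x)$ and $q=F(y)$; here $F^{-1}(p)\le x$ and $F^{-1}(q)\le y$, with equality attained on the range of $F$. A flat piece of $F$ on $[a,b]$ makes $x\overline F(x)$ increase on it, since $\overline F$ is constant and positive there. Hence $x\overline F(x)$ is increasing on intervals of constancy, and at the points of increase it equals $(1-u)F^{-1}(u)$. Combining these gives monotonicity on all of $(0,\infty)$.

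\textbf{Step 2: additivity.} Each $u\mapsto (1-u)F_i^{-1}(u)$ is increasing where positive, and $F_i^{-1}\ge 0$ is increasing, so the function $(1-u)F_i^{-1}(u)$ is increasing on all of $(0,1)$: it is zero up to a point and then increasing. Summing over $i$, the function $(1-u)G^{-1}(u)$ is increasing.

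\textbf{Step 3: continuity of $\overline G$.} The converse in Step 1 needs $\overline G(G^{-1}(u))=1-u$, i.e.\ $S$ has no atoms in $(0,\infty)$. If $G$ had an atom at $x>0$, then $G^{-1}$ would be constant equal to $x$ on an interval $(u_1,u_2]$. On that interval $(1-u)G^{-1}(u)=(1-u)x$ is strictly decreasing, contradicting Step 2. So $G$ is continuous on $(0,\infty)$, and the converse direction of Step 1 yields $S\in\mathcal{CS}$.

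The main obstacle is Step 1's converse: turning monotonicity in the quantile parametrization back into monotonicity of $x\overline G(x)$ when $G$ has flat stretches. The gaps in the range of $G$ correspond to flat pieces of $G$, where $x\overline G(x)$ trivially increases, and these must be stitched together with the values at the points of increase.
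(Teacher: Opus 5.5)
Your proof is correct in substance and rests on the same idea as the paper's: reparametrize $x\overline F(x)$ by $u=F(x)$, and use the fact that quantiles add under comonotonicity. Where you differ is in the return trip from quantiles to $x\overline G(x)$. You first prove continuity of $G$ (Step 3) and then stitch flat pieces of $G$ to its points of increase. That stitching is the one under-justified step. The right endpoint $b=G^{-1+}(p)$ of a flat piece $\{G=p\}$ is generally not of the form $G^{-1}(u)$. So ``increasing on flat pieces'' plus ``equal to $(1-u)G^{-1}(u)$ at points of increase'' does not by itself compare $b\,\overline G(b)$ with later values. The step closes in either of two ways. You can use continuity of $x\overline G(x)$ and approximate $b$ from the right by points of increase. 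More simply, note that $G^{-1+}(p)=\lim_{u\downarrow p}G^{-1}(u)$, so Step 2 gives $(1-p)G^{-1+}(p)\le (1-q)G^{-1}(q)$ for $p<q$. Then for $0<x<y$ with $p=G(x)<q=G(y)$,
\[
x\overline G(x)\le (1-p)G^{-1+}(p)\le (1-q)G^{-1}(q)\le y\overline G(y),
\]
using only $x\le G^{-1+}(p)$ and $G^{-1}(q)\le y$. With this, Step 3 becomes unnecessary.

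This sandwich is exactly the paper's proof, which obtains the middle inequality without limits. It uses that both quantiles are comonotonic additive, $H^{-1}=\sum_i F_i^{-1}$ and $H^{-1+}=\sum_i F_i^{-1+}$. It then applies the $\mathcal{CS}$ property and continuity of each marginal at the points $F_i^{-1+}(t_1)$ and $F_i^{-1}(t_2)$. Your version yields two by-products: a quantile characterization of $\mathcal{CS}$, and the fact that the comonotonic sum has no atoms in $(0,\infty)$. The paper's version is shorter and needs no continuity argument for the sum.
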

\begin{proof}
Let $H$ be the cumulative distribution function of $X_1+\dots+X_n$ and $X_i\sim F_i$ for all $i\in[n]$. Then $H^{-1}=F_1^{-1}+\dots+F_{n}^{-1}$ and $H^{-1+}=F_1^{-1+}+\dots+F_{n}^{-1+}$. 
Next, we show that $u\overline H(u)$ is increasing on $u\in(0,\infty)$. Let $0<u_1<u_2$. If $H(u_1)=H(u_2)$, clearly $u_1\overline H(u_1)\le u_2\overline H(u_2)$. We focus on the case where $H(u_1)<H(u_2)$. Let $t_1=H(u_1)$ and $t_2=H(u_2)$ where $t_1<t_2$. To show $u_1\overline H(u_1)\le u_2\overline H(u_2)$, since $u_1\overline H(u_1)\le H^{-1+}(t_1)(1-t_1)$ and $H^{-1}(t_2)(1-t_2)\le u_2\overline H(u_2)$, it suffices to show 
$H^{-1+}(t_1)(1-t_1)\le H^{-1}(t_2)(1-t_2),$
which can be equivalently written as
$$F_1^{-1+}(t_1)(1-t_1)+\dots+F_{n}^{-1+}(t_1)(1-t_1)\le F_1^{-1}(t_2)(1-t_2)+\dots+F_{n}^{-1}(t_2)(1-t_2).$$
Since $X_1,\dots,X_n\in\mathcal{CS}$,  $s\overline F_i(s)$, $i\in[n]$, are increasing on $s\in(0,\infty)$. Moreover, as $F_i$, $i\in[n]$, are continuous, we have $$F_i^{-1+}(t_1)(1-t_1)=F_i^{-1+}(t_1)\overline F_i(F_i^{-1+}(t_1))\le F_i^{-1}(t_2)\overline F_i(F_i^{-1}(t_2))=F_i^{-1}(t_2)(1-t_2).$$
This completes the proof.
\end{proof}

% We recall that stochastic order is preserved under convolution from Theorem 1.A.3 (b) of \cite{SS07} in the following lemma.
% \begin{lemma}\label{lem:con}
% Let $(X_1, \dots, X_n)$ and $(Y_1, \dots, Y_n)$ be two random vectors with independent components. If $X_i \lst Y_i$ for all $i \in [n]$, then 
% $\sum_{i=1}^n X_i\lst \sum_{i=1}^n Y_i$.
 
Now we can state  the main result of this section.

\begin{theorem}\label{thm:cs}
 Let  $X_1, \dots, X_n\in\mathcal{CS}$ be independent.  Then, $(X_1, \dots, X_n)$ satisfies WUVS. If, furthermore, $X_1,\dots,X_n$  are doubly continuous, then  $(X_1, \dots,X_n)$ satisfies strict WUVS.
\end{theorem}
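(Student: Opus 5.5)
The plan is to argue by induction on $n$. Lemma \ref{lem:cs} is the base case $n=2$, and Proposition \ref{prop:cs} keeps the comonotonic partial sums inside $\mathcal{CS}$. Fix $(\theta_1,\dots,\theta_n)\in\Delta_n$. Scaling preserves $\mathcal{CS}$ (since $x\overline{F}(x/\theta)=\theta\cdot (x/\theta)\overline F(x/\theta)$ is increasing), so $Y_i=\theta_iX_i\in\mathcal{CS}$ and the $Y_i$ are independent. By Lemma \ref{lem:st}(ii), it then suffices to show $Y_1^c+\dots+Y_n^c\le_{\rm st} Y_1+\dots+Y_n$ for independent $\mathcal{CS}$ variables $Y_1,\dots,Y_n$, with $(Y_1^c,\dots,Y_n^c)$ a comonotonic copy. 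Since WUVS for all weights reduces to UVS for arbitrary independent $\mathcal{CS}$ vectors, I prove the latter by induction.

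For the induction step, assume UVS for any $n-1$ independent $\mathcal{CS}$ risks. Let $S_{n-1}=Y_1+\dots+Y_{n-1}$ and let $T_{n-1}$ be a random variable distributed as $Y_1^c+\dots+Y_{n-1}^c$, independent of $Y_n$. The induction hypothesis gives $T_{n-1}\le_{\rm st} S_{n-1}$. Both $T_{n-1}$ and $S_{n-1}$ are independent of $Y_n$, and the stochastic order is preserved under convolution with an independent variable (e.g., \cite{SS07}, Theorem 1.A.3). Hence $T_{n-1}+Y_n\le_{\rm st} S_{n-1}+Y_n$. By Proposition \ref{prop:cs}, $T_{n-1}\in\mathcal{CS}$, and it is independent of $Y_n\in\mathcal{CS}$. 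Lemma \ref{lem:cs} therefore gives $T_{n-1}^c+Y_n^c\le_{\rm st} T_{n-1}+Y_n$. Finally, a comonotonic copy of $(T_{n-1},Y_n)$ has sum equal in law to $Y_1^c+\dots+Y_n^c$, because the quantile function of $T_{n-1}$ is $\sum_{i<n}F_{Y_i}^{-1}$. Chaining these inequalities completes the induction and proves WUVS.

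For the strict part, assume each $X_i$ is doubly continuous, and fix weights with at least two positive components. Components with $\theta_i=0$ can be dropped, so assume all $\theta_i>0$; then the $Y_i$ are doubly continuous. A comonotonic sum of doubly continuous variables is doubly continuous, since its quantile function is a sum of continuous strictly increasing functions. So $T_{n-1}$ is doubly continuous, and the strict part of Lemma \ref{lem:cs} gives $\p(T_{n-1}^c+Y_n^c>t)<\p(T_{n-1}+Y_n>t)$ on the interior of the support interval. Combined with the non-strict inequality $T_{n-1}+Y_n\le_{\rm st} S_{n-1}+Y_n$, this yields a strict inequality on the same range. Lemma \ref{lem:strictst} then converts it into strict VaR superadditivity for every $p$. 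This is where strictness is easiest: only the last step needs to be strict, so no strict induction hypothesis is needed.

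The main obstacle is the bookkeeping in the strict case. One must check that the range of $t$ in Lemma \ref{lem:strictst} matches the support interval of $T_{n-1}^c+Y_n^c$, namely $(\sum_i\essinf Y_i,\sum_i\esssup Y_i)$. One must also check that the strict part of Lemma \ref{lem:cs}, stated for equal weights in $\Delta_2$ after rescaling, applies directly to the unscaled pair $(T_{n-1},Y_n)$. For the latter I would reread the proof of Lemma \ref{lem:cs}: it only uses the $\mathcal{CS}$ property and continuity of the two variables, not any normalization, so it applies to $(T_{n-1},Y_n)$ as is.
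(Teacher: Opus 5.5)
Your proposal is correct and follows essentially the same route as the paper. Both argue by induction through the chain $Y_1^c+\dots+Y_n^c\le_{\rm st} T_{n-1}+Y_n\le_{\rm st} S_{n-1}+Y_n$, using Lemma \ref{lem:cs}, Proposition \ref{prop:cs} and preservation of $\le_{\rm st}$ under independent convolution, then rescale to get WUVS; strictness enters only through the strict part of Lemma \ref{lem:cs} at the last step, combined with Lemma \ref{lem:strictst}. The range bookkeeping you flag is harmless, because the essential infimum and supremum of a comonotonic sum are the sums of those of its components.
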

\begin{proof}
Let $(Y_1,\dots,Y_n)$ be a comonotonic copy of $(X_1,\dots,X_n)$ and be independent of $(X_1,\dots,X_n)$. The case  $n=2$ is shown by Lemma \ref{lem:cs}. For the case $n>2$, 
assume 
$(X_1, \dots, X_{n-1}) \in \mathcal{L}^{\rm U}_{n-1}$. %Let $X_n$ be independent of $X_1,\dots,X_{n-1}$ and $Y_1,\dots,Y_{n-1}$. 
%As stochastic order is preserved under convolution (e.g.,  Theorem 1.A.3 (b) of \cite{SS07}), 
 We have 
\begin{equation}\label{eq:cs1}Y_1+\dots+Y_{n-1}+Y_{n} \le_{\rm st} Y_1+\dots+Y_{n-1}+X_{n}\le_{\rm st} X_1+\dots+X_{n-1}+X_n,\end{equation}
where the first inequality is implied by Lemma \ref{lem:cs} and Proposition \ref{prop:cs}, and the second inequality is due to the fact that  stochastic order is preserved under convolution  \citep[Theorem 1.A.3 (b)]{SS07}. 
% By Lemmas \ref{lem:con} and  \ref{lem:1}, we have
% $$Y_1+\dots+Y_{n-1}+Y_{n} \le_{\rm st} Y_1+\dots+Y_{n-1}+X_{n}\le_{\rm st} X_1+\dots+X_{n-1}+X_n.$$
By Lemma \ref{lem:st}, $(X_1, \dots, X_n) \in \mathcal{L}^{\rm U}_n$. Then, $(X_1, \dots, X_n)$ satisfies  UVS  by induction. Let $(\theta_1, \dots, \theta_n) \in \Delta_n$. By Proposition \ref{pro:cs},  $\theta_1X_1, \dots, \theta_nX_n \in \mathcal{CS}$. Hence,  $(\theta_1X_1, \dots, \theta_n X_n)$  satisfies  UVS. For the strictness statement, it suffices to assume $\theta_1,\dots,\theta_n>0$.  If $X_1,\dots,X_n$ are doubly continuous, so are $\theta_1Y_1+\dots+\theta_{n-1}Y_{n-1}$ and $\theta_nY_n$. By Lemmas \ref{lem:strictst} and \ref{lem:cs}, $\p(\sum_{i=1}^n\theta_iY_i>t) < \p(\sum_{i=1}^{n-1}\theta_iY_i+\theta_nX_{n}>t)\le\p(\sum_{i=1}^n\theta_iX_i>t)$ for all $t\in (\sum_{i=1}^n \theta_i\essinf X_i, \sum_{i=1}^n\theta_i\esssup X_i)$.  Lemma \ref{lem:strictst} completes the proof. 
\end{proof}
 Together with inequality \eqref{eq:mixture} and Theorem \ref{thm:cs}, we have the following corollary.
 \begin{corollary}
     Let  $X_1,\dots,X_n\in\mathcal{CS}$ be independent with $X_i\sim F_i$ for $i\in[n]$,  $U$ be uniformly distributed on $(0,1)$,  and $(\theta_1,\dots,\theta_n)\in\Delta_n$. Then 
   $$\max\(H^{-1}(U),G^{-1}(U) \) \le_{\rm st} \theta_1X_1+\dots+\theta_nX_n, $$
   where $H(x)=\sum_{i=1}^n\theta_iF_i(x)$ for $x\ge 0$ and $G^{-1}(p)=\sum_{i=1}^n\theta_iF_i^{-1}(p)$ for $p\in(0,1)$.
 \end{corollary}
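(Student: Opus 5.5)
The corollary combines two stochastic lower bounds for $S=\theta_1X_1+\dots+\theta_nX_n$, so the plan is to prove each bound separately and then merge them.

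\textbf{First bound.} Let $A_1,\dots,A_n$ be mutually exclusive events with $\p(A_i)=\theta_i$, independent of $(X_1,\dots,X_n)$. Inequality \eqref{eq:mixture} (Theorem 4.2 of \cite{V25}) gives
$$X_1\id_{A_1}+\dots+X_n\id_{A_n}\le_{\rm st} S.$$
The left-hand side has distribution function $H(x)=\sum_{i=1}^n\theta_iF_i(x)$ for $x\ge 0$, and it is nonnegative. Hence it has the same law as $H^{-1}(U)$, and so $H^{-1}(U)\le_{\rm st} S$.

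\textbf{Second bound.} Let $(X_1^c,\dots,X_n^c)$ be a comonotonic copy of $(X_1,\dots,X_n)$. By Theorem \ref{thm:cs}, $(X_1,\dots,X_n)$ satisfies WUVS. Lemma \ref{lem:st}(ii) then yields
$$\sum_{i=1}^n\theta_iX_i^c\le_{\rm st} S.$$
We may write $X_i^c=F_i^{-1}(U)$. Since $\theta_i\ge 0$, the function $G^{-1}=\sum_{i=1}^n\theta_iF_i^{-1}$ is increasing and left-continuous, so it is the left quantile function of some distribution $G$. Therefore $\sum_{i=1}^n\theta_iX_i^c=G^{-1}(U)$, and $G^{-1}(U)\le_{\rm st} S$.

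\textbf{Combining the bounds.} Both $H^{-1}$ and $G^{-1}$ are increasing, so the maximum $\max(H^{-1}(U),G^{-1}(U))$ has survival function
$$\p\bigl(\max(H^{-1}(U),G^{-1}(U))>t\bigr)=\p\bigl(\{H^{-1}(U)>t\}\cup\{G^{-1}(U)>t\}\bigr).$$
Each event in this union is of the form $\{U>c\}$ for a threshold $c$, so the two events are nested. The probability of the union is therefore the larger of the two probabilities:
$$\max\bigl(\p(H^{-1}(U)>t),\,\p(G^{-1}(U)>t)\bigr).$$
By the two bounds above, each of these probabilities is at most $\p(S>t)$, which proves the claim.

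The only step needing care is this final one: the maximum of two stochastic minorants is not in general a minorant. It works here only because both minorants are increasing functions of the same uniform $U$, so their tail events are nested.
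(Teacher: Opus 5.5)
Your proof is correct and follows the paper's argument. It uses the same two ingredients: inequality \eqref{eq:mixture} for $H^{-1}(U)\le_{\rm st}\theta_1X_1+\dots+\theta_nX_n$, and Theorem \ref{thm:cs} with Lemma \ref{lem:st} for the $G^{-1}(U)$ bound. The paper combines them quantile-wise, as $\max(H^{-1}(p),G^{-1}(p))\le \VaR_p(\theta_1X_1+\dots+\theta_nX_n)$ for all $p\in(0,1)$ and then substitutes $U$, which is equivalent to your nested-tail-event argument.
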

 \begin{proof}
     By Lemma \ref{lem:st}, Theorem \ref{thm:cs}, and \eqref{eq:mixture}, for all $p\in(0,1)$
     $$\max\(H^{-1}(p),G^{-1}(p) \)\le \VaR_p\(\sum_{i=1}^n\theta_iX_i\).$$
     Plugging $U$ into the quantile functions, we have the desired result.
 \end{proof}
 
% Theorem \ref{thm:cs} and Proposition \ref{pro:cs} together imply the following corollary.

% \begin{corollary}\label{cor:cs}
%   Let  $X_1, \dots, X_n\in\mathcal{CS}$ be independent.  Then, $(X_1, \dots, X_n)$ is WUVS.
% \end{corollary}
%\begin{proof}
% Let $(\theta_1, \dots, \theta_n) \in \Delta_n$. By Proposition \ref{pro:cs},  $\theta_1X_1, \dots, \theta_nX_n \in \mathcal{CS}$. As $\theta_nX_1, \dots, \theta_nX_n$ are independent, by Theorem \ref{thm:cs}, we have $(\theta_1X_1, \dots, \theta_n X_n) \in \mathcal{L}^{\rm US}_n$, which completes the proof.
% \end{proof}
\begin{remark}[Majorization inequality]
Let risks $X_1,\dots,X_n\in\mathcal{CS}$ be iid and $w_1,\dots,w_n\ge 0$. Theorem \ref{thm:cs} implies 
\begin{equation}\label{eq:iid}
\(\sum_{i=1}^nw_i\)X_1\le_{\rm st}w_1X_1+\dots+w_nX_n.\end{equation}
 For two vectors $ \(\gamma_1,\dots,\gamma_n\)$ and $  \(\eta_1,\dots,\eta_n\)$ in $\R^n$,  $ \(\gamma_1,\dots,\gamma_n\)$ is dominated by $  \(\eta_1,\dots,\eta_n\)$ in majorization order if 
$$
\sum^n_{i=1}\gamma_i =\sum^n_{i=1}\eta_i \mbox{~~~and~~~}
\sum^k_{i=1} \gamma_{(i)} \ge \sum^k_{i=1} \eta_{(i)}\ \ \mbox{for}\ k\in [n-1],
$$
where $\gamma_{(i)}$ and $\eta_{(i)}$ represent the $i$th smallest order statistics of $  \(\gamma_1,\dots,\gamma_n\)$ and $  \(\eta_1,\dots,\eta_n\)$, respectively. Let $\theta_1,\dots,\theta_n\ge0$ such that $(\theta_1,\dots,\theta_n)$ is larger than $(w_1,\dots,w_n)$ in majorization order. A stronger result than inequality \eqref{eq:iid} is 

\begin{equation}\label{eq:maj}
\theta_1X_1+\dots+\theta_nX_n\le_{\rm st}w_1X_1+\dots+w_nX_n.
\end{equation}
\cite{CHSZ25} showed inequality \eqref{eq:maj} for $X_1,\dots,X_n\sim F$ where $F$ is the inverted concave distribution, i.e.,  $\overline F(1/x)$ is concave on $(0,\infty)$. It is easy to see that the class of inverted concave distributions is a subset of $\mathcal{CS}$. A natural question is whether \eqref{eq:maj} holds for iid
 $X_1,\dots,X_n\in \mathcal{CS}$. %The ``$\Longleftarrow$'' direction is clearly true but the ``$\Longrightarrow$'' direction does not hold in general, as illustrated by the following example.
The following example shows that this is not true.
\begin{example}\label{ex:maj} Let risks $X,X_1,\dots,X_n\sim F$ be iid where
$$
F(x)=
\begin{cases}
0,~~\mbox{for $x<1$},\\
1-(1/x),~~\mbox{for $1\le x<2$},\\
1/2,~~\mbox{for $2\le x<3$},\\
1-3/(2x),~~\mbox{for $x\ge 3$}.
\end{cases}
$$
It is easy to check $x\overline F(x)$ is increasing on $(0,\infty)$ and thus $X\in\mathcal{CS}$. By Theorem \ref{thm:cs}, inequality \eqref{eq:iid} holds. However, Example 4.5 of \cite{ZZSH25} shows 
$$ \frac{1}{4}X_1+\frac{3}{4}X_2 \nleq_{\rm st}  \frac{2}{5}X_1+\frac{3}{5}X_2.$$
As $(2/5,3/5)$ is smaller than $(1/4,3/4)$ in majorization order, this shows that inequality \eqref{eq:maj} does not hold for $X_1,\dots,X_n$.
\end{example}
\end{remark}

\subsection{Super-$F$ classes }\label{sec:invsub}

As demonstrated by Example \ref{ex:pareto}, Theorems \ref{th:convex} and  \ref{pro:SAw} are useful in generalizing existing results for iid random vectors in $\mathcal{L}^{\rm WU}_n$ to random vectors with independent but not necessarily identically distributed components. To make the most of Theorems \ref{th:convex} and  \ref{pro:SAw}, it suffices to consider the largest class of marginal distributions for iid random vectors in  $\mathcal{L}^{\rm WU}_n$. However, an explicit characterization of this largest class is not known. To the best of our knowledge, the two maximal classes admitting explicit expressions of this kind are the super-Cauchy class of \cite{Muller25} and the inverted subadditive class of \cite{ALO25}. The two classes are not subsets of one another. We start with the super-Cauchy class, which is defined via strictly increasing and convex transformations of standard Cauchy risks.
\begin{definition}
A random variable $X$ is \emph{super-$F$} (or has a super-$F$ distribution) if $X\laweq f(Y)$ for some strictly increasing and convex function $f$ and $Y \sim F$.  If further $f$ is strictly convex, then we say $X$ is \emph{strictly super-$F$}. 
\end{definition}

% \begin{example}
%  The cumulative distribution function of the Fr\'echet distribution function with tail parameter $\alpha\in(0,\infty)$, denoted by Fr\'echet$(\alpha)$, is given by 
%  $$F(x)=\exp\(-\frac{1}{x^\alpha}\),~~~ x \ge 0.$$  If $\alpha\le 1$, the Fr\'echet distribution is completely
% subscalable and has infinite mean.
% \end{example}

 The set of strictly super-$F$ risks is a subclass of super-$F$ risks. Special cases of super-$F$ risks include:
\begin{enumerate}[(i)]
\item If $F={\rm Pareto}(1)$, then $X$ is \emph{super-Pareto}, denoted by $X\in\mathcal{S}_P$; if in addition, $\essinf X= 0$, we write $X\in\mathcal{S}_P^+$;
\item If $F$ is the cumulative distribution function of the Fr\'echet distribution with tail parameter 1, given by
 $$F(x)=\exp\(-\frac{1}{x}\),~~~ x \ge 0.$$
then $X$ is \emph{super-Fr\'echet}, denoted by $X\in\mathcal{S}_F$;  if in addition, $\essinf X= 0$, we write $X\in\mathcal{S}_F^+$;
\item If $F$ is the cumulative distribution function of the standard Cauchy distribution, given by
$$F(x)=\frac{1}{\pi}\arctan(x)+\frac{1}{2},~~~~~x\in\R,$$
then $X$ is \emph{super-Cauchy}, denoted by $X\in\mathcal{S}_C$.
\end{enumerate}
The classes  of risks $\mathcal{S}_P$, $\mathcal{S}_F$, and $\mathcal{S}_C$ were proposed by \cite{CEW25}, \cite{CS25}, and \cite{Muller25}, respectively. In \cite{CEW25} and \cite{CS25}, random vectors in $\mathcal{L}^{\rm WU}_n$ with negatively dependent components are also considered.  It is known that $\mathcal{S}_P^+\subseteq \mathcal{S}_F^+$ and $\mathcal{S}_P\subseteq \mathcal{S}_F \subseteq \mathcal{S}_C$.  The authors showed that $(X_1,\dots, X_n)$ satisfies WUVS if $X_1, \dots, X_n$ are iid risks in $\mathcal{S}_P$ \citep{CEW25}, $\mathcal{S}_F$ \citep{CS25}, or $\mathcal{S}_C$ \citep{Muller25}. By Theorems \ref{th:convex} and  \ref{pro:SAw}, we generalize the result of \cite{Muller25} to independent but not necessarily identically distributed super-Cauchy and   strictly super-Cauchy risks in the following result.  
\begin{proposition}\label{cor:SC}

If $X_1, \dots, X_n$ are  independent (strictly) super-Cauchy risks, then $(X_1, \dots, X_n)$ satisfies (strict) WUVS. 
\end{proposition}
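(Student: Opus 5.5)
Write $X_i\laweq f_i(Y_i)$, where each $f_i$ is strictly increasing and convex (strictly convex in the strict case) and $Y_i$ is standard Cauchy. Because $X_1,\dots,X_n$ are independent, we may realize them as $X_i=f_i(Y_i)$ with $Y_1,\dots,Y_n$ iid standard Cauchy. UVS and WUVS depend only on the joint law, so it suffices to prove the property for $(f_1(Y_1),\dots,f_n(Y_n))$. The plan is to reduce everything to the iid Cauchy vector $(Y_1,\dots,Y_n)$ and then apply Theorem \ref{th:convex}(i) (non-strict case) or Theorem \ref{pro:SAw}(i) (strict case).

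\textbf{Step 1: WUVS of the iid Cauchy vector.} For $(\theta_1,\dots,\theta_n)\in\Delta_n$, stability of the Cauchy law gives $\sum_i\theta_iY_i\laweq Y_1$, since $\sum_i\theta_i=1$ and the location is zero. On the other hand, the comonotonic copy satisfies $\sum_i\theta_iY_i^c\laweq \sum_i\theta_i Y_1=Y_1$. So the two sides of Lemma \ref{lem:st}(ii) have the same law, and the stochastic inequality holds with equality. Hence $(Y_1,\dots,Y_n)\in\mathcal L^{\rm WU}_n$, and indeed it satisfies universal weighted VaR additivity. This also agrees with the iid result of \cite{Muller25}, which could be cited instead.

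\textbf{Step 2: the non-strict case.} Each $f_i$ is increasing and convex, so Theorem \ref{th:convex}(i) immediately gives that $(f_1(Y_1),\dots,f_n(Y_n))$ satisfies WUVS.

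\textbf{Step 3: the strict case.} Here each $f_i$ is strictly convex. We cannot use the strict part of Theorem \ref{th:convex}(i), because Step 1 only gives non-strict WUVS; there is in fact equality in law. Instead we invoke Theorem \ref{pro:SAw}(i), which requires that $(Y_1,\dots,Y_n)$ be atomless and have positive density on $\prod_i(\essinf Y_i,\infty)=\R^n$. Both hold: the joint density $\prod_i \pi^{-1}(1+y_i^2)^{-1}$ is strictly positive everywhere. Theorem \ref{pro:SAw}(i) then yields strict WUVS for $(f_1(Y_1),\dots,f_n(Y_n))$.

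\textbf{Main point to check.} The one step needing care is the domain issue in Theorem \ref{pro:SAw}. Its proof perturbs $\VaR_p(X_2)$ downward and uses continuity of the $f_i$. Since $\essinf Y_i=-\infty$, the open set $O$ constructed there automatically lies inside $\R^n$. Each $f_i$ is convex and finite on all of $\R$, hence continuous, and VaR commutes with $f_i$. So the hypotheses of Theorem \ref{pro:SAw} apply directly, and no further obstacle is expected.
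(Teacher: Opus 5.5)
Your proposal is correct and follows the paper's proof essentially step for step. Cauchy stability makes the iid Cauchy vector weighted-VaR additive, Theorem \ref{th:convex}(i) then gives the non-strict case, and Theorem \ref{pro:SAw}(i), applied via the everywhere-positive product density on $\R^n$, gives the strict case, exactly as in the paper.
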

\begin{proof}
 Since $X_1, \dots, X_n$ are independent and super-Cauchy, there exist  iid standard Cauchy risks $ Y_1,\dots, Y_n$ such that $X_i=f_i(Y_i)$, $i\in [n]$, for some strictly increasing and convex functions $f_1, \dots, f_n$. For iid standard Cauchy risks $Y, Y_1,\dots, Y_n$, it is well-known that $\theta_1Y_1+\dots+\theta_n Y_n\laweq Y$ for any $(\theta_1, \dots, \theta_n) \in \Delta_n$ (see Property 1.2.1 of \citealp{samoradnitsky2017stable}) and hence 
 $(X_1, \dots, X_n)\in \mathcal{L}^{\rm WU}_n$ by Theorem \ref{th:convex} (i). If $X_1, \dots, X_n$ are further strictly super-Cauchy, $f_1, \dots, f_n$ are  strictly convex. Moreover, $(Y_1,\dots,Y_n)$ has a positive density on $\R^n$ by independence. Applying Theorem \ref{pro:SAw} (i) completes the proof. 
\end{proof}

% We consider a class of super-$\mathcal{IS}$ risks such that any risk in this class can be obtained via applying some increasing and convex transforms to some $\mathcal{IS}$ risk. Note that the class $\mathcal{IS}$ is closed under strictly increasing and convex transformation, as implied by Theorem 2.8 of \cite{ALO25}. The following lemma will be useful to show the main result.

% Some well-known heavy-tail example of super-$F$ distribution including super-Pareto ($F$ is Pareto$(1)$), super-Cauchy ($F$ is Cauchy) and super-Fr{\'e}chet ($F$ is Fr{\'e}chet). 

% Therefore, if  an iid $(X_1, \dots, X_n) \in \mathcal{L}^{\rm US}_n^{\mathrm w}$ with identical marginal $F$, then independent $(X_1, \dots, X_n) \in \mathcal{L}^{\rm US}_n^{\mathrm w}$ where $X_i$ is super-$F$ for all $i \in [n]$. 
%\begin{example}
Next, we apply Theorems \ref{th:convex} and  \ref{pro:SAw} to the inverted subadditive class of \cite{ALO25}, defined below. 
\begin{definition}%[\cite{ALO25}]
For a random variable $X\sim F$, we say $X$ (or $F$)  is \emph{inverted
subadditive} or belongs to class $\mathcal{IS}$, denoted by $X\in\mathcal{IS}$ (or $F \in \mathcal{IS}$), if $\overline F(1/x)$ is subadditive on $(0,\infty)$.
\end{definition}
Both $\mathcal{S}_C$ and $\mathcal{IS}$ include $\mathcal{S}_P^+$ and $ \mathcal{S}_F^+$ as subsets; see the next section for more details on the relations of these classes. 
Note that the class $\mathcal{IS}$ is closed under strictly increasing and convex transformations, as implied by Theorem 2.8 of \cite{ALO25}. 
 \cite{ALO25} showed that $(X_1, \dots, X_n)$ with iid components from $\mathcal{IS}$ satisfies WUVS. Theorems \ref{th:convex} and  \ref{pro:SAw} further extend this result to independent but not necessarily identically distributed risks in a subclass of $\mathcal{IS}$ where the risks are strictly increasing and convex transformations of the same $\mathcal{IS}$ risk.

\begin{proposition}\label{cor:IS}

      Let  $X_1, \dots, X_n$ be independent and super-$F$  for some $F\in\mathcal{IS}$.  Then $(X_1, \dots, X_n)$ satisfies WUVS.  Moreover, if $X_1, \dots, X_n$ are further strictly super-$F$ and $Y\sim F$ is atomless with positive density on $(\essinf Y,  \infty)$, then $(X_1, \dots, X_n)$ satisfies strict WUVS.
\end{proposition}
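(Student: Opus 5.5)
The plan mirrors the proof of Proposition \ref{cor:SC}, with the iid $\mathcal{IS}$ result of \cite{ALO25} playing the role of the Cauchy stability property.

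\textbf{Step 1: reduce to an iid vector.} Since $X_1,\dots,X_n$ are independent and super-$F$, we can write $X_i\laweq f_i(Y_i)$ for strictly increasing convex functions $f_1,\dots,f_n$ and $Y_i\sim F$. Because WUVS depends only on the joint law, we may realize the $X_i$ on a common probability space. Take $Y_1,\dots,Y_n$ iid with distribution $F$ and set $X_i=f_i(Y_i)$. Independence of the $Y_i$ makes $(f_1(Y_1),\dots,f_n(Y_n))$ have the same joint law as $(X_1,\dots,X_n)$.

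\textbf{Step 2: non-strict WUVS.} Since $F\in\mathcal{IS}$, the result of \cite{ALO25} for iid $\mathcal{IS}$ components gives $(Y_1,\dots,Y_n)\in\mathcal L^{\rm WU}_n$. Theorem \ref{th:convex}(i), applied with the increasing convex $f_i$, then yields $(f_1(Y_1),\dots,f_n(Y_n))\in\mathcal L^{\rm WU}_n$. This proves the first claim.

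\textbf{Step 3: strict WUVS.} Now assume each $f_i$ is strictly convex and that $Y\sim F$ is atomless with positive density on $(\essinf Y,\infty)$. By independence, $(Y_1,\dots,Y_n)$ has the product density $\prod_i f_Y(y_i)$. This density is positive on $\prod_{i=1}^n(\essinf Y_i,\infty)$, and the vector is atomless. The hypotheses of Theorem \ref{pro:SAw}(i) therefore hold, and it upgrades WUVS of $(Y_1,\dots,Y_n)$ to strict WUVS of $(f_1(Y_1),\dots,f_n(Y_n))$.

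\textbf{Main obstacle.} The main point to check is that the hypotheses of Theorem \ref{pro:SAw} match exactly. In particular, ``atomless'' for the random vector should mean that sums such as $\sum_i\theta_i'Y_i$ are atomless. This holds because each $Y_i$ has a density and the $Y_i$ are independent, so any nontrivial linear combination has a density. One should also confirm that the version of the \cite{ALO25} theorem used covers arbitrary weights $(\theta_1,\dots,\theta_n)\in\Delta_n$, i.e.\ genuine WUVS rather than only UVS. The paper states this explicitly, so it can be cited directly.
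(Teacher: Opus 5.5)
Your proposal is correct and follows the paper's proof essentially step for step: you realize $X_i=f_i(Y_i)$ with iid $Y_i\sim F$, combine the iid $\mathcal{IS}$ result of \cite{ALO25} with Theorem \ref{th:convex}(i) to get WUVS, and use independence to transfer the atomless and positive-density hypotheses to $(Y_1,\dots,Y_n)$, so that Theorem \ref{pro:SAw}(i) yields strict WUVS. Your added check that any nontrivial weighted sum of the independent $Y_i$ has a density is a helpful explicit verification of the atomlessness that is used inside the proof of Theorem \ref{pro:SAw}.
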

 
\begin{proof}
If $X_1, \dots, X_n$ are independent and super-$F$, there exist iid $Y_1, \dots, Y_n \in \mathcal{IS}$ with $Y_i \sim F$  and strictly increasing and convex functions $f_1, \dots, f_n$ such that $X_i=f_i(Y_i)$.
 \citet[Theorem 3.1]{ALO25} showed that iid $(Y_1, \dots, Y_n)\in \mathcal{L}^{\rm WU}_n$ if $Y_1 \in \mathcal{IS}$. By Theorem \ref{th:convex} (i), we have $(X_1, \dots, X_n)\in \mathcal{L}^{\rm WU}_n$. If further $X_1, \dots, X_n$ are 
 strictly super-$F$, then $f_1, \dots,f_n$ are further strictly  convex functions. Since  $Y_i$ is atomless and has positive density on $(\essinf Y,  \infty)$, $(Y_1, \dots, Y_n)$ is atomless and  has positive density on $  (\essinf Y, \infty)^n$ by independence.
Together with   Theorem \ref{pro:SAw} (i),  we have the desired result. 
\end{proof}

 %For a given $F \in \mathcal{IS}$, the super-$F$ class is a proper subset of $\mathcal{IS}$. That is, not all risks in $\mathcal{IS}$ are strictly increasing and convex transformations of the same risk in $\mathcal{IS}$.

 % independent $(X_1, \dots, X_n) \in \mathcal{L}^{\rm US}_n^{\mathrm w}$ if $X_i \in S_C$ for all $i \in [n]$.
%\end{example}
% \begin{corollary}\label{cor:SC}
%   Let  $X_1, \dots, X_n$ be independent and super-Cauchy.  Then $(X_1, \dots, X_n)\in \mathcal{SA}^{\mathrm w}_n$.
% \end{corollary}
%

We further apply Theorems \ref{th:convex} and \ref{pro:SAw} to random vectors in $\mathcal{L}^{\rm WU}_n$ with negatively dependent components. The following class of distributions is introduced by \cite{CS25}, containing the super-Fr\'echet class.

\begin{definition}
For a random variable $X\sim F$, we say $X$ belongs to class $\mathcal{D}$ if $-\log F(1/x)$ is subadditive on $(0,\infty)$.
\end{definition}

A random vector $(X_1,\dots,X_n)$ is \emph{negatively lower orthant dependent} (NLOD)  if for all $x_1,\dots,x_n \in\R$,  $\p(X_1\le x_1,\dots,X_n\le x_n)\le \prod_{i=1}^n\p(X_i\le x_i)$. For instance, normal random vectors with non-positive correlation coefficients are NLOD;
 see \cite{block1982some} for more details on this notion. We have the following result. 
 \begin{proposition}\label{prop:D}
 Let  $X_1, \dots, X_n$ be NLOD and super-$F$  for some $F\in\mathcal{D}$. Then $(X_1, \dots, X_n)$ satisfies WUVS. Moreover, if each $X_i$, $i\in[n],$ is strictly super-$F$ and $(X_1, \dots, X_n)$ is atomless and has positive density on $\prod_{i=1}^n(\essinf X_i,\infty)$,  then $(X_1, \dots, X_n)$ satisfies strict WUVS.
 \end{proposition}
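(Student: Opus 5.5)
The plan is to reduce to the identically distributed case via the common representation $X_i=f_i(Y_i)$, transfer NLOD to $(Y_1,\dots,Y_n)$, invoke the result of \cite{CS25} for class $\mathcal D$, and then apply Theorem \ref{th:convex} (i) for WUVS and Theorem \ref{pro:SAw} (i) for strict WUVS, exactly as in Propositions \ref{cor:SC} and \ref{cor:IS}.

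\emph{Representation and transfer of NLOD.} Since each $X_i$ is super-$F$, write $X_i=f_i(Y_i)$ with $Y_i\sim F$ and $f_i$ strictly increasing and convex. Then $f_i$ is continuous and has a strictly increasing inverse on its range, so we may take $Y_i=f_i^{-1}(X_i)$ on the same space. For every $y_i$, the event $\{Y_i\le y_i\}$ equals $\{X_i\le x_i\}$ with $x_i=\sup\{x: f_i^{-1}(x)\le y_i\}$, with the obvious interpretation when this is $\pm\infty$. Hence
\[
\p(Y_1\le y_1,\dots,Y_n\le y_n)=\p(X_1\le x_1,\dots,X_n\le x_n)\le\prod_{i=1}^n\p(X_i\le x_i)=\prod_{i=1}^n\p(Y_i\le y_i),
\]
so $(Y_1,\dots,Y_n)$ is NLOD with identical marginals $F\in\mathcal D$.

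\emph{Non-strict statement.} By the main result of \cite{CS25}, an NLOD vector with identical marginals in $\mathcal D$ satisfies WUVS; this is the result that contains the super-Fr\'echet case mentioned before the definition of $\mathcal D$. Thus $(Y_1,\dots,Y_n)\in\mathcal L^{\rm WU}_n$. Theorem \ref{th:convex} (i) with the increasing convex maps $f_1,\dots,f_n$ then gives $(X_1,\dots,X_n)\in\mathcal L^{\rm WU}_n$.

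\emph{Strict statement.} Here each $f_i$ is strictly convex, and I would apply Theorem \ref{pro:SAw} (i) to $(Y_1,\dots,Y_n)$. This requires checking that $(Y_1,\dots,Y_n)$ is atomless and has positive density on $\prod_{i=1}^n(\essinf Y_i,\infty)$, and this verification is the main obstacle.

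\begin{itemize}
\item Atomlessness is inherited, because $y\mapsto(f_i(y_i))_i$ is injective.
\item Each $f_i$ is convex and strictly increasing, so it is locally Lipschitz with $f_i'>0$ a.e.\ on the interior of its domain. By the change of variables formula, $(Y_1,\dots,Y_n)$ then has density $h(f_1(y_1),\dots,f_n(y_n))\prod_i f_i'(y_i)$, where $h$ is the density of $(X_1,\dots,X_n)$.
\item This density is positive a.e.\ provided $f_i$ maps $(\essinf Y_i,\infty)$ into $(\essinf X_i,\infty)$. That holds since $f_i$ is strictly increasing and $\essinf X_i=f_i(\essinf Y_i)$, understood as a limit where needed.
\end{itemize}

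Two points need care. First, the unbounded range $(\essinf X_i,\infty)$ forces $\esssup Y_i=\infty$, which is consistent. Second, I would check that the proof of Theorem \ref{pro:SAw} only uses positivity of the probability of open sets, which is exactly what an a.e.\ positive density provides. With these conditions verified, Theorem \ref{pro:SAw} (i) yields strict WUVS of $(X_1,\dots,X_n)$.
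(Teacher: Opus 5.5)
Your proposal is correct and follows the paper's proof. You set $Y_i=f_i^{-1}(X_i)$ and transfer NLOD to $(Y_1,\dots,Y_n)$, arguing directly where the paper cites \citet[Theorem 3.2]{block1982some}. You then invoke Theorem 1 of \cite{CS25} and conclude with Theorem~\ref{th:convex}~(i) and Theorem~\ref{pro:SAw}~(i); your change-of-variables check supplies the density transfer that the paper merely asserts.

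One small correction concerns why $\esssup Y_i=\infty$. It does not follow from the unbounded range of $X_i$, because a convex $f_i$ may explode at a finite point, as $y\mapsto 1/(1-y)$ does. It follows from $F\in\mathcal D$ itself: if $F(y_0)=1$ for some $y_0>0$, then $-\log F(1/\cdot)$ vanishes on $(0,1/y_0]$, and subadditivity forces $F=1$ on all of $(0,\infty)$. That is a degenerate case for the nonnegative $F$ of \cite{CS25}.
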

 \begin{proof}
 Theorem 1 of \cite{CS25} shows that random vectors with NLOD and identically distributed $\mathcal D$ components satisfy WUVS. Note that the NLOD property is preserved under strictly increasing transforms applied to the components of a random vector by \citet[Theorem 3.2]{block1982some}. If $(X_1, \dots, X_n)$ is atomless and has positive density on $\prod_{i=1}^n(\essinf X_i,  \infty)$, then $(Y_1, \dots, Y_n)=(f_1^{-1}(X_1), \dots, f_n^{-1}(X_n))$ is also atomless and has positive density on $\prod_{i=1}^n (\essinf Y_i,  \infty)$ for strictly increasing and strictly convex functions $f_1, \dots, f_n$.
 Following a similar argument to the proof of Proposition \ref{cor:IS}, Theorem \ref{th:convex} (i) and Theorem \ref{pro:SAw} (i) complete the proof.
 \end{proof} 

  We note that  the  statements in Propositions \ref{cor:IS} and \ref{prop:D}
 cannot be stated for all $X_1,\dots,X_n$ with distributions in $\mathcal{D}$ and $\mathcal{IS}$. 
Indeed, in Example \ref{ex:counter-D} of the next section, we construct a distribution $F_{L,c} \in \mathcal{D} \subseteq \mathcal{IS}$ such that for $X_1\sim F_{1,1/500}$ and $X_2\sim F_{10,1/5}$,   $(X_1, X_2) \notin \mathcal L_2^{\rm U}$; hence  $X_1$ and $X_2$ do not belong to the same super-$F$ class, although their distributions are both in $\mathcal{D}$. 
 
\section{Relations of some classes of risks}\label{sec:relation}  
Random vectors $(X_1,\dots,X_n)\in \mathcal{L}^{\rm WU}_n$
 have been studied under various assumptions of the marginal distributions and dependence structure. In addition to the classes of risks mentioned before (i.e., $\mathcal{CS}$, $\mathcal{IS}$, $\mathcal{S}_P$, $\mathcal{S}_F$, $\mathcal{S}_C$, and $\mathcal D$), we list below another class of risks in $\mathcal{L}^{\rm WU}_n$.

\begin{definition}
For a random variable $X\sim F$, we say
%\begin{enumerate}[(i)]
% \item
%  $X$ belongs to class $\mathcal{S}_P$ if $1/\overline F(x)$ is concave in $(0,\infty)$;
 % \item
 % $X$ belongs to class $\mathcal{S}_F$ if $-1/\log F(x)$ is concave in $(0,\infty)$;
%\item
 $X$ belongs to class $\mathcal{M}$ if $x\log F(x)$ is decreasing on $(0,\infty)$.
% \item
% $X$ belongs to class $\mathcal{IS}$ if $\log F(1/x)$ is super-additive in $(0,\infty)$;
 % \item
 % $X$ belongs to class $\mathcal{IS}$ if $\overline F({1}/x)$ is sub-additive in $(0,\infty)$;
 % \item
 %  $X$ belongs to class $\mathcal{S}_C$ if $\tan(\pi F(x)-\pi/2)$ is concave in $(\essinf F,\infty)$;
%
  %\item $X$ belongs to class $\mathcal{IS}$ if $\bar F(1/x)$ is sub-additive in $(0,\infty)$.
%\end{enumerate}
\end{definition}

% The cumulative distribution function for ${\rm Pareto}(\alpha)$  with parameters $\alpha>0$ is  given by 
% $$P_{\alpha}(x)=1-\left(\frac{1}{x}\right)^\alpha, ~~~ x \ge 1.$$
% For random variables $X$ and $Y$, we write $X \le_{\rm st} Y$ if $\p(X>x)\le \p(Y<x)$ for all $x \in \R$.
% \cite{ALO25} shows that for iid  inverted subadditive risks $(X_1, \dots, X_n) \in \mathcal{L}^{\rm US}_n$. We will generalize this observation to a subclass of inverted subadditive risks that are independent and possibly non-identically distributed.
%  
 Class $ \mathcal{M}$ was introduced by \cite{M25}.  Table \ref{tab:example} summarizes some assumptions that have been considered for $(X_1,\dots,X_n)\in \mathcal{L}^{\rm WU}_n$ from the existing literature. Notably, only \cite{M25} considered non-identically distributed risks, and only $\mathcal{S}_C$ includes risks that are not bounded from below. Theorem \ref{thm:cs} and Propositions  \ref{cor:SC} and \ref{cor:IS} have extended the existing results to non-identically distributed $\mathcal{CS}$ risks, super-Cauchy risks, and some $\mathcal{IS}$ risks. All notions of negative dependence in Table \ref{tab:example} include independence as a special case; we refer to the corresponding papers for the definitions and relations of these notions of dependence.
\begin{table}[h!]
    \centering
    \caption{Results for $(X_1,\dots,X_n)\in \mathcal{L}^{\rm WU}_n$ in existing literature 
and in this paper; $\mathcal{L}^{\rm IM}$ means that the result requires identical marginal distributions.}
\medskip
    \label{tab:example} 
\renewcommand{\arraystretch}{1.4}
    \begin{tabular}{l|l|l|l} 
        Marginal class & Paper or result & Dependence structure & $\mathcal{L}^{\rm IM}$  \\ 
        \hline
                 $\mathcal{CS}$ &  \cite{V25} & independence & yes  \\ 
                 \hline
                 $\mathcal{IS}$ & \cite{ALO25} & independence & yes  \\ 
                 \hline
        $\mathcal{S}_P$ &  \cite{CEW25} & weak negative association & yes  \\ 
        \hline
        $\mathcal{S}_F$ &  \cite{CS25} & negative lower orthant dependence & yes \\ 
        \hline
                 $\mathcal{S}_C$ &  \cite{Muller25} & independence & yes  \\ 
                 \hline
$\mathcal{D}$ &  \cite{CS25} & negative lower orthant dependence  & yes  \\ 
         \hline
        $\mathcal{M}$ &  \cite{M25} &  negative simplex dependence  & no  \\ 
        \hline
      $\mathcal{S}_C$ &  Proposition \ref{cor:SC} & independence & no  \\ 
                 
 \hline
    $\mathcal{CS}$&   Theorem \ref{thm:cs} & independence & no \\ 
                   \hline
     super-$F$,       $F\in \mathcal{IS}$ &  Proposition \ref{cor:IS} & independence & no  \\ 
                        \hline
     super-$F$,       $F\in \mathcal{D}$ &  Proposition \ref{prop:D} & negative lower orthant dependence  & no  \\ 
     
                 \hline
    \end{tabular}
\end{table}

The next result presents the inclusion relations among the aforementioned classes of risks and shows that Theorem \ref{thm:cs} generalizes the distributions assumed by \cite{M25} for $(X_1,\dots,X_n)\in \mathcal{L}^{\rm WU}_n$ where $X_1,\dots,X_n$ are independent.  For two sets $A$ and $B$, we say $A$ and $B$ are incomparable if $A\not\subseteq B$ and $B\not\subseteq A$.

\begin{proposition}
We have 
\begin{enumerate}[(i)]
\item
$\mathcal{S}_P^+\subsetneq \mathcal{S}_F^+\subsetneq\mathcal{M}\subsetneq \mathcal{CS} \subsetneq \mathcal{IS}$;
\item
$\mathcal{S}_P^+\subsetneq \mathcal{S}_F^+\subsetneq\mathcal{M}\subsetneq \mathcal{D} \subsetneq \mathcal{IS}$; $\mathcal {CS}$ and $\mathcal{D}$ are incomparable;
\item
$\mathcal{S}_P\subsetneq \mathcal{S}_F\subsetneq\mathcal{S}_C$;
\item $\mathcal{S}_C$ and $\mathcal{CS}$ (resp.~$\mathcal{M}$, $\mathcal{IS}$, and $\mathcal{D}$) are incomparable.
% \item $\mathcal {D}_4\not\subseteq \mathcal{S}_C$ and $\mathcal{S}_C\not\subseteq\mathcal {D}_4$; $\mathcal {CS}\not\subseteq \mathcal{S}_C$ and $\mathcal{S}_C\not\subseteq\mathcal {CS}$; $\mathcal {D}_3\not\subseteq \mathcal{S}_C$ and $\mathcal{S}_C\not\subseteq\mathcal {D}_3$.
\end{enumerate}
\end{proposition}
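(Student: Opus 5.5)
The plan is to translate every class into an analytic condition on the distribution function $F$ and then compare these conditions directly. For the super-$F$ classes, $X\laweq f(Y)$ with $f$ strictly increasing and convex is equivalent to $\phi\circ F_X$ being concave on the support of $X$, where $\phi=F_Y^{-1}$. This gives:
\begin{itemize}
\item $X\in\mathcal S_P$ iff $1/\overline F$ is concave;
\item $X\in\mathcal S_F$ iff $-1/\log F$ is concave;
\item $X\in\mathcal S_C$ iff $\tan(\pi F-\pi/2)$ is concave.
\end{itemize}
In each case the function is taken on $(\essinf X,\infty)$. The remaining classes are defined directly:
\begin{itemize}
\item $\mathcal M$: $x\mapsto -x\log F(x)$ is increasing;
\item $\mathcal{CS}$: $x\mapsto x\overline F(x)$ is increasing;
\item $\mathcal D$: $k(x):=-\log F(1/x)$ is subadditive;
\item $\mathcal{IS}$: $g(x):=\overline F(1/x)$ is subadditive.
\end{itemize}
The inclusions $\mathcal S_P^+\subseteq\mathcal S_F^+$ and $\mathcal S_P\subseteq\mathcal S_F\subseteq\mathcal S_C$ are already recorded in Section \ref{sec:invsub} as known, so I will only cite them.

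For the new inclusions I will repeatedly use one elementary fact: if $h\ge 0$ on $(0,\infty)$ and $h(t)/t$ is decreasing, then
$$h(x+y)=\frac{x}{x+y}h(x+y)+\frac{y}{x+y}h(x+y)\le h(x)+h(y),$$
so $h$ is subadditive. The inclusions then follow in order.
\begin{itemize}
\item $\mathcal S_F^+\subseteq\mathcal M$: the function $h=-1/\log F$ is concave with $h(0+)=0$, because $\essinf X=0$. Hence $h(x)/x$ is decreasing, so $-x\log F(x)=x/h(x)$ is increasing.
\item $\mathcal M\subseteq\mathcal{CS}$: write $u=-\log F$. Then
$$x\overline F(x)=\bigl(xu(x)\bigr)\cdot\frac{1-e^{-u(x)}}{u(x)}.$$
The first factor is increasing by assumption. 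The second factor is increasing in $x$, since $u$ is decreasing and $v\mapsto (1-e^{-v})/v$ is decreasing. Both factors are nonnegative, so their product is increasing.
\item $\mathcal{CS}\subseteq\mathcal{IS}$: substituting $s=1/t$, monotonicity of $s\overline F(s)$ says exactly that $g(t)/t$ is decreasing, so the fact above applies.
\item $\mathcal M\subseteq\mathcal D$: the same substitution turns the condition defining $\mathcal M$ into "$k(t)/t$ is decreasing", and the fact above again applies.
\item $\mathcal D\subseteq\mathcal{IS}$: write $g=\psi\circ k$ with $\psi(u)=1-e^{-u}$. The function $\psi$ is concave, increasing, and satisfies $\psi(0)=0$, hence it is subadditive. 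Therefore $g(x+y)\le\psi(k(x)+k(y))\le g(x)+g(y)$.
\end{itemize}

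The remaining work consists of the strictness statements and the incomparabilities, all shown by explicit distributions.

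For $\mathcal S_C\not\subseteq\mathcal{IS}$, I take $X=Y-c$ with $Y$ standard Cauchy and $c$ large; this is super-Cauchy via a translation. Then $g(1/c)=\p(Y>0)=1/2$, while $2g(1/(2c))=2\p(Y>c)\approx 2/(\pi c)$, so subadditivity of $g$ fails. Since $\mathcal{CS}$, $\mathcal M$ and $\mathcal D$ are all contained in $\mathcal{IS}$, this single example shows $\mathcal S_C$ is not contained in any of them.

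For the reverse direction, I take a distribution in $\mathcal M$ whose support has a gap, in the spirit of Example \ref{ex:maj}. Then $F$ is flat on an interval inside $(\essinf X,\infty)$ and strictly increasing afterwards, so $\tan(\pi F-\pi/2)$ cannot be concave and $X\notin\mathcal S_C$. This also covers $\mathcal{CS}$, $\mathcal D$ and $\mathcal{IS}$, since they contain $\mathcal M$.

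Strictness of the chains in (i)--(iii) will be shown by similar examples:
\begin{itemize}
\item Fréchet$(1)$ scaled to infinite mean but not a convex transform of Pareto, as in \cite{CS25}, for $\mathcal S_P^+\subsetneq\mathcal S_F^+$.
\item A gap distribution in $\mathcal M\setminus\mathcal S_F^+$, since $-1/\log F$ is then not concave.
\item Piecewise-defined $F$ for $\mathcal{CS}\setminus\mathcal M$, $\mathcal D\setminus\mathcal M$ and $\mathcal{IS}\setminus(\mathcal{CS}\cup\mathcal D)$.
\end{itemize}

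I expect the main obstacle to be the incomparability of $\mathcal{CS}$ and $\mathcal D$. It requires one $F$ for which $x\overline F(x)$ is increasing but $-\log F(1/x)$ is not subadditive, and another with the opposite behaviour. I would first try two-piece Pareto-type tails $\overline F(x)=a/x$ joined to a constant or a steeper piece. For such $F$ both $g$ and $k$ are explicit, and the parameters can be tuned so that only one of the two conditions breaks at a chosen pair of points $(x,y)$. The distribution $F_{L,c}$ of Example \ref{ex:counter-D} is a natural candidate for an element of $\mathcal D\setminus\mathcal{CS}$.
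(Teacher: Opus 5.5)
\textbf{Overall.} Your inclusion arguments are all correct, and they are more self-contained than the paper's.
\begin{itemize}
\item For $\mathcal S_F^+\subseteq\mathcal M$, the paper cites that Fr\'echet$(1)\in\mathcal M$ and that $\mathcal M$ is closed under convex transforms fixing $0$. You instead use star-shapedness of $-1/\log F=f^{-1}$.
\item For $\mathcal M\subseteq\mathcal{CS}$, the paper uses $F(x)\ge F(y)^{y/x}$ together with $1-t^r\le r(1-t)$. Your factorization $x\overline F(x)=(xu(x))\cdot(1-e^{-u(x)})/u(x)$ works equally well.
\item For $\mathcal D\subseteq\mathcal{IS}$, the paper cites \cite{ALO25}. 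Your subadditivity of $\psi(u)=1-e^{-u}$ replaces that citation with a two-line proof.
\item $\mathcal{CS}\subseteq\mathcal{IS}$ and $\mathcal M\subseteq\mathcal D$ go exactly as in the paper.
\item A gap distribution is a cheaper witness for $\mathcal S_F^+\subsetneq\mathcal M$ than the paper's log-Cauchy second-derivative computation.
\end{itemize}

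\textbf{Errors in the counterexamples.} (1) As written, $X=Y-c$ does not work. Here $\overline F_X(1/x)=\p(Y>c+1/x)=\frac1\pi\arctan\bigl(x/(1+cx)\bigr)$, which is concave and vanishes at $0^+$. It is therefore subadditive, so $Y-c\in\mathcal{IS}$. Your numbers are those of $X=Y+c$. For that choice, $g(1/c)=1/2>2g(1/(2c))=\frac2\pi\arctan(1/c)$ holds exactly when $c>1$.

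With this fix the shift is actually a good idea. The unshifted standard Cauchy has $\overline F(1/x)=\arctan(x)/\pi$, which is subadditive. So the paper's use of the standard Cauchy relies on $\mathcal{IS}$ implicitly being a class of nonnegative risks, and your example avoids that. It also gives $\mathcal S_F\subsetneq\mathcal S_C$, since super-Fr\'echet risks are bounded below.

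(2) Example~\ref{ex:maj} is not in $\mathcal M$: $F=0$ on $(0,1)$, so $x\log F(x)=-\infty$ there. You need a gap distribution with $\essinf X=0$. For example, take $F=e^{-1/x}$ on $(0,1)$, $e^{-1}$ on $[1,2)$ and $e^{-2/x}$ on $[2,\infty)$. This is the paper's $G$, and it serves both $\mathcal S_F^+\subsetneq\mathcal M$ and (iv).

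\textbf{The ``main obstacle'' is easy.} It also settles all remaining strictness claims at once.
\begin{itemize}
\item If $F\in\mathcal D$, then $F>0$ on $(0,\infty)$. Otherwise $k=-\log F(1/\cdot)$ is finite below some $x_0>0$ and infinite above it, and subadditivity fails at $x=y\in(x_0/2,x_0)$. Hence Pareto$(1)\in\mathcal{CS}\setminus\mathcal D$, which is the paper's choice.
\item Your candidate $F_{1,1/500}$ from Example~\ref{ex:counter-D} does lie in $\mathcal D\setminus\mathcal{CS}$. On $(1,1+c)$ one has $h(t)=c+t-1$. The derivative of $(1-e^{-h(t)})/t$ at $t=1^+$ has the sign of $2e^{-c}-1>0$, so $x\overline F(x)$ decreases just below $x=1$. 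The paper cites \cite{ZZSH25} here instead.
\end{itemize}
Because $\mathcal M\subseteq\mathcal{CS}\cap\mathcal D$ and $\mathcal{CS}\cup\mathcal D\subseteq\mathcal{IS}$, these two examples already give $\mathcal M\subsetneq\mathcal{CS}$, $\mathcal M\subsetneq\mathcal D$, $\mathcal{CS}\subsetneq\mathcal{IS}$ and $\mathcal D\subsetneq\mathcal{IS}$. No element of $\mathcal{IS}\setminus(\mathcal{CS}\cup\mathcal D)$ and no further piecewise examples are needed. The paper's explicit $F\in(\mathcal{CS}\cap\mathcal D)\setminus\mathcal M$ proves more than the statement requires.
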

\begin{proof}
Relations $\mathcal{S}_P^+\subsetneq \mathcal{S}_F^+ $ and $\mathcal{S}_P\subsetneq \mathcal{S}_F $ can be seen by Example 4 of \cite{CS25} and $ \mathcal{S}_F \subsetneq \mathcal{S}_C$ was shown by Theorem 2.10 of \cite{Muller25}.
% Relation $\mathcal{S}_P\subseteq \mathcal{S}_F \subseteq \mathcal{S}_C$ has been shown by Theorem 2.10 of \cite{Muller25} and $\mathcal{S}_P^+\subseteq \mathcal{S}_F^+ $ has been shown in Example 4 of \cite{CS25}. 
Thus, we have (iii). 
To see $\mathcal{S}_F^+\subseteq\mathcal{M}$, note that the Fr\'echet distribution with tail parameter 1 is in $\mathcal{M}$ (Example 3.11 of \cite{M25}) and that for any $X\in\mathcal{M}$, if $g$ is strictly increasing and convex with $g(0)=0$, then $g(X)\in\mathcal{M}$ (see the proof of Proposition 3.16 of \cite{M25}).  To see $\mathcal{S}_F^+\subsetneq\mathcal{M}$, note that $\mathcal{M}$ contains some log-Cauchy distributions that are not in $\mathcal{S}_F^+$. Let the log-Cauchy distribution have the following cumulative distribution function 
$$H(x)=\frac 12 +\frac 1 \pi\arctan(\log(x)),\quad x>0.$$ By Example 3.11 (6) of \cite{M25}, $H\in \mathcal M$. Note that $H\in\mathcal{S}_F^+$ if and only if $g(x)=-1/\log H(x)$ is strictly increasing and concave on $(0,\infty)$. We have 
$$g''(x)
=
\frac{
2+\log H(x)
\left(
1+\pi H(x)\left(1+(\log x)^2+2\log x\right)
\right)
}{
\pi^2 x^2
\left(1+(\log x)^2\right)^2
H(x)^2
(-\log H(x))^3
}$$
and $g''(1)>0$. Hence, $H$ is not super-Fr\'echet.
%Relation $\mathcal{IS}\subseteq\mathcal{S}_C$  is in Theorem 4.13 of \cite{ALO25}.

For (i), we show $\mathcal{M}\subseteq \mathcal{CS} \subsetneq \mathcal{IS}$.
Let $X\sim F$. Suppose that $X\in \mathcal{M}$. Then for $0< x\le y$, $y\log F(y)\le x\log F(x)$, which implies $F(x)\ge F(y)^{y/x}$. Therefore,
\begin{align*}
\overline F(x)\le 1-F(y)^{y/x}\le \frac{y}{x}(1-F(y)), 
\end{align*}
where the last inequality is because $1-t^r\le r(1-t)$ holds for all $t\in (0,1)$ and $r\ge 1$.
Thus $x\overline F(x)\le y\overline F(y)$ and $\mathcal{M}\subseteq \mathcal{CS}$. Next, assume $X\in \mathcal{CS}$. Then $s\overline F(s)$ is increasing for all $s\ge 0$, or equivalently $\overline F(1/s)/s$ is decreasing on $(0,\infty)$, i.e., $\overline F(1/\cdot)$ is anti-star-shaped; a function $f$ on $[0,\infty)$ is said to be anti-star-shaped if $f(0)=0$ and $f(x)/x$ is decreasing. It is well-known that any anti-star-shaped function is subadditive, but a subadditive function is not necessarily anti-star-shaped.
% Let $x,y>0$. Then we have, 
% $$\frac{1}{x+y}\overline F\(\frac{1}{x+y}\)\le \frac{1}{x}\overline F\(\frac{1}{x}\) \iff \frac{x}{x+y}\overline F\(\frac{1}{x+y}\)\le\overline F\(\frac{1}{x}\),$$
% and 
% $$\frac{1}{x+y}\overline F\(\frac{1}{x+y}\)\le \frac{1}{y}\overline F\(\frac{1}{y}\)\iff \frac{y}{x+y}\overline F\(\frac{1}{x+y}\)\le\overline F\(\frac{1}{y}\).$$
% Summing up the inequalities on the right-hand side, we have   
% $$\overline F\(\frac{1}{x+y}\)\le \overline F\(\frac{1}{x}\)+\overline F\(\frac{1}{y}\).$$
Hence, $\mathcal{CS}\subsetneq\mathcal{IS}$.

For (ii), suppose $X\in \mathcal{M}$. Then $y\log F(y)$ decreases in $y>0$, which implies $-\log F(1/u)/u$ decreases in $u>0$. Thus,  $-\log F(1/u)$ is anti-star-shaped, which implies subadditivity and $\mathcal{M}\subseteq \mathcal{D}$. Furthermore, $\mathcal{D}\subsetneq \mathcal{IS}$ has been shown by Theorem 4.13 and Example 4.14 of \cite{ALO25}. For $\mathcal {CS}\not\subseteq\mathcal{D}$, it is easy to check the distribution function $F(x)=1-1/x$ for $x\ge 1$ belongs to $\mathcal{CS}$ but does not belong to $\mathcal{D}$. Example 3.6 of \cite{ZZSH25} provides a distribution that is in $\mathcal{D}$ but is not in $\mathcal{CS}$.

To see $\mathcal{M}\subsetneq \mathcal{D}$ and $\mathcal{M}\subsetneq \mathcal{CS}$, we provide a distribution that is in $\mathcal{D}$ and $\mathcal{CS}$ but not in $\mathcal{M}$. Let 
$$
F(x)=
\begin{cases}
\exp(-4),~~\mbox{for $0\le  x< 1/3$},\\
\exp(2-2/x), ~~\mbox{for $1/3\le x< 1/2$},\\
\exp(-2), ~~\mbox{for $1/2\le x<1$},\\
\exp(-2/x), ~~ \mbox{for $ x\ge 1$}.
\end{cases}$$
Let $g(t)=-\log F(1/t)$ for $t >0$. We have 
$$
g(t)=
\begin{cases}
2t,~~\mbox{for $0<  t< 1$},\\
2, ~~\mbox{for $1\le t< 2$},\\
2t-2, ~~\mbox{for $2\le t<3$},\\
4, ~~ \mbox{for $t\ge 3$}.
\end{cases}$$
To prove $F\in\mathcal D$, it remains to show that $g$ is subadditive.
Let $a,b>0$, and assume without loss of generality that $a\le b$. If
$a+b\ge 3$, the claim is immediate when $a\ge 3$. If $a<3$, then
$b\ge 3-a$; since $g$ is increasing and $g(t)+g(3-t)=4$ for $0<t<3$,
$$
g(a)+g(b)\ge g(a)+g(3-a)=4=g(a+b).
$$
Now suppose $a+b<3$. If $b\ge a\ge 1$, 
$g(a)+g(b)\ge 4>g(a+b)$. Assume $a<1$. If $b<1$, then
$g(a+b)\le 2(a+b)=g(a)+g(b)$. If $1\le b<2$, then either $a+b<2$, in
which case $g(a+b)=2\le 2a+2=g(a)+g(b)$, or $2\le a+b<3$, in which case
$$
g(a+b)=2(a+b)-2\le 2a+2=g(a)+g(b).
$$
Finally, if $2\le b<3$, then
$g(a+b)=2(a+b)-2=2a+(2b-2)=g(a)+g(b)$. Hence, $g$ is subadditive and $F \in \mathcal{D}$.
Let $h(t)=\bar F(1/t)/t$ for $t>0$. We have $h(t)=(1-e^{-g(t)})/t$ and we can easily check that $h$ is continuous and decreasing on $(0,1)$, $(1,2)$, $(2,3)$ and $(3, \infty)$. Therefore, $x\bar F(x)$ is increasing and $X \in \mathcal{CS}$. However, $F \notin \mathcal{M}$ as $\log F(1/3)/3=-4/3<\log(F(1/2))/2=-1$. Therefore, $x\log F(x)$ is not decreasing.
% As $\mathcal {D}_3\subseteq \mathcal {CS}\subseteq \mathcal{IS}$ and $\mathcal {D}_3\subseteq  \mathcal{D}$, $\mathcal {CS}, \mathcal{IS}$, and $\mathcal{D}$ are not subsets of $\mathcal{S}_C$.

 For (iv), since the Cauchy distribution is in $\mathcal{S}_C$ but is not in $\mathcal{CS}$, $\mathcal{M}$, $\mathcal{IS}$, and $\mathcal{D}$,  $\mathcal{S}_C$ is not a subset of all the latter sets. The following distribution $G$ is in $\mathcal{M}$,  which is given by 
$$
G(x)=
\begin{cases}
0,~~\mbox{for $x\le 0$},\\
\exp\(-1/x\),~~\mbox{for $0< x<1$},\\
\exp(-1), ~~\mbox{for $1\le x<2$},\\
\exp(-2/x), ~~\mbox{for $x\ge 2$}.
\end{cases}$$
However, it is not in $\mathcal{S}_C$ since $\tan(\pi G(x)-\pi/2)$ is not concave. Therefore, $\mathcal {M}\not\subseteq \mathcal{S}_C$. 
 \end{proof}

\begin{figure}
\centering
\begin{tikzpicture}[scale=1.5, every node/.style={font=\small}]

% The largest class IS
%\fill[purple!12] (0,0) ellipse (3.35 and 2.35);
\draw[thick] (0,0) ellipse (3.35 and 2.35);
\node at (0,2.05) {$\mathcal{IS}$};

% The two intermediate classes
%\fill[blue!18] (-1,0) circle (1.65);
%\fill[red!18]  ( 1,0) circle (1.65);

\draw[thick] (-0.5,0) circle (1.9);
\draw[thick]  (0.5,0) circle (1.9);

% The smaller class M inside the intersection
%\fill[green!25] (0,0) circle (0.55);
\draw[thick] (0,0) circle (1.3);
\draw[thick] (0,0) circle (0.8);
\draw[thick] (0,0) circle (0.3);
% Labels
\node at (-2,0.5) {$\mathcal D$};
\node  at ( 2,0.5) {$\mathcal{CS}$};
\node at (0,0) {$S_P^+$};
\node at (0.5,0.3) {$S_F^+$};
\node at (0.9,0.65) {$\mathcal{M}$};
\end{tikzpicture}
\caption{Relationship of  $S_P^+$, $S_F^+$, $\mathcal{M}$, $\mathcal{D}$, $\mathcal{CS}$ and $\mathcal{IS}$}\label{fig:venn}
\end{figure}

Figure \ref{fig:venn} shows the relationship between non-negative risk classes, including $S_P^+$, $S_F^+$, $\mathcal{M}$, $\mathcal{D}$, $\mathcal{CS}$ and $\mathcal{IS}$. Theorem \ref{th:convex} implies WUVS for random vectors with independent but non-identically distributed components in $\mathcal S_P$, $\mathcal  S_F$, and $\mathcal  S_C$ classes. Theorem \ref{pro:SAw} further provides strict WUVS for random vectors with independent but non-identically distributed components in strict $S_P$, $S_F$ and $S_C$ classes. Clearly, Theorem \ref{thm:cs} enlarges the class of risks $\mathcal M$ considered by \cite{M25} for UVS, in the case of independent marginals. 
 % Hence, \eqref{eq:SD} holds for independent risks $X_1$ and $X_2$ that are Fr\'echet$(1)$,  Pareto$(1)$ or Cauchy distribution. 
 To fully understand UVS for the risk classes in Table \ref{tab:example}, we need to check UVS for random vectors with
  independent  components from   $\mathcal{D}$ or $\mathcal{IS}$.
The following example shows that UVS does not hold for all independent risks from $\mathcal{D}$ or $\mathcal{IS}$ without assuming identical marginals. 

% \begin{example}\label{ex:counter-D}
% We consider  the following distribution from \citet[Example 2.7]{ALO25}
% $$F_p(x)=(1-p)^{\lceil 1/x\rceil}, ~~~ x>0, ~p\in (0,1).$$
% We can check that $-\log (F_p(1/x))=-\lceil x\rceil\log(1-p)$ is subadditive  as $\lceil x\rceil$ is subadditive and $\log(1-p)<0$. Hence, $F_p \in \mathcal{D}$.
% An $F_p$ distributed risk is a discrete random variable with $\p(X=1/k)=p(1-p)^k$, $k \in \N$,  and $\p(X=\infty)=p$. \com{I think in our theory we do not allow random variables to take the value $\infty$, so some changes are needed here.}
% Let $X \sim F_p$, $Y \sim F_q$ and $(X^c, Y^c)$ be a comonotonic copy of $(X, Y)$. 
% Take $p=0.1$ and $q=0.05$. Let $U$ be a uniform random variable on $(0,1)$. We can check that 
% $$\p\left(X^c+Y^c>\frac{5}{4}\right)
% =\p\left(U>\left(\frac{19}{20}\right)^4\right)=1-\left(\frac{19}{20}\right)^4\ge 0.18$$
% and
% \begin{align*}
% &\p\left(X+Y>\frac{5}{4}\right)\\ &=\p\left(\{X=\infty\}\cup \{Y=\infty\}\right)+\p\left(X=1, Y\in \left\{1,\frac{1}{2}, \frac{1}{3}\right\}\right)+\p\left(X \in \left\{\frac{1}{2}, \frac{1}{3}\right\}, Y=1\right)\\
% &\le 0.17\le 0.18\le \p\left(X^c+Y^c>\frac{5}{4}\right).
% \end{align*}
% Hence, $ X_1^c+ X_2^c \not\le_{\rm st}  X_1+X_2$ and $(X_1, X_2) \notin \mathcal{L}^{\rm U}_2$.
% \end{example}

\begin{example}\label{ex:counter-D}
Let $k=\lfloor t/L\rfloor$ and $$h_{L,c}(t)=\inf_{m \in \mathbb{N}_0}\{mc+(t-mL)_+\}=ck+(t-k L)\wedge c, ~~~t>0,$$
where   $0<c<L$  and $\N_0=\N\cup\{0\}$. For any $m,n \in \N_0$ and $t,s >0$, we have 
$$(m+n) c+(t+s-(m+n)L)_+\le mc+(t-mL)_++ nc+(s-nL)_+.$$
Hence, for $t,s>0$, 
$$h_{L,c}(t+s)=\inf_{\ell \in \mathbb{N}_0}\{\ell c+(t+s-\ell L)_+\}\le mc+(t-mL)_++ nc+(s-nL)_+.$$
As the above inequality holds for any $m,n \in \N_0$,  it is clear that $h_{L,c}$ is subadditive. Moreover,  $h_{L,c}$ is continuous, increasing, and it takes values in $(0, \infty)$.
Consider  $$F_{L,c}(x)=\exp\{-h_{L,c}(1/x)\},~~~x>0.$$  
By the continuity and monotonicity of $h_{L,c}$,  $F_{L,c}$ is increasing and continuous. Moreover,  $\lim_{x\to 0} F_{L,c}(x)=0$ and  $\lim_{x\to  \infty} F_{L,c}(x)=1$. Hence, $F_{L,c}$ is a cumulative distribution function of a continuous random variable in $(0, \infty)$. Since $-\log(F_{L,c}(1/x))=h_{L,c}(x)$ is subadditive, $F_{L,c} \in \mathcal{D}$. 
Next, consider independent risks $X_1\sim F_{1, 1/500}$ and $X_2\sim F_{10,1/5}$.
Direct calculations (omitted here) show that  $\VaR_p(X_1+X_2)< \VaR_p(X_1)+\VaR_p(X_2)$ holds at $p=0.68$. Hence, $ X_1^c+ X_2^c \not\le_{\rm st}  X_1+X_2$ and $(X_1, X_2) \notin \mathcal{L}^{\rm U}_2$. 
 Figure \ref{fig:ex6} plots $\VaR_p(X_1+X_2)$ and $\VaR_p(X_1)+\VaR_p(X_2)$ over $p \in (0, 0.8)$ and $p\in (0.8, 0.999)$ in two subfigures for better visibility. 
\begin{figure}[h]
\centering
\includegraphics[width=15cm]{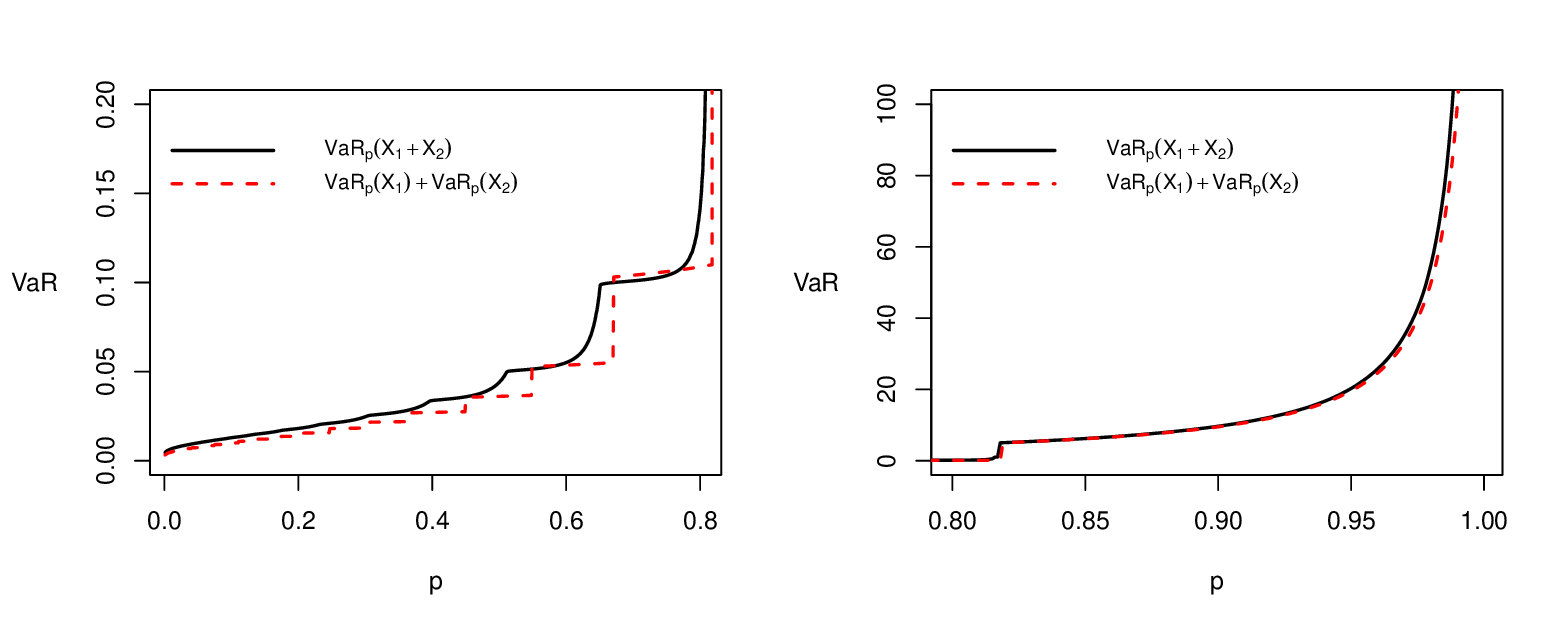}
\caption{$\VaR_p(X_1+X_2)$ and $\VaR_p(X_1)+\VaR_p(X_2)$ for $p\in(0,0.8)$ (left) and  $p\in(0.8,0.999)$ (right), where $X_1\sim F_{1, 1/500}$ and $X_2\sim F_{10,1/5}$ are independent.}    \label{fig:ex6}
\end{figure}
\end{example}

\section{Conclusion}\label{sec:con}

This paper provides the first systematic study of universal VaR superadditivity (UVS) as a property of random vectors, and the related notion of weighted universal VaR superadditivity (WUVS). We establish a necessary condition (Proposition \ref{prop:como}) as well as an equivalent stochastic dominance representation of UVS. Moving beyond the predominantly iid framework in the literature, Theorem \ref{th:convex} provides a useful approach for generalizing the existing results: If a random vector $(X_1,\dots,X_n)$ satisfies (strict) WUVS,  $(f_1(X_1),\dots,f_n(X_n))$ also satisfies (strict) WUVS, provided that $f_1,\dots,f_n$ are (strictly) increasing and convex functions. A similar statement holds for UVS with identical transforms.  Theorem \ref{pro:SAw} further provides another way to extend nonstrict UVS and WUVS to strict  UVS and WUVS by strictly convex transformations, and Theorem \ref{th:closure} obtains closure properties. 
Theorem \ref{thm:cs} shows UVS and WUVS for independent risks from the completely subscalable class of \cite{V25}.
Theorem \ref{pro:SAw} helps build UVS and WUVS for random vectors with risks that fall into three broad classes: the super-Cauchy class of \cite{Muller25}, a subclass of the inverted subadditive class of \cite{ALO25}, and a subclass of the $\mathcal D$ class of \cite{CS25}. {Moreover, we further establish strict WUVS for many distributions, which leads to stronger implications in decision making, whereas the literature mainly focuses on  non-strict inequalities.}
Our results are mainly theoretical in nature, aiming to deeply understand portfolio vectors that satisfy UVS and WUVS, but  they have useful implications  in risk management and decision making, as illustrated by Proposition \ref{prop:rm}.

% Except for \cite{M25}, most existing studies of UVS and WUVS consider random vectors with identically distributed components. 

 %two classes of risks in Theorem \ref{thm:cs} and Corollary \ref{thm:IS} are incomparable. First, since $\mathcal{CS}\subset \mathcal{IS}$, an $\mathcal{IS}$ risk may not be a $\mathcal{CS}$ risk. Second, by Theorem \ref{thm:cs},  \eqref{eq:SD} holds for independent risks $X_1$ and $X_2$ that are Fr\'echet$(1)$ and Pareto$(1)$.  %However, this cannot be concluded from  Theorem \ref{thm:IS} since for an arbitrary risk $X_i$ in Theorem \ref{thm:IS}, it has the representation $X_i\laweq f_i(Y)$ where $Y\in\mathcal{IS}$ and $f_i$ is a strictly increasing and convex function with $f_i(0)=0$. Such representations are impossible for Fr\'echet$(1)$ and Pareto$(1)$ risks which have distinct essential infimums 0 and 1, since if the essential infimum of $Y$ is equal to (resp.~strictly larger than) 0, the essential infimum of $f_i(Y)$ is also equal to (resp.~strictly larger than) 0.

\end{document}